\newtheorem{definition}{Definition}
\newtheorem{lemma}{Lemma}
\newtheorem{theorem}{Theorem}
\begin{document}

\title{Local Differential Privacy is Not Enough: A Sample Reconstruction Attack against Federated Learning with Local Differential Privacy}

\author{Zhichao~You, 
Xuewen~Dong,~\IEEEmembership{Member,~IEEE,} 
Shujun~Li,~\IEEEmembership{Senior~Member,~IEEE,}
Ximeng~Liu,~\IEEEmembership{Senior~Member,~IEEE,}  
Siqi~Ma,~\IEEEmembership{Member,~IEEE,} 
Yulong~Shen,~\IEEEmembership{Member,~IEEE}
\IEEEcompsocitemizethanks{
\IEEEcompsocthanksitem This work was supported in part by the National Key R\&D Program of China (No. 2023YFB3107500), National Natural Science Foundation of China (No. 62220106004, 62232013), Technology Innovation Leading Program of Shaanxi (No. 2022KXJ-093, 2023KXJ-033), and Innovation Fund of Xidian (No. YJSJ24015). (Corresponding author: Xuewen Dong.)
\IEEEcompsocthanksitem Zhichao~You, Xuewen~Dong and  Yulong~Shen are with the School of Computer Science \& Technology, Xidian University, and are with the Shaanxi Key Laboratory of Network and System Security, Xi'an, China (email: zcyou@stu.xidian.edu.cn, xwdong@xidian.edu.cn, ylshen@mail.xidian.edu.cn).
\IEEEcompsocthanksitem  Shujun~Li is with the School of Computing and the Institute of Cyber Security for Society (iCSS), University of Kent, Canterbury, UK (e-mail: S.J.Li@kent.ac.uk).
\IEEEcompsocthanksitem  Ximeng~Liu is with the College of Computer and Data Science, Fuzhou University, Fuzhou, China (e-mail: snbnix@gmail.com).
\IEEEcompsocthanksitem Siqi Ma is with the School of System and Computing, University of New South Wales, Canberra, ACT 2612, Australia (e-mail: siqi.ma@unsw.edu.au).
}
}

\maketitle
\begin{abstract}
Reconstruction attacks against federated learning (FL) aim to reconstruct users' samples through users' uploaded gradients. Local differential privacy (LDP) is regarded as an effective defense against various attacks, including sample reconstruction in FL, where gradients are clipped and perturbed. Existing attacks are ineffective in FL with LDP since clipped and perturbed gradients obliterate most sample information for reconstruction. Besides, existing attacks embed additional sample information into gradients to improve the attack effect and cause gradient expansion, leading to a more severe gradient clipping in FL with LDP. In this paper, we propose a sample reconstruction attack against LDP-based FL with any target models to reconstruct victims' sensitive samples to illustrate that FL with LDP is not flawless. Considering gradient expansion in reconstruction attacks and noise in LDP, the core of the proposed attack is gradient compression and reconstructed sample denoising. For gradient compression, an inference structure based on sample characteristics is presented to reduce redundant gradients against LDP. For reconstructed sample denoising, we artificially introduce zero gradients to observe noise distribution and scale confidence interval to filter the noise. Theoretical proof guarantees the effectiveness of the proposed attack. Evaluations show that the proposed attack is the only attack that reconstructs victims' training samples in LDP-based FL and has little impact on the target model's accuracy. We conclude that LDP-based FL needs further improvements to defend against sample reconstruction attacks effectively.
\end{abstract}

\section{Introduction}
\label{section-introduction}
\IEEEPARstart{F}{ederated} learning (FL) is a distributed learning framework in which users train a given global model through local samples without uploading these samples to the server or the platform~\cite{Kairouz2019open,mu2024feddmc,mu2023fedproc}. The server updates the global model according to gradients or model updates uploaded by users. Since the server trains machine learning models without collecting users' training samples, FL effectively protects users' privacy and reduces communication overhead. FL has been applied to multiple platforms for user privacy preservation, e.g., word prediction of Google Gboard~\cite{GBoard2017}, automatic speech recognition of Apple's Siri~\cite{apple2021}, and credit evaluation of WeBank~\cite{FATE2021}.

\textit{Reconstruction attacks} against gradients uploaded by users cause a severe training sample leakage issue in FL (e.g., \cite{Zhu2020DLG, Yang2023compress, Yin2021see, fowl2022robbing, Boenisch2021When, Wei2020Framework}). Although users do not share private training samples, the information about the samples is implied in the gradients generated by the local samples. Combined with reconstruction attacks, adversaries can reconstruct users' original samples according to the uploaded gradients with high accuracy and quality, which is shown in Fig.~\ref{fig-introduction-robbing-no-dp}.

\begin{figure}
\centering
\subfigure[]{
    \centering
    \label{fig-introduction-robbing-no-dp}
    \includegraphics*[width=0.67in]{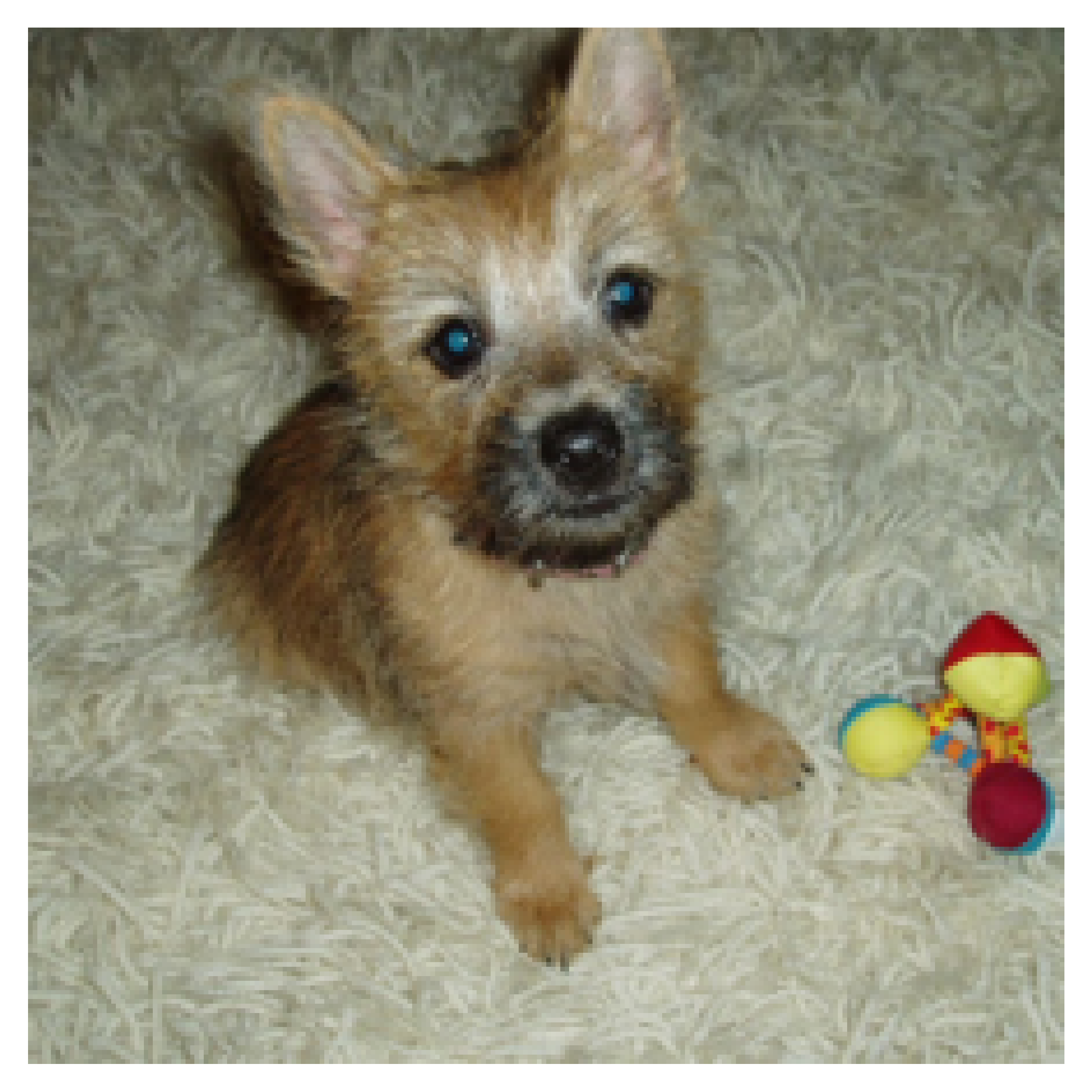}
}\hspace{-0.14in}
\subfigure[]{
    \centering
    \label{fig-introduction-robbing-dp}
    \includegraphics*[width=0.67in]{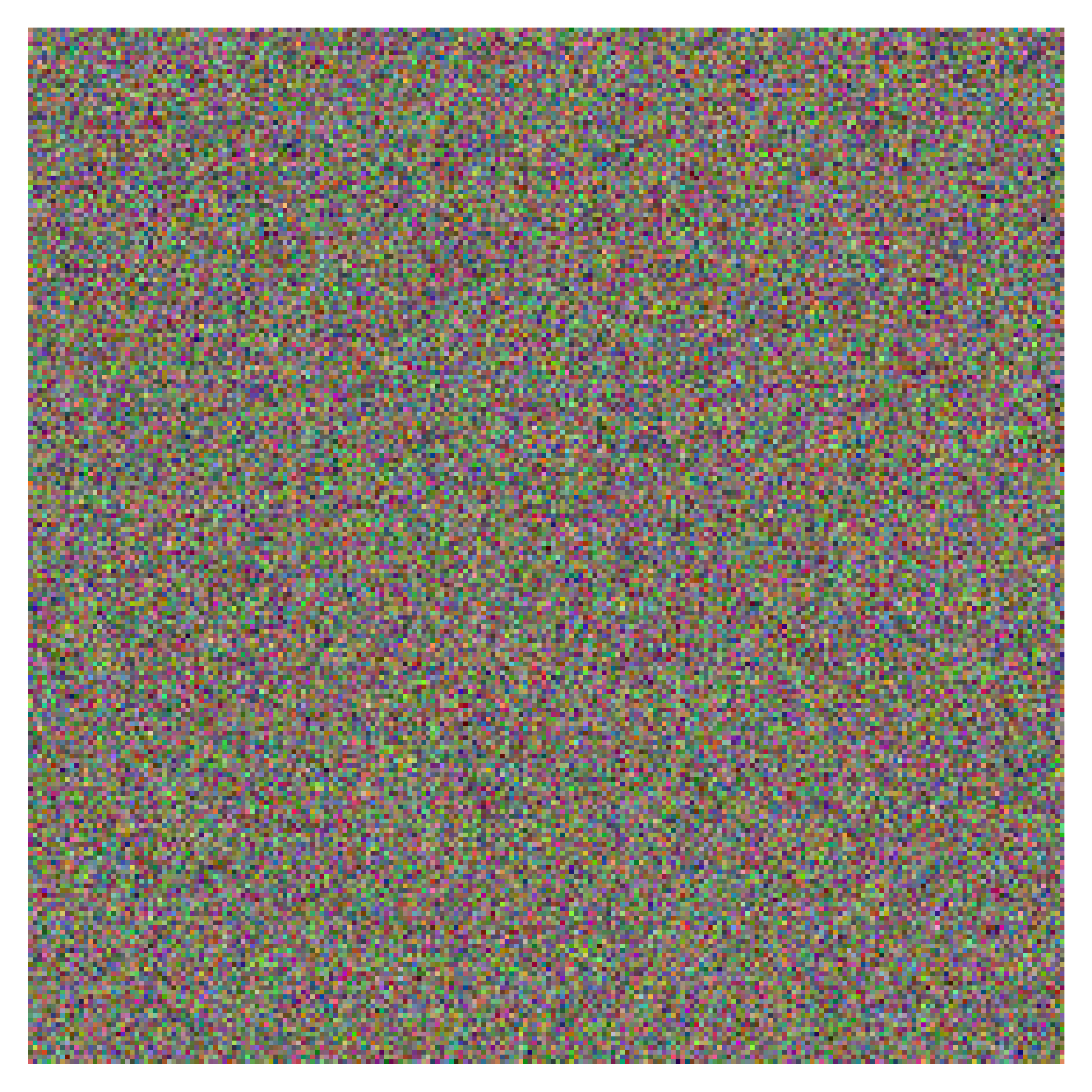}
}\hspace{-0.14in}
    \subfigure[]{
    \centering
    \label{fig-introduction-proposed-dp}
    \includegraphics*[width=0.67in]{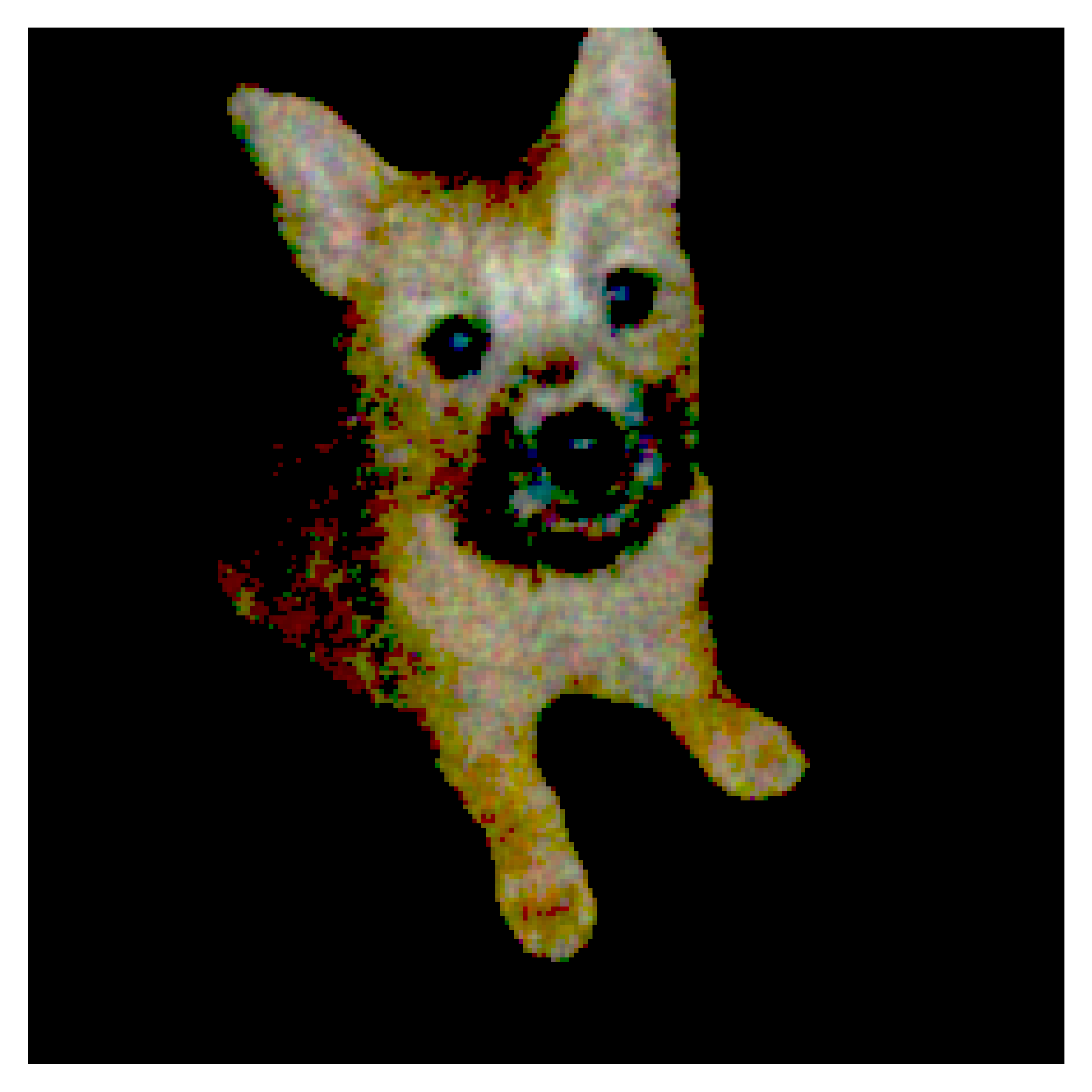}
}\hspace{-0.14in}
\subfigure[]{
    \centering
    \label{fig-introduction-overlapped}
    \includegraphics*[width=0.67in]{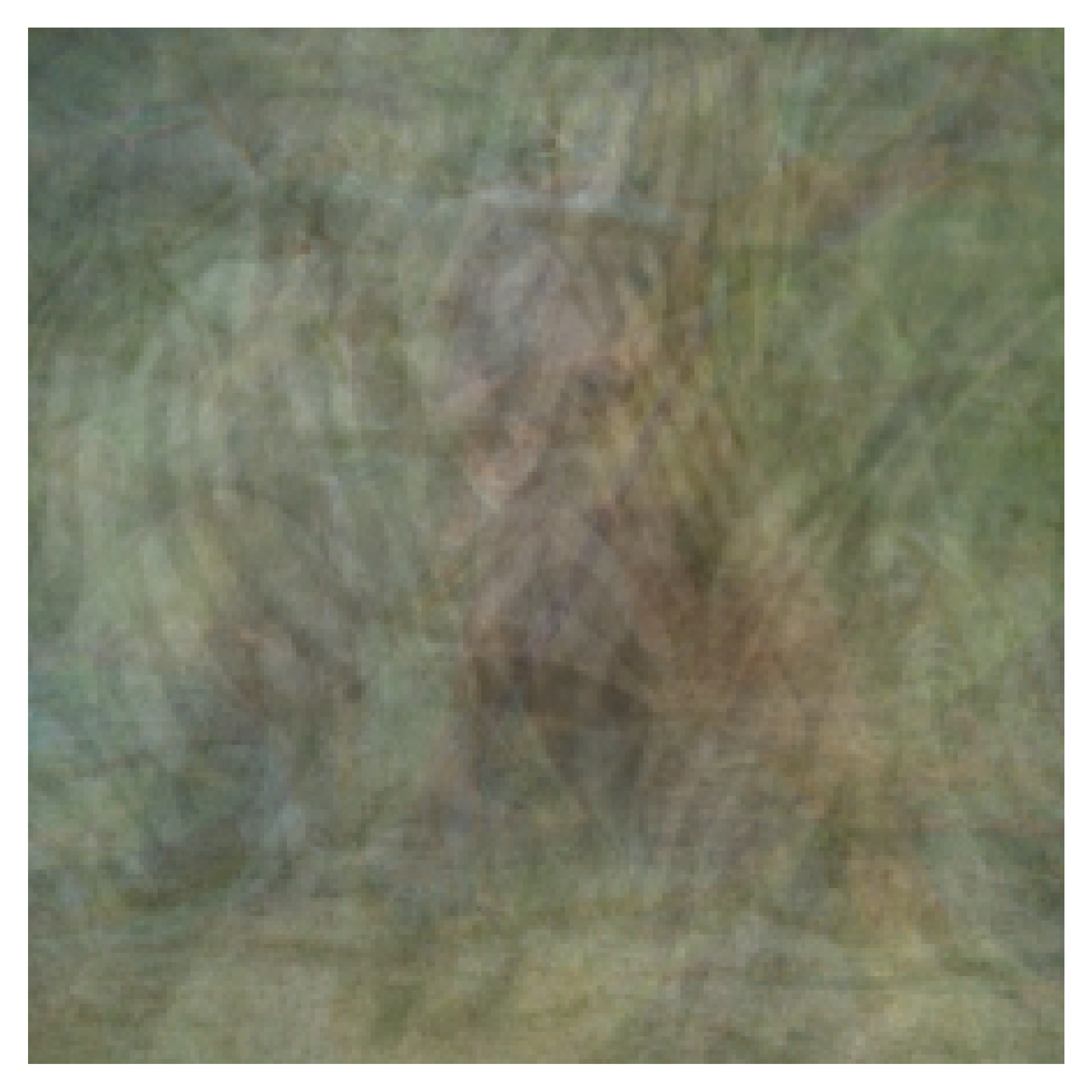}
}\hspace{-0.14in}
\subfigure[]{
    \centering
    \label{fig-introduction-raw-res}
    \includegraphics*[width=0.67in]{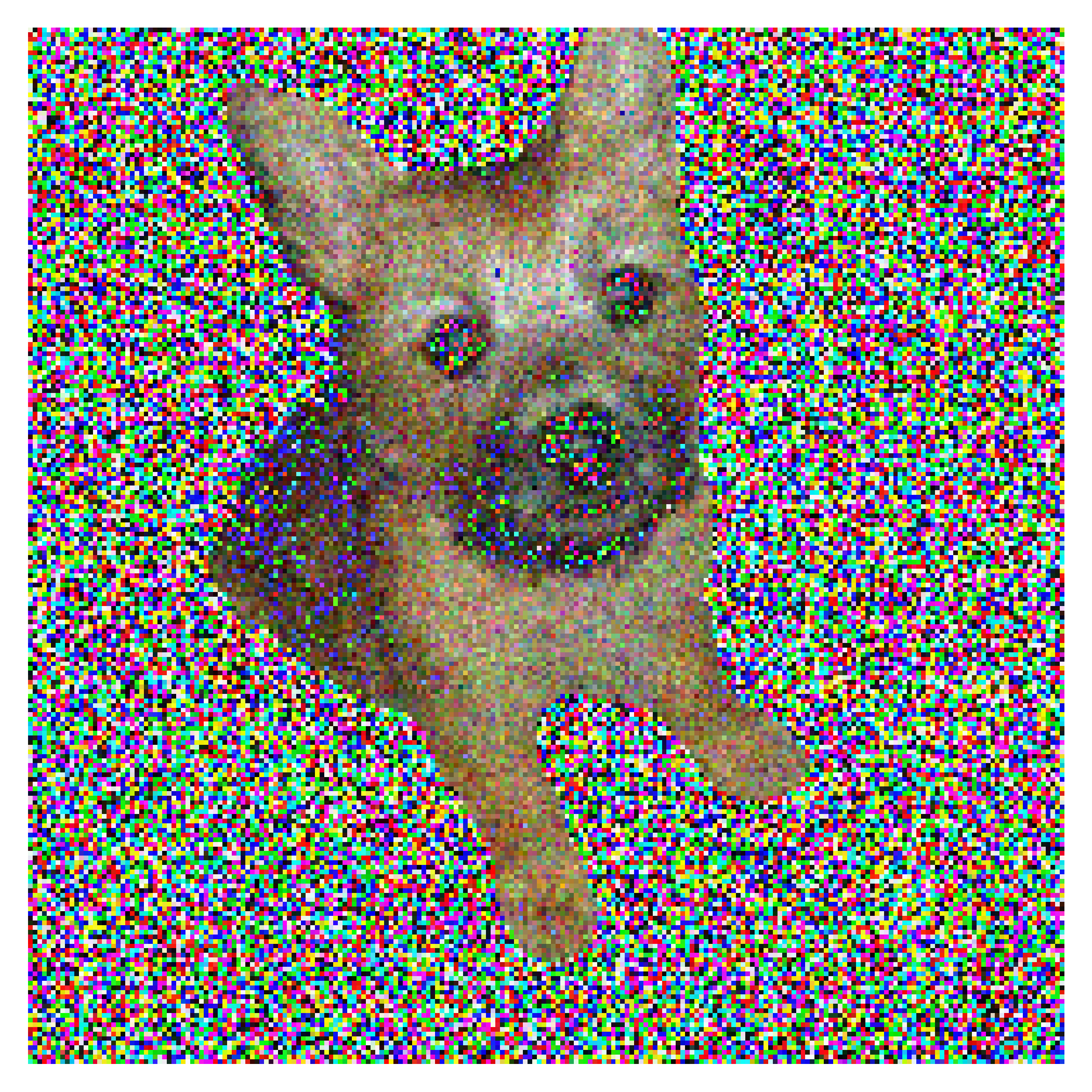}
}
\caption{Samples reconstructed by different attacks when the batch size is 16: (a) reconstructed by an existing attack in FL without LDP; (b) reconstructed by the same attack in FL with LDP; (c) reconstructed by the proposed attack in FL with LDP; (d) a linear combination of generated samples;  (e) generated samples with noise in the background.}
\end{figure}

Existing reconstruction attacks are lacking in feasibility and practicability. On the one hand, these attacks make too strong assumptions for better reconstruction attack performance (e.g., \cite{Pasquini2022secure, yin2020dreaming, Boenisch2023Reconstructing}). For example, a common assumption is that the adversary has multiple statistics of victims' training samples (e.g., \cite{yin2020dreaming, jeon2021gradient, Hatamizadeh2022GradViT}). However, victims have no motivation to calculate and upload the sample features, let alone how the adversary obtains the sample features from victims. On the other hand, existing works are difficult to extend to practical situations. Many attacks are only effective in simple cases with small batch sizes or training samples with low pixels (e.g., \cite{Zhu2020DLG, Yang2023compress, Wei2020Framework, Song2020Analyzing}). In contrast, the batch size and pixels are both larger in practice. Moreover, some attacks restrict the setting of the target model (e.g., \cite{Song2020Analyzing, jin2021cafe, zhu2021rgap}), damaging the target model's performance and making these attacks impracticable.

Besides, FL with local differential privacy (LDP) can protect users' privacy against reconstruction attacks (e.g., \cite{Kang2020FLwithDP, Zhou2022Differentially, Rui2020Personalized, Stevens2022Efficient}), in which gradients are clipped and perturbed. As shown in Fig.~\ref{fig-introduction-robbing-dp}, existing attacks hardly work in FL mechanisms with LDP. These attacks' ineffectiveness comes from two LDP measures: clipping and perturbing gradients. Some attacks modify target models so that the gradient contains a large amount of sample information, resulting in a \textit{gradient expansion}, i.e., the norm of gradients is too large. When the gradients are clipped and perturbed, the sample information in the gradients is seriously damaged, leading to the failure of the reconstruction attack. Existing attacks rely on accurate gradient values to reconstruct training samples. When a small amount of noise is added to these gradients, the reconstructed samples differ from the actual ones. We make a more detailed theoretical analysis of the failure of the existing attacks in Section~\ref{section-removing-redundate-gradients}.

This paper aims to determine \textbf{whether an adversary can feasibly and practically reconstruct the samples of victims in FL with LDP}. Table~\ref{table-prvacy-parameters} provides a comparison of the privacy parameter $\varepsilon$ settings in existing FL mechanisms with LDP, highlighting the deviations caused by noise to the loss function or model accuracy in the worst case, compared to scenarios without DP. The results indicate that when $\varepsilon=10$, the noise introduced by LDP significantly reduces the model accuracy.\footnote{Results come from experiments of the corresponding papers.} For smaller $\varepsilon$, the model performance drops so dramatically that the global model becomes unusable. However, our simulation results demonstrate that a malicious server can still effectively reconstruct the most private samples of victims in FL with LDP through the proposed attack (as shown in Fig.~\ref{fig-introduction-proposed-dp}) when $\varepsilon=10$.
\begin{table}
\caption{Setting of the privacy parameter $\varepsilon$ and deviations of different metrics in existing FL mechanisms with LDP.}
\centering
\label{table-prvacy-parameters}
\begin{tblr}{columns = {c, m, 0.7cm}, hlines={1pt}, hline{1,Z}={2pt}, column{1}={1.5cm}, vline{2,5}}
Reference & \SetCell[c=3]{c} Wei's work~\cite{Kang2020FLwithDP} & & & \SetCell[c=3]{c} Zhou's work~\cite{Zhou2022Differentially} & &\\
Metric & \SetCell[c=3]{c} Loss function & & & \SetCell[c=3]{c} Model accuracy & &\\
$\varepsilon$ & 6 & 8 & 10 & 1 & 5 & 10\\
Deviations & 120\% & 85\% & 57\% & 52\% & 27\% & 24\%\\
\end{tblr}
\end{table}

Theoretical analysis ensures the effectiveness of the proposed attack. Experimental results show that the proposed attack can effectively reconstruct sensitive information of samples from clipped and perturbed gradients protected by LDP. As shown in Fig.~\ref{fig-introduction-proposed-dp}, the reconstructed sample of the proposed attack against protected gradient exposes the primary information of the sample. In contrast, other attacks can only reconstruct meaningless noise. In addition, the proposed attack has almost no impact on the model training of non-target users, which ensures the performance of the target model. The key contributions of this paper are as follows:

\begin{itemize}
\item \textbf{Reconstruction attack against FL with LDP.} The proposed attack is the first reconstruction attack that reconstructs user samples in FL mechanisms with LDP, where user gradients are clipped and perturbed. Additionally, the attack targets only specific victims, having minimal impact on non-target users' training and the performance of global models.

\item \textbf{Attack feasibility and practicability.} The adversary does not interfere with the FL training process but targets standard FL mechanisms without requiring additional abilities or knowledge. Furthermore, the proposed attack is flexible, remaining effective across different target models, training samples, and large batch sizes. Consequently, this attack can be applied to any FL protocol and various learning scenarios while maintaining satisfactory concealment.

\item \textbf{Techniques against LDP.} We propose several techniques to counter noise introduced by LDP for sample reconstruction in FL with LDP. Firstly, we introduce a separation layer to prevent gradient expansion caused by reconstruction attacks, reducing the information loss of clipping gradients in LDP. This approach retains minimal gradient information necessary for sample reconstruction and employs an image segmentation model to extract the main subjects from samples, significantly reducing the gradients' norm. Secondly, we enhance the quality of the reconstructed samples by incorporating an imprinted structure that observes the noise distribution and scales the confidence interval to mitigate background noise. Additionally, we introduce a metric saver that imprints sample metrics onto the gradients, which are then used as an optimization objective to improve sample quality.

\item \textbf{Theoretical and experimental validity proof.} We guarantee the effectiveness of the proposed attack through theoretical analysis. Evaluation results demonstrate that the attack can effectively reconstruct users' privacy information of training samples in FL with LDP. Evaluations indicate that the proposed attack has minimal impact on FL training and target model accuracy. Additionally, we analyze several factors influencing the attack's performance, including privacy parameters and model complexity. Through evaluation results, we identify the critical conditions under which the attack is most effective in FL with LDP. Finally, we discuss the limitations of the proposed attack and suggest possible defenses.
\end{itemize}

\section{Related Work}

\begin{table*}
\centering
\caption{Comparison of existing sample reconstruction attacks in FL.}
\label{table-comparison-reference}
\begin{tblr}{rows={c,m}, hlines, hline{1,Z}={2pt}, column{1}={2.3cm}, column{3,4}={1.9cm}, column{5}={l}}
\SetCell[r=2]{c,m} Attack Type & \SetCell[r=2]{c,m} Mechanism & \SetCell[r=2]{c,m} Effectiveness in FL with LDP & \SetCell[r=2]{c,m} Sample Resolution & \SetCell[r=2]{m} Notes \\
& & & & \\
\SetCell[r=3]{c,m} Optimization-based Attack & DLG~\cite{Zhu2020DLG} & $\times$ & $ 64 \times 64$ &  The first effective reconstruction attack through gradients.\\
& Yin's work~\cite{Yin2021see}  & $\times$ & $ 224 \times 224$ & Considering multiple metrics for optimization.\\
& Pan's work~\cite{Pan2022Exploring} & $\times$ & $224 \times 224$ & Discussion about model complexity and attack effectiveness. \\
\SetCell[r=2]{c,m} Network-based Attack & mGAN-AI~\cite{Song2020Analyzing} & $\times$ & $ 64 \times 64$ & Samples must be identically distributed. \\
& GIAS~\cite{jeon2021gradient} & $\times$ & $ 64 \times 64$ & Applying the GAN to generate image prior. \\
\SetCell[r=2]{c,m} Analysis-based Attack & Fowl's work~\cite{fowl2022robbing} & $\times$ & $224 \times 224$ & Constructing bins to separate training samples.\\
&  Boenisch's work~\cite{Boenisch2021When} & $\times$ & $224 \times 224$ & Considering passive and active attacks in different scenarios.\\
Hybrid Attack & The Proposed Attack & $\surd$ & $224 \times 224$ & The sole effective reconstruction attack in FL with LDP.
\end{tblr}
\end{table*}

Sample reconstruction attacks are mainly divided into the following categories: optimization-based attacks (e.g., \cite{Zhu2020DLG, Yang2023compress, Hatamizadeh2022GradViT, Pan2022Exploring}), network-based attacks (e.g., \cite{Song2020Analyzing, jeon2021gradient, Khosravy2022Model, ganev2023inadequacy}), and analysis-based attacks (e.g., \cite{fowl2022robbing, Boenisch2021When, yuan2021beyond}). Table~\ref{table-comparison-reference} provides a brief comparison of these efforts.

\textbf{Optimization-based Attacks.}
In optimization-based attacks, the adversary regards reconstructed samples as multiple random variables and optimizes the reconstructed samples through gradients of objective functions. DLG~\cite{Zhu2020DLG} first proposed an optimization-based sample reconstruction attack, in which the objective function is the $\ell^2$ norm of the difference between victims' gradients and reconstructed samples-generated gradients. Other existing works proposed various improvements according to different requirements. For example, Yin et al.~\cite{Yin2021see} considered $\ell^2$ norm of the gradient difference, total variation of reconstructed samples, and sample statistics difference in the objective function to reconstruct samples with higher quality. Since the reconstructed samples are considered as multiple variables, the effectiveness of the optimization-based attacks depends heavily on the complexity of the sample (i.e., the number of random variables) and the complexity of the model (i.e., the number of constraints). When the batch size is large, finding the optimal solution (completely reconstructed samples) in optimization-based attacks is difficult.

\textbf{Network-based Attacks.}
The adversary in network-based reconstruction attacks trains a deep learning model to reconstruct victims' training samples. Most network-based attacks apply generative adversarial networks (GAN)~\cite{goodfellow2014GAN} to generate images similar to users' original samples. An attack mGAN-AI~\cite{Song2020Analyzing} modifies the global model with GAN in FL so that victims train the generator in the global model while training the classifier with local samples. The adversary reconstructs victims' training samples through the converged generator. GIAS~\cite{jeon2021gradient} applied GAN to transform the sample variables in the optimization problem into the input variables of the generator with lower dimensions, significantly reducing the search space and facilitating the finding of a better solution. Khosravy et al.~\cite{Khosravy2022Model} used a similar idea to successfully reconstruct victims' facial features in a face recognition system. A significant challenge for network-based attacks is to obtain enough samples with the same distribution as users' samples to train the generator.

\textbf{Analysis-based Attacks.} In analysis-based attacks, the adversary reconstructs training samples through the exact connection between gradients and training samples. Based on the theorem that gradients can calculate the input of FCL, Fowl et al.~\cite{fowl2022robbing} proposed to add a linear FCL to the global model and separate sample gradients so that most of the victims' samples can be reconstructed with high quality. Besides, based on the connection between gradients and training samples, Franziska et al.~\cite{Boenisch2021When} proposed a passive attack for reconstructing a single sample and an active attack for reconstructing multiple samples, where another idea for modifying the global model is given. The analysis-based attacks reconstruct high-quality training samples with low complexity and are still effective in the case of large batch sizes.

Some reconstruction attacks modify the global model structure to enable gradients to contain more sample information and improve the quality of reconstructed samples (e.g., \cite{fowl2022robbing, Boenisch2021When, Song2020Analyzing, Khosravy2022Model, ganev2023inadequacy}). Figure~\ref{fig:attack_type} compares sample reconstruction attacks using original and modified models. By embedding malicious structures in the global model, the gradients generated by modified models and client data carry additional sensitive information. Since the adversary reconstructs victims' samples through their gradient, attacks with modified models can achieve better attack effects than attacks with original models.

\begin{figure}
\centering
\includegraphics[width=0.95\linewidth]{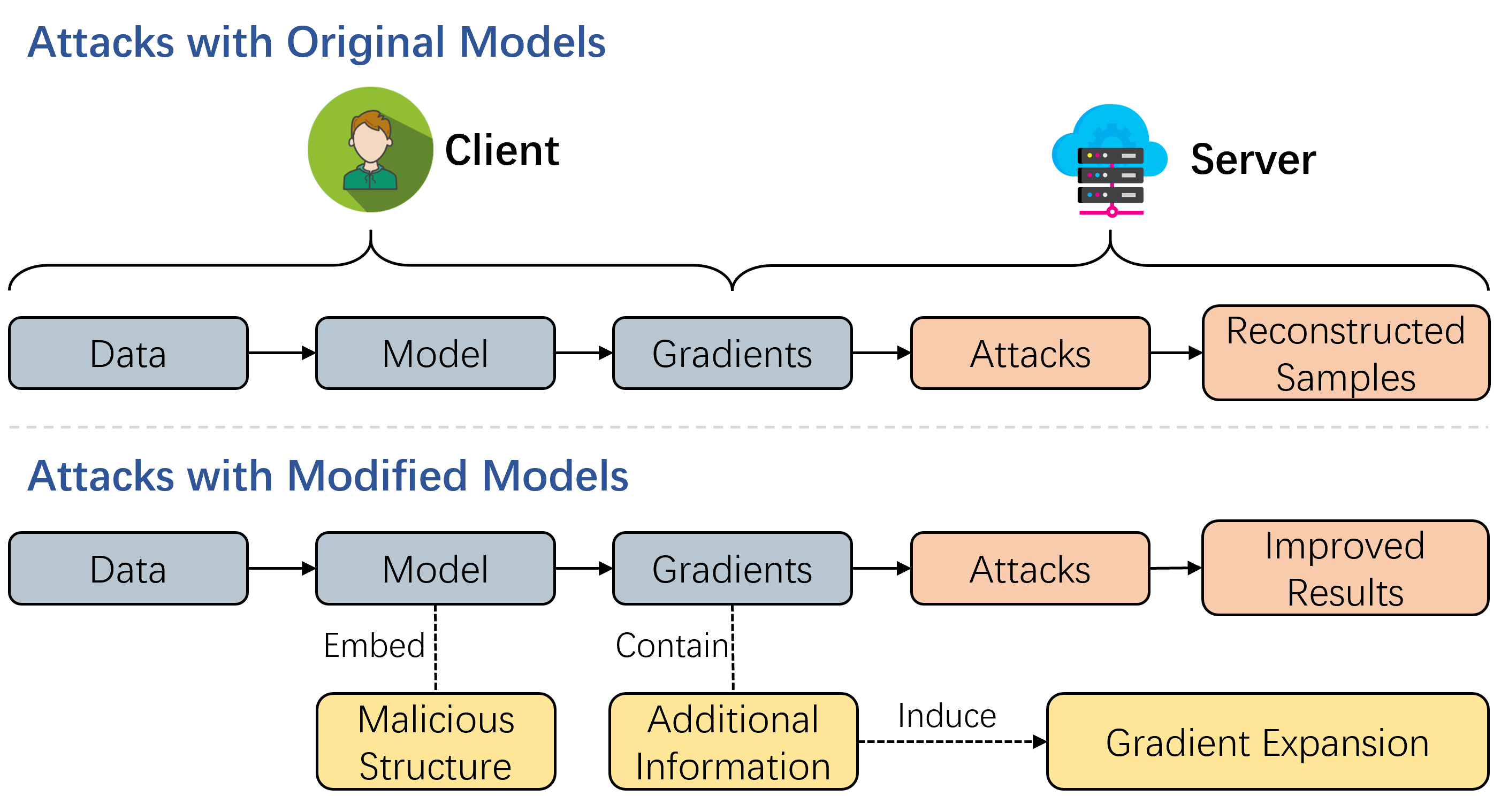}
\caption{Comparison of sample reconstruction attacks using original and modified models.}
\label{fig:attack_type}
\end{figure}

A disadvantage of attacks with modified models is causing gradient expansion, i.e., the norm of gradients increases since gradients carry additional information. Section~\ref{section-removing-redundate-gradients} provides a technical analysis for generating gradient expansion. Gradient expansion presents challenges when reconstructing samples in FL with LDP. Since gradients are clipped according to a fixed norm value and perturbed with noise, gradients with additional information are severely compressed. As a result, the signal-to-noise ratio in the protected gradient drops significantly. No existing works can effectively reconstruct samples from clipped and perturbed gradients. The proposed attack against FL with LDP modifies the global model to obtain effective performance while reducing the impact of gradient expansion by gradient compression presented in Sections~\ref{section-removing-redundate-gradients} and \ref{section-removing-background}.

In addition, existing attacks rely on accurate gradients to reconstruct samples. In FL with LDP, the noise in the perturbed gradient causes existing attacks to reconstruct meaningless noise samples. The proposed attack reduces the impact of perturbed gradients on the attack effect by filtering the noise of gradients and reconstructed samples presented in Sections~\ref{section-noise-filtering} and \ref{section-mectics-optimization}. Dealing with gradient clipping and perturbation in LDP by gradient compression and noise filtering, the proposed attack effectively reconstructs sensitive information in the sample from the protected gradients in FL with LDP.

\section{Preliminaries}

\subsection{Federated Learning}

FL is a distributed learning framework to solve the privacy concern that servers (i.e., model owners) train their models with users' sensitive and private data. In FL, servers design machine Learning target models based on predicted task requirements. Then, they distribute the global model (i.e., a target model) to users who train the global model with their local data to produce intermediate training results (gradients or new parameters). Finally, the server aggregates the intermediate training results and obtains new global model parameters. The above procedures are repeated until global models converge.

We illustrate FL framework details with a typical FL algorithm FedSGD~\cite{McMahan2017FedSGD}, the FL aggregation algorithm considered in this paper. Specifically, a server customizes the global ML model structure $f$, initialized model parameters $\omega$, hyper-parameters, and training process. The server distributes the training materials to users whose local data is $x$ and $y$. User $i$ trains the model with a random batch of its local data $x_i$ and $y_i$ and generates model gradients $\nabla_\omega L(f(x_i;\omega), y_i) = \partial L(f(x_i;\omega), y_i) / \partial \omega$, where $L$ is a loss function. The server aggregates users' uploaded gradients and updates the global model parameters distributed to users for the next round of training. The training process terminates until the global model converges or meets the termination condition.

\subsection{Federated Learning with Local Differential Privacy}
\label{section-preliminary-fl-ldp}

Existing works apply LDP~\cite{dwork2014algorithmic} to protect users' intermediate results during the FL training process (e.g., \cite{Kang2020FLwithDP, Zhou2022Differentially, Rui2020Personalized, Stevens2022Efficient, mcmahan2018learning}). LDP enables users to perturb gradients with noise before uploading gradients to the server.
\begin{definition}[Differential privacy~\cite{dwork2014algorithmic}]
A randomized algorithm $\mathcal{M}$ with domain $\mathbb{N}^{\left| \mathcal{X} \right| }$ is $\left(\varepsilon, \delta \right)$-differentially private if for all $S \subseteq \mathrm{Range}\left(\mathcal{M}\right) $ and for all $x,y \in \mathbb{N}^{\left| \mathcal{X} \right| }$ such that $\| x - y\|_1 \leq 1$:
\begin{equation}
\mathrm{Pr}\left[ \mathcal{M}\left(x\right) \in S \right] \leq \exp\left(\varepsilon\right) \mathrm{Pr}\left[\mathcal{M}\left(y\right) \in S \right] + \delta.
\end{equation}
\end{definition}
Adjusting privacy parameters $\left(\varepsilon, \delta \right)$ can meet various user privacy and data accuracy requirements. In most cases, better user privacy protection would reduce data accuracy.

As in the existing works~\cite{Kang2020FLwithDP, Zhou2022Differentially, Naseri2022local, Rui2020Personalized}, we consider that users add the Gaussian noise $\mathcal{N}\left(0, \sigma^2 \right)$ to gradients for privacy preservation. Given privacy parameters $\left(\varepsilon, \delta \right)$, the Gaussian mechanism is $\left(\varepsilon, \delta \right)$-differentially private when setting a proper scale $\sigma$~\cite{dwork2014algorithmic}. Algorithm~\ref{alg-noising-gradients} presents a general local learning process for users in FL mechanisms with LDP. Users train the global model with a batch of local samples and generate local gradients. Then, generated gradients are clipped according to the clipping bound to avoid the norm of generated gradients being too large so that the perturbation is too small to protect user privacy. Users add the Gaussian noise to clipped gradients with the scale factor decided by themselves according to different privacy requirements and upload the perturbed gradients to the server.

\begin{algorithm}
\caption{Local training with LDP}
\label{alg-noising-gradients}
\KwIn{Global model $f$, model parameters $\omega$, loss function $L$, a batch of samples $\left\{x, y\right\}$, clipping bound $C$, and scale factor $\sigma$}
\KwOut{Clipped and noisy local gradient $\nabla_{\omega} L$}
Generate local gradients $\nabla_{\omega} L$ = $\frac{\partial L\left( f\left(x;\omega \right), y \right)}{\partial \omega}$; \\
Clip gradients $\nabla_{\omega} L = \nabla_\omega L / \max \left(1 , \frac{\| \nabla_\omega L \|}{C} \right)$;\\
Perturb gradients $\nabla_{\omega} L = \nabla_\omega + \mathcal{N}\left(0, \sigma^2\right)$;\\
\Return{$\nabla_{\omega} L$}
\end{algorithm}

\subsection{Sample Reconstruction Attacks in Federated Learning}

\begin{figure}
\centering
\includegraphics[width=3.4in]{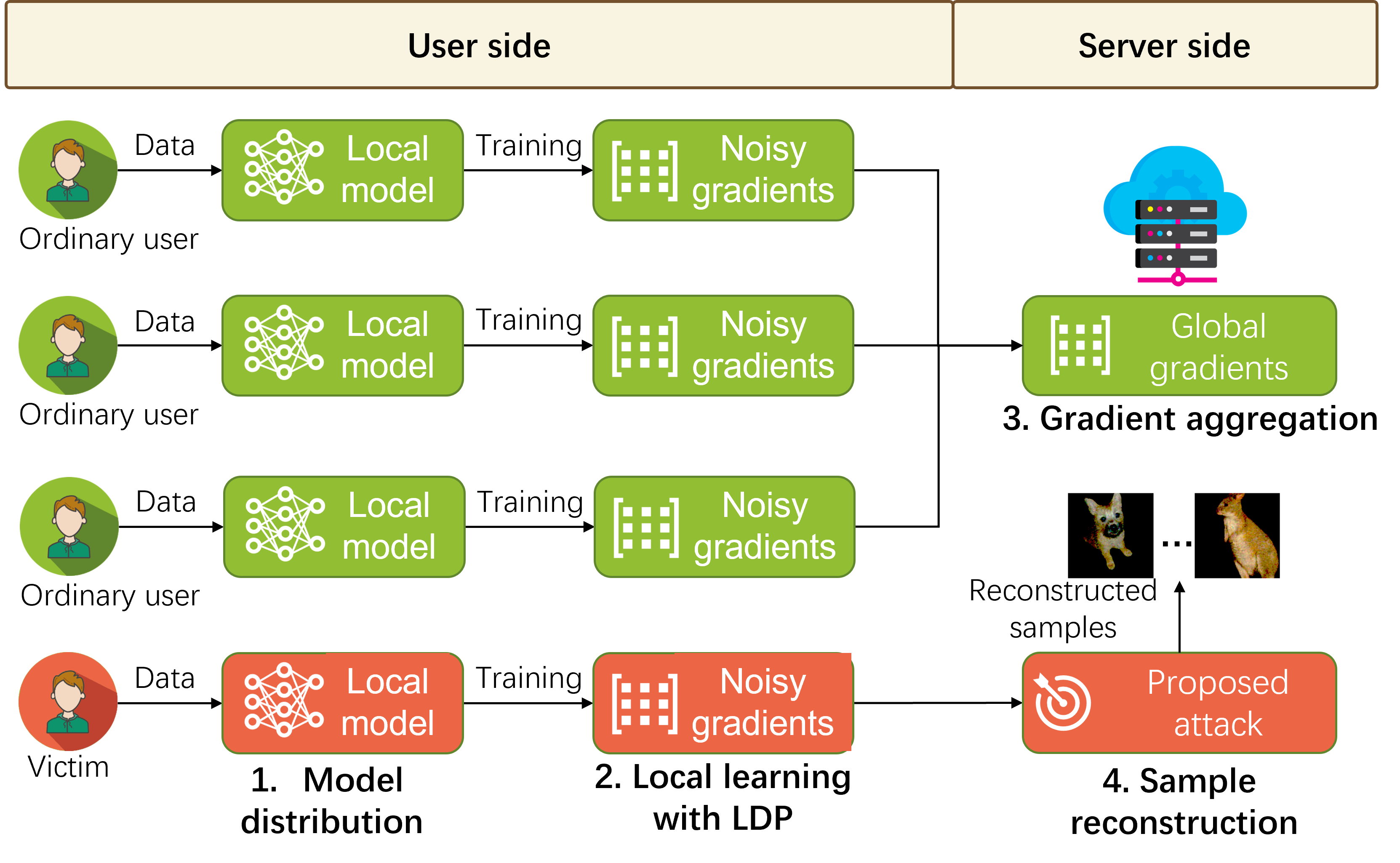}
\caption{The process of the proposed attacks in FL with LDP.}
\label{fig-reconstruction-attack-process}
\end{figure}

Figure~\ref{fig-reconstruction-attack-process} introduces the process by which the server implements the proposed reconstruction attacks and steals user samples through clipped and perturbed gradients uploaded by users with LDP. Firstly, the server distributes the global model structure and parameters to users as local models for local training in model distribution, which is a standard setting of FL. Then, users train local models with their data and generate local gradients. In FL with LDP, users clip and add noise to the generated local gradients according to custom privacy parameters before uploading the gradients to the server for gradient protection. Therefore, the server only gets clipped and perturbed gradients.

For ordinary users, i.e., non-target users, the server aggregates their gradients to update the global model to ensure model performance. For the victim, i.e., the target user, the server reconstructs its samples according to its clipped and perturbed gradients through the proposed attack (given in Section~\ref{section-attack} and Section~\ref{section-all-in-one}). It should be noted that existing reconstruction attacks cannot reconstruct the victim's samples through clipped and perturbed gradients. Experimental results in Section~\ref{section-evaluation} show that even though protecting gradients by LDP, the proposed attack is still effective in reconstructing the victim's samples.

Users upload local gradients instead of original samples to avoid privacy leaks in FL. However, sample reconstruction attacks can reconstruct users' training samples through their uploaded gradients, exposing privacy risks in FL. Meanwhile, the proposed attack shows that FL with LDP cannot entirely defend sample reconstruction attacks.

\section{Threat Model and Primary Attack}

\subsection{Threat Model}

\textbf{Attack motivations.}
The privacy protection FL provides users is that users can perform model training by leaving sensitive samples locally. However, training samples contain precious information, such as medical, financial, or personal information. When model training and performance are not affected, the server may be interested in performing an attack to secretly reconstruct training data to obtain more benefits.

\textbf{Adversary's goal.}
The adversary performs a reconstruction attack to reconstruct users' training samples from gradients and extract the sensitive information in the samples. For example, as shown in Fig.~\ref{fig-introduction-proposed-dp}, we assume that the adversary is more concerned with the subject information in the image, and its background information can be ignored. However, the adversary cannot require the user to pre-process local samples before training to reduce the difficulty of the attack. Meanwhile, whether a reconstruction attack is performed should not be distinguished from the performance of the global model on target prediction tasks for better attack concealment.

\textbf{Adversary's ability.}
The malicious server's only ability is to design a global model for training, which is a common situation in FL (e.g., \cite{Kairouz2019open, McMahan2017FedSGD, Stevens2022Efficient}). Besides, designing a model is also the basic assumption of most existing reconstruction attacks (e.g., \cite{pasquini2022eluding, Boenisch2021When, fowl2022robbing}). We do not limit target models and prediction tasks for better attack practicability, i.e., the designed global model should completely contain any given target models. This paper aims to show that LDP-based FL without global model verification cannot protect users' privacy.

\textbf{Adversary's knowledge.} 
The adversary has the complete structure and parameters since they are in charge of global design. In addition, the adversary receives victims' clipped and noisy gradients with LDP protection, which is discussed as follows. Meanwhile, the server only obtains victims' gradients once, i.e., the attack should be effective with one round of gradients, and the adversary cannot require victims to upload multiple rounds of gradients to reduce attack difficulty.

\textbf{Gradient protection.} We considers gradient protection with LDP based on the setting in Section~\ref{section-preliminary-fl-ldp}, which contains clipping and perturbation of the gradient as shown in Algorithm~\ref{alg-noising-gradients}. Users determine the privacy parameters and clipping bounds in LDP, and the above values are not exposed to the server.

\subsection{Primary Attack}
\label{section-primacy-attack}

The primary attack considers a simple case on a fully connected layer (FCL) where the batch size is one. More complicated cases with larger batch sizes and complex models are discussed in Section~\ref{section-attack}.

\begin{lemma}
The adversary can reconstruct the input of any FCL through its gradients when the batch size is one~\cite{Geiping2020}.
\label{lemma-attack-1-one-sample}
\end{lemma}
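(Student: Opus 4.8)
The plan is to exploit the algebraic structure of a fully connected layer's gradient, which factorizes as a single outer product precisely when the batch size is one. Write the FCL as $z = Wx + b$, where $x \in \mathbb{R}^n$ is the unknown input, $W \in \mathbb{R}^{m \times n}$ is the weight matrix, $b \in \mathbb{R}^m$ is the bias, and $z \in \mathbb{R}^m$ is the pre-activation output fed to the remainder of the network with overall loss $L$. Since the adversary owns the global model, $W$ and $b$ are known, so reconstructing $x$ is the only task.

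First I would apply the chain rule to express the parameter gradients in terms of the upstream signal $\delta := \partial L / \partial z \in \mathbb{R}^m$. Because $z_i = \sum_j W_{ij} x_j + b_i$, we have $\partial z_i / \partial W_{ij} = x_j$ and $\partial z_i / \partial b_i = 1$, whence
\begin{equation}
\frac{\partial L}{\partial W} = \delta\, x^{\top}, \qquad \frac{\partial L}{\partial b} = \delta.
\end{equation}
The crucial observation is that the weight gradient is the rank-one matrix $\delta x^{\top}$, while the bias gradient is exactly $\delta$ itself. Both are components of the uploaded gradient and are therefore directly available to the adversary.

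Next I would invert this relation. Writing $g_W := \partial L / \partial W$ and $g_b := \partial L / \partial b = \delta$, every row of $g_W$ obeys $(g_W)_{i,:} = \delta_i\, x^{\top} = (g_b)_i\, x^{\top}$. Hence, for any index $i$ with $(g_b)_i \neq 0$, the input is recovered in closed form by
\begin{equation}
x^{\top} = \frac{(g_W)_{i,:}}{(g_b)_i}.
\end{equation}
This requires only one scalar division per coordinate, involves no optimization, and yields the exact input rather than an approximation, which is the content of the lemma.

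The hard part is the degeneracy condition: the argument needs at least one coordinate with $\delta_i = (g_b)_i \neq 0$. I would argue this holds generically, since $\delta = \partial L / \partial z$ vanishes identically only when the sample sits at a stationary point of the loss with respect to this layer's output, a measure-zero event for a randomly initialized model and a non-trivial input; when no bias is present, the rank-one factor $\delta x^{\top}$ determines $x$ only up to scale, which is exactly why the bias gradient is used to fix the normalization. I would also emphasize why batch size one is indispensable: for batch size $B > 1$ the weight gradient becomes $\sum_{k=1}^{B} \delta^{(k)} (x^{(k)})^{\top}$, a sum of rank-one terms in which no single row isolates a single input, so the clean division breaks down. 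This both justifies restricting the lemma to the batch-size-one case and motivates the separation machinery developed later in the paper for larger batches.
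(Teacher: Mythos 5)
Your proof is correct and follows essentially the same route as the paper's: apply the chain rule to obtain $\partial L/\partial W = \delta x^{\top}$ and $\partial L/\partial b = \delta$, then recover $x$ by entry-wise division of the weight gradient by the bias gradient. Your treatment is in fact slightly more careful than the paper's, since you make explicit the nondegeneracy condition $(g_b)_i \neq 0$ and explain why the argument breaks for batch size $B > 1$, points the paper defers to Theorem~\ref{theorem-attack-2-multiple-samples}.
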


\begin{proof}
Suppose the parameters of an FCL are $[w, b]^\intercal$, for any single sample $x$, the output of the FCL is $y = w^\intercal x + b$.
According to the chain rule, we have
\begin{equation}
\nabla_{b} L = \frac{\partial L }{\partial b} = \frac{\partial L }{\partial y} \cdot \frac{\partial y}{\partial b} = \frac{\partial L}{ \partial y},
\end{equation}
and
\begin{equation}
\nabla_{w} L = \frac{\partial L }{\partial w} = \frac{\partial L }{\partial y} \cdot \frac{\partial y}{\partial w} = \frac{\partial L}{ \partial y} \cdot x = \nabla_{b} L \cdot x.
\end{equation}
As a result, the adversary can reconstruct the sample $x$ by $ x = \nabla_{w} L \oslash \nabla_{b} L$ where $\oslash$ is the entry-wise division.
\end{proof}

Lemma~\ref{lemma-attack-1-one-sample} provides a straightforward way of reconstructing samples from gradients: embedding an FCL to the model so that the first layer is an FCL and reconstructing samples by Lemma~\ref{lemma-attack-1-one-sample}, which we refer to as the \textit{primary reconstruction attack}. However, the batch size is hardly set to one in a real scenario, and the primary reconstruction attack is limited when the batch size is larger due to the following theorem.

\begin{theorem}
The output of the primary attack is a linear combination of training samples.
\label{theorem-attack-2-multiple-samples}
\end{theorem}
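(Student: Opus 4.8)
The plan is to generalize the single-sample computation of Lemma~\ref{lemma-attack-1-one-sample} to a batch of size $B > 1$ and show that the per-sample reconstruction formula collapses into a weighted sum. First I would write down what the gradient of an FCL looks like when the loss is averaged (or summed) over a batch $\{x_1, \dots, x_B\}$. By linearity of differentiation, the batch gradient is simply the sum of the individual per-sample gradients: with $y_k = w^\intercal x_k + b$, we have $\nabla_b L = \sum_{k=1}^{B} \partial L / \partial y_k$ and $\nabla_w L = \sum_{k=1}^{B} (\partial L / \partial y_k)\, x_k^\intercal$. The key observation is that each term of $\nabla_w L$ carries a scalar factor $\partial L / \partial y_k$ (the per-sample bias gradient) multiplying the sample $x_k$.

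Next I would apply the reconstruction operation from the primary attack, namely the entry-wise quotient $\nabla_w L \oslash \nabla_b L$, and examine what it produces in the batch setting. Setting $c_k := \partial L / \partial y_k$ so that $\nabla_b L = \sum_k c_k$ and $\nabla_w L = \sum_k c_k x_k$, the reconstructed output is
\begin{equation}
\hat{x} = \frac{\sum_{k=1}^{B} c_k x_k}{\sum_{k=1}^{B} c_k} = \sum_{k=1}^{B} \frac{c_k}{\sum_{j=1}^{B} c_j}\, x_k.
\end{equation}
Defining the coefficients $\lambda_k := c_k / \sum_j c_j$, we have $\sum_k \lambda_k = 1$, so $\hat{x} = \sum_k \lambda_k x_k$ is exactly a linear combination (indeed an affine/convex-type combination) of the true training samples rather than any single clean sample. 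This establishes the claim: the primary attack's output is a blend of all samples in the batch, which is why it degrades as $B$ grows, matching the overlapped image in Fig.~\ref{fig-introduction-overlapped}.

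The main subtlety I would need to handle carefully is the entry-wise nature of the division $\oslash$. In the true single-sample case the quotient recovers $x$ exactly because the scalar $c_1$ cancels componentwise. In the batch case, the numerator $\nabla_w L$ is a matrix (or stacked per-output-neuron vectors) while $\nabla_b L$ is a vector of the same output dimension, so I would argue that when the FCL output is one-dimensional (or when one restricts to a single output neuron, as the primary attack implicitly does), the scalar $\sum_k c_k$ is shared across all input coordinates and factors out cleanly, yielding the stated combination. If the output is multi-dimensional, each output neuron $i$ contributes its own weights $\{c_k^{(i)}\}$, and the same argument applies row-by-row; the essential conclusion — that no division isolates an individual $x_k$ — is unchanged. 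The hard part is thus not a calculation but making precise the indexing so that the ``linear combination'' statement is unambiguous; once the coefficients $\lambda_k$ are identified and shown to be sample-dependent weights that generically do not reduce to a single indicator, the theorem follows directly.
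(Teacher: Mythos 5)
Your proposal is correct and follows the same skeleton as the paper's proof (batch gradient equals the average of per-sample gradients by linearity, then apply the entry-wise quotient $\nabla_w L \oslash \nabla_b L$), but it is actually more careful than the paper on one material point. The paper's proof writes $\nabla_b L = \frac{1}{B}\sum_{i=1}^{B} \frac{\partial L}{\partial y}\cdot\frac{\partial y}{\partial b} = \frac{\partial L}{\partial y}$ and then factors $\frac{\partial L}{\partial y}$ out of the weight-gradient sum, silently treating the loss derivative at the FCL output as identical for every sample; this yields the clean conclusion
\begin{equation}
\nabla_w L \oslash \nabla_b L = \frac{1}{B}\sum_{i=1}^{B} x^{(i)},
\end{equation}
i.e., the \emph{uniform} average, which is only valid under that implicit assumption. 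You instead keep the per-sample coefficients $c_k = \partial L/\partial y_k$ distinct and obtain $\hat{x} = \sum_k \lambda_k x_k$ with $\lambda_k = c_k / \sum_j c_j$ and $\sum_k \lambda_k = 1$, which proves the stated theorem (``a linear combination of training samples'') in full generality, since generically the $c_k$ differ across samples. What the paper's route buys is the sharper explicit formula matching Fig.~\ref{fig-introduction-overlapped} (an unweighted sum of images); what your route buys is rigor and generality — the theorem as stated needs no equal-derivative assumption, and your treatment of the entry-wise division (scalar $\sum_k c_k$ per output neuron, applied row-by-row for multi-unit FCLs) correctly pins down the indexing that the paper leaves informal. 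Your argument also still supports the downstream use of the theorem: no entry-wise division can isolate an individual $x_k$, which is precisely why gradient separation is needed.
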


\begin{proof}
Suppose that the batch size is $B$, and training samples are $\left\{ x^{\left(1\right)}, x^{\left(2\right)}, \cdots, x^{\left(B\right)} \right\}$, according to the back-propagation, we have 
\begin{equation}
\nabla_{b} L = \frac{1}{B} \sum_{i=1}^{B} \frac{\partial L }{\partial b} = \frac{1}{B} \sum_{i=1}^{B} \frac{\partial L }{\partial y} \cdot \frac{\partial y}{\partial b} =  \frac{\partial L}{ \partial y},
\end{equation}
and
\begin{equation}
\nabla_{w} L = \frac{1}{B} \sum_{i=1}^{B} \frac{\partial L }{\partial w} = \frac{1}{B} \sum_{i=1}^{B} \frac{\partial L}{ \partial y} \cdot x^{\left( i \right)} = \frac{\nabla_{b} L}{B} \sum_{i=1}^{B} x^{\left( i \right)}.
\end{equation}
As a result,
\begin{equation}
\nabla_{w} L \oslash \nabla_{b} L =  \frac{1}{B} \sum_{i=1}^{B} x^{(i)},
\end{equation}
i.e., a linear combination of all training samples.
\end{proof}

Theorem~\ref{theorem-attack-2-multiple-samples} shows that when the batch size exceeds $1$, the primary attack only gets a linear combination of training samples. When the batch size is large, such a linear combination cannot expose too much information about the training samples. For example, Fig.~\ref{fig-attack-example-multiples-sum} shows the training samples reconstructed by the primary attack when the batch size is 2, 4, 8, and 16. As the batch size increases, it is harder to distinguish sample information from the reconstructed samples.

\begin{figure}
\centering
\includegraphics*[width=0.79in]{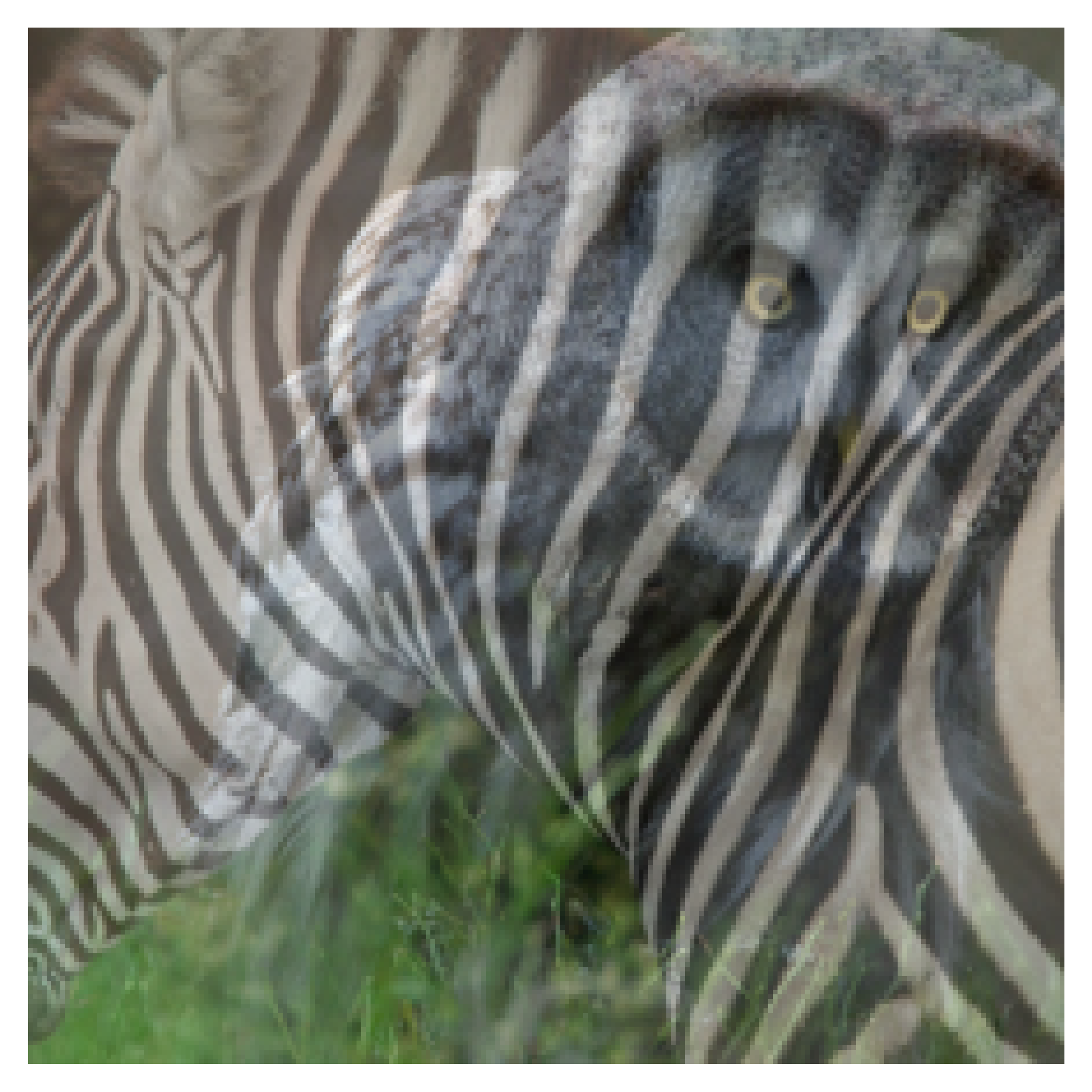}
\includegraphics*[width=0.79in]{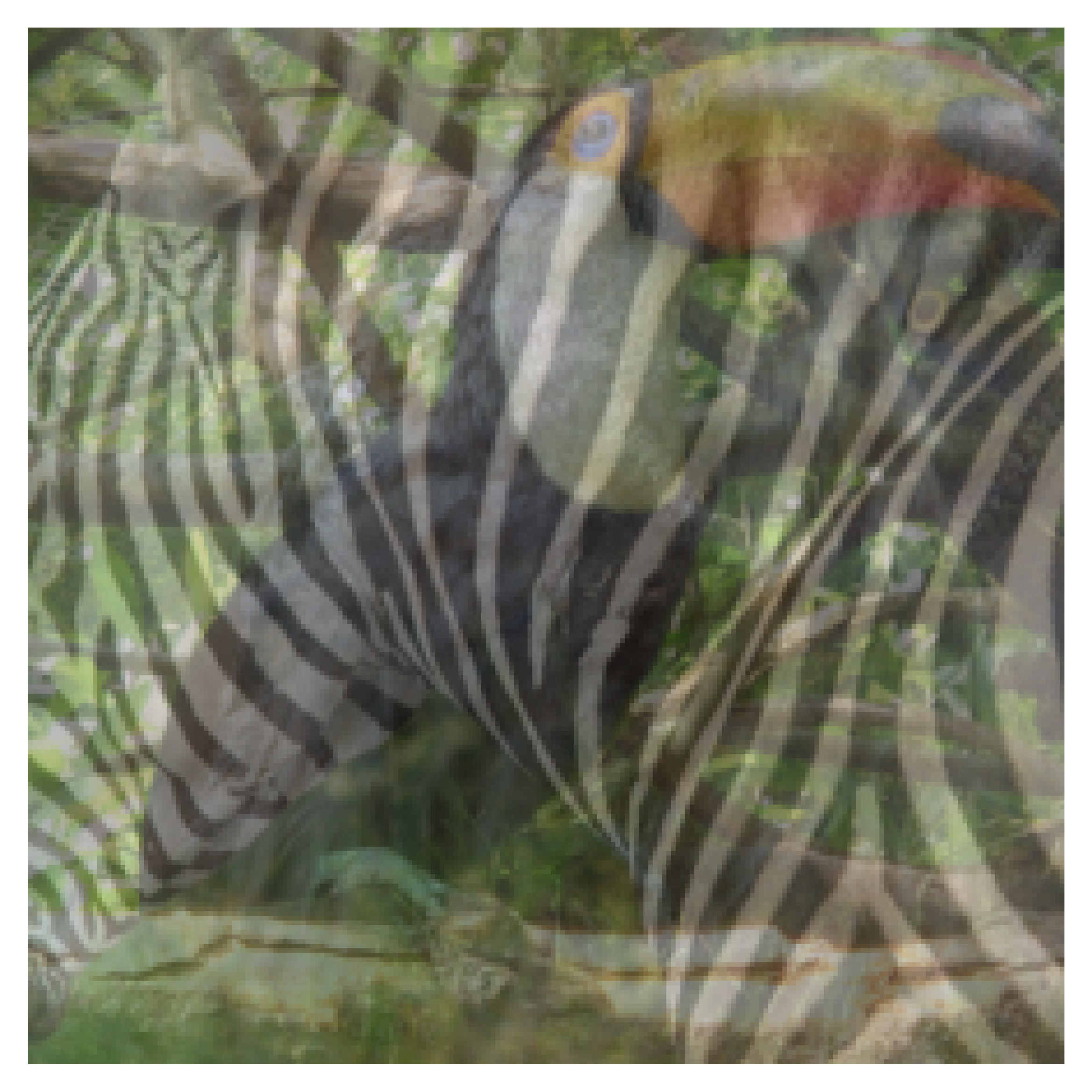}
\includegraphics*[width=0.79in]{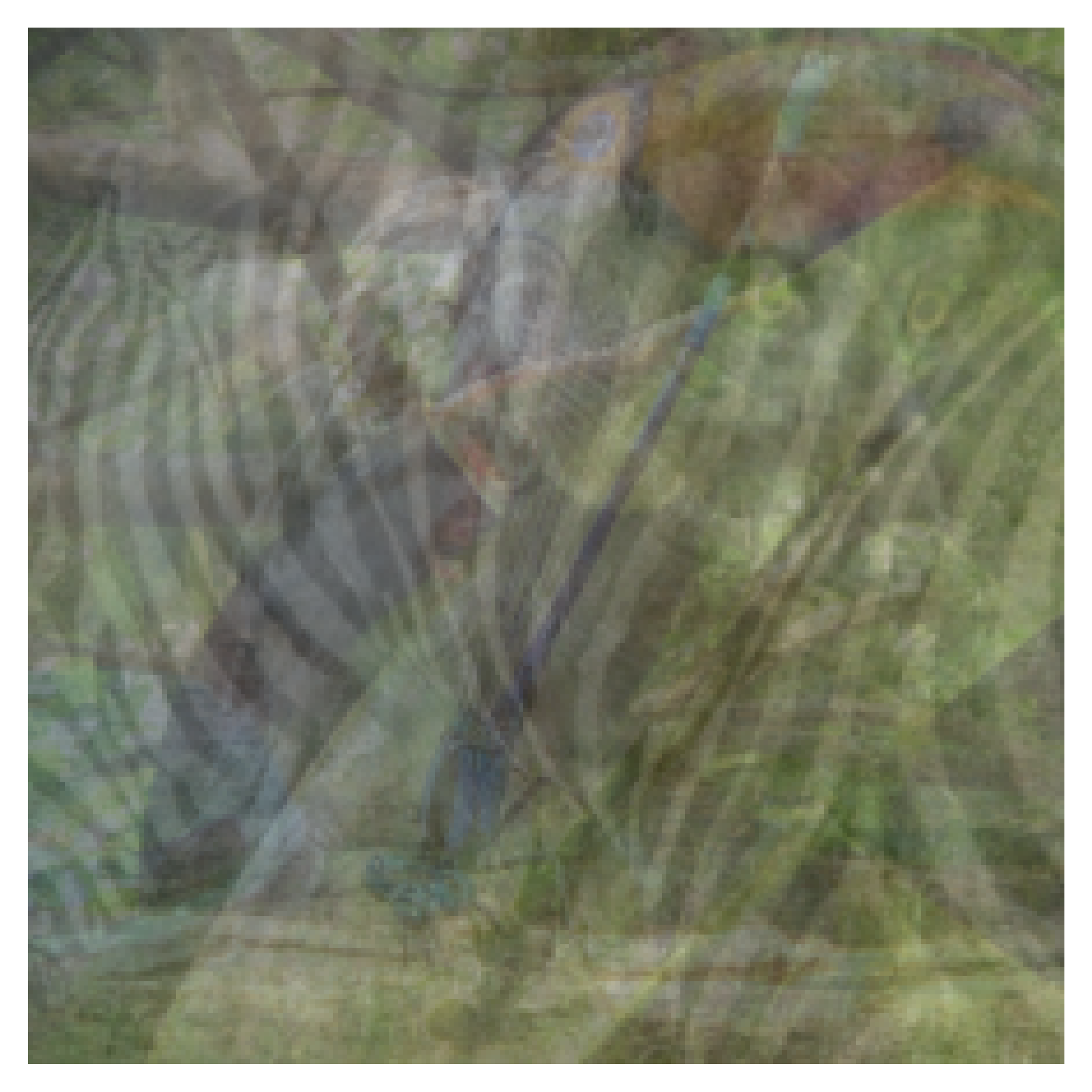}
\includegraphics*[width=0.79in]{sum_of_16_images.png}
\caption{Reconstructed training samples generated by the primary attack when the batch size is 2, 4, 8, and 16.}
\label{fig-attack-example-multiples-sum}
\end{figure}

A straightforward solution is to let each neural unit in the FCL only contain one sample's gradients, and the adversary can reconstruct separated training samples through the primary attack. Specifically, suppose that there are $K$ units in the FCL, and unit $k$ only contains the gradients generated by sample $x^{\left(i\right)}$, we have
\begin{equation}
\nabla_{b_k} L = \frac{1}{B} \cdot \frac{\partial L }{\partial b_k} = \frac{1}{B} \cdot \frac{\partial L }{\partial y} \cdot \frac{\partial y}{\partial b_k} = \frac{1}{B} \cdot \frac{\partial L}{ \partial y_k},
\end{equation}
and 
\begin{equation}
\nabla_{w_k} L = \frac{1}{B} \cdot \frac{\partial L }{\partial w_k} = \frac{1}{B} \cdot \frac{\partial L}{\partial y_k} \cdot x^{\left( i \right)} = \nabla_{b_k} L \cdot x^{\left( i \right)}.
\end{equation}
When unit $k$ only contains the gradients generated by $x^{\left(i\right)}$, the adversary can reconstruct $x^{\left( i \right)}$ by $\nabla_{w_k} L \oslash \nabla_{b_k} L$ as discussed in Lemma~\ref{lemma-attack-1-one-sample}, where $\oslash$ is the entry-wise division. For convenience, we refer to $\nabla_w L$ and $\nabla_b L$ as weight and bias gradients, respectively. According to the primary attack, the adversary can reconstruct any sample with its weight and bias gradients. We refer to the above process of separating gradients of each sample into different units of FCL as \textit{gradient separation}.

Since gradients are clipped and perturbed, the primary attack cannot obtain effective reconstructed results in FL with LDP. The primary attack cannot distinguish the samples’ corresponding gradients (weight and bias gradients) and noise from FCL gradients. It further leads to the failure of gradient separation. Besides, $x^{\left( i \right)}$ is reconstructed by perturbed $\nabla_w L$ and $\nabla_b L$, which leads to the primary attack getting some meaningless noise samples since gradients are noisy.

\section{Reconstruction Attack against FL with LDP}
\label{section-attack}
\subsection{Gradient Separation without Expansion}
\label{section-removing-redundate-gradients}

We first briefly analyze the reason for gradient expansion. As discussed in Section~\ref{section-primacy-attack}, the primary attack reconstructs sample $x^{(i)}$ by $\nabla_{w_k} L \oslash \nabla_{b_k} L$. According to Theorem~\ref{theorem-attack-2-multiple-samples}, the key of the primary attack is to make each neural unit in the FCL only contain one sample's gradient, i.e., gradient separation. Therefore, a larger number of units in the FCL brings better effectiveness of gradient separation. For example, the existing attack~\cite{fowl2022robbing} requires about 1024 units to separate gradients of samples in ImageNet dataset~\cite{ILSVRC15} when the batch size is 16. Since the sample size is $16 \times 3 \times 224 \times 224$, the existing attacks introduce additional $1024 \times 16 \times 3 \times 224 \times 224$ values into the gradients, resulting in a large norm of gradients and causing gradient expansion.

We propose a gradient separation method against FL with LDP, which has the following advantages. Increasing the number of units in FCLs can improve the separation effect of the proposed method but would not increase the norm of gradients, effectively avoiding the gradient expansion. The following are the technical implementation details.

Suppose an FCL contains $K$ units with parameters  $\left[w,b\right]^\intercal$, the weights of all units in the FCL are set to equal, i.e., $w_1 = w_2 = \cdots = w_K$ (its value is a hyper-parameter given in Section~\ref{section-evaluation}). The bias parameters in the FCL are set according to the quantile function of a random variable following the Laplace distribution. In other words, given $X \sim \textrm{Laplace}(\mu,s)$, $b_j = - F^{-1}_X( j / K)$ for bias of unit $j$, where $s > 0$ is a scale factor and $F_X(\cdot)$ is the cumulative distribution function (CDF) of $X$. The output of the FCL $y_{\textrm{min}}$ is the minimum positive value of $\left( w^\intercal x + b\right)$. We refer to an FCL with the above setting as a \textit{separation layer} and introduce its variations to achieve the proposed attack in Section~\ref{section-all-in-one}.

The separation layer achieves the following property for gradient separation without gradient expansion. A \textit{reverse index} $i_0 \in \left\{1, 2, \dots, K\right\}$ of any sample $x^{\left(i\right)}$ is an index of the unit in the separation layer such that $w_{i_0}^\intercal x^{\left(i\right)} + b_{i_0} > 0 $ and $w_{i_0+1}^\intercal x^{\left(i\right)} + b_{i_0+1} \leq 0 $. The $i_0$-th unit in the separation layer is a \textit{reverse unit} of sample $x^{\left(i\right)}$.

\begin{theorem}
Each sample's corresponding gradients in the separation layer only exist in its reverse units.
\label{theorem-one-non-zero-gradients}
\end{theorem}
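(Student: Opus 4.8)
The plan is to make the forward pass of the separation layer explicit, show that the minimum-positive aggregation selects exactly the reverse unit, and then apply the chain rule as in Lemma~\ref{lemma-attack-1-one-sample} to localize the nonzero gradients. First I would exploit the equal-weight assumption $w_1 = \cdots = w_K = w$. For a fixed sample $x^{(i)}$, write $a = w^\intercal x^{(i)}$ so that the pre-activation of unit $j$ is $z_j = w_j^\intercal x^{(i)} + b_j = a + b_j$, with only the bias distinguishing the units. Because $b_j = -F_X^{-1}(j/K)$ and the Laplace quantile function $F_X^{-1}$ is strictly increasing on $(0,1)$, the biases satisfy $b_1 > b_2 > \cdots > b_K$, hence $z_1 > z_2 > \cdots > z_K$ is a strictly decreasing sequence. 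This strict monotonicity is precisely what guarantees there are no ties among the pre-activations.

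Next I would identify the unit that $y_{\textrm{min}}$ selects. Since $z_j$ is strictly decreasing in $j$, the positive pre-activations form a prefix $z_1, \ldots, z_{i_0} > 0$ while the remainder satisfies $z_{i_0+1}, \ldots, z_K \leq 0$, where $i_0$ is exactly the reverse index characterized by $z_{i_0} > 0$ and $z_{i_0+1} \leq 0$. The smallest positive pre-activation is therefore attained uniquely at the reverse unit, so $y_{\textrm{min}} = z_{i_0} = w_{i_0}^\intercal x^{(i)} + b_{i_0}$; that is, the layer's output depends on $x^{(i)}$ only through the parameters of its reverse unit. With the selected unit pinned down, the gradient computation mirrors Lemma~\ref{lemma-attack-1-one-sample}: for the reverse unit the chain rule gives $\nabla_{b_{i_0}} L = \frac{\partial L}{\partial y_{\textrm{min}}}$ and $\nabla_{w_{i_0}} L = \frac{\partial L}{\partial y_{\textrm{min}}} \cdot x^{(i)} = \nabla_{b_{i_0}} L \cdot x^{(i)}$, which are nonzero in general, whereas for every other unit $k \neq i_0$ the output $y_{\textrm{min}}$ does not depend on $z_k$, so $\partial y_{\textrm{min}} / \partial z_k = 0$ and both $\nabla_{w_k} L$ and $\nabla_{b_k} L$ vanish. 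This is the claim.

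The step I expect to be the main obstacle is justifying that the minimum-positive operation behaves as a clean selection, i.e., that the partials through the non-selected units really are zero rather than ill-defined. The subtlety is that $y_{\textrm{min}}$ is only piecewise smooth: it fails to be differentiable on the measure-zero set where some $z_k = 0$ or where two pre-activations coincide. I would argue that strict monotonicity rules out coincidences, and that for $k \neq i_0$ a sufficiently small perturbation of $(w_k, b_k)$ keeps $z_{i_0}$ the unique smallest positive pre-activation --- units with $k < i_0$ stay strictly above $z_{i_0}$, and units with $k > i_0$ stay strictly non-positive --- so the argmin-positive, and therefore $y_{\textrm{min}}$, is locally constant in those parameters. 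This makes the derivative through each non-selected unit genuinely zero and validates the localization. I would also treat the boundary configurations ($a$ so large that all $z_j > 0$, or so small that none is positive) separately, as measure-zero or as excluded by the standing assumption that a reverse unit exists.

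Finally, for batch size $B > 1$ I would remark that, by the linearity of back-propagation used already in Theorem~\ref{theorem-attack-2-multiple-samples}, each sample contributes to the layer's gradient independently; a unit's accumulated gradient is then the sum of the contributions of exactly those samples whose reverse unit it is, so the per-sample localization extends directly to the full batch and underpins the subsequent gradient-separation argument.
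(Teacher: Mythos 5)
Your proof is correct and follows essentially the same route as the paper's: equal weights plus strictly decreasing Laplace-quantile biases give strictly decreasing pre-activations $z_j = w^\intercal x^{(i)} + b_j$, the minimum-positive output is therefore pinned to the reverse unit $i_0$, and the chain rule localizes the nonzero gradients exactly as in Lemma~\ref{lemma-attack-1-one-sample}. Your extra care --- arguing that $y_{\mathrm{min}}$ is locally constant in the parameters of non-selected units (so the vanishing partials are genuine rather than ill-defined), handling the boundary configurations, and extending to batch size $B>1$ by linearity of back-propagation --- is a rigor refinement of steps the paper asserts or leaves implicit, not a different argument.
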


\begin{proof}
Bias of unit $j$ in the FCL is set to $b_j = - F^{-1}_X( j / K)$ where $X \sim \textrm{Laplace}(0,s)$ and $F_X(\cdot)$ is the CDF of $X$, and we have $b_1 > b_2 > \cdots > b_K$.
Since $w_1 = w_2 = \cdots = w_K$, for any sample $x^{\left(i\right)}$,
\begin{equation}
w_1^\intercal x^{\left(i\right)} + b_1 > w_2^\intercal x^{\left(i\right)} + b_2 > \cdots w_K^\intercal x^{\left(i\right)} + b_K.
\end{equation}
Combining the reverse index $i_0$, we have 
\begin{equation}
\left\{ \begin{matrix}
    w_1^\intercal x^{\left(i\right)} + b_1 > w_2^\intercal x^{\left(i\right)} + b_2 > \cdots > w_{i_0}^\intercal x^{\left(i\right)} + b_{i_0} > 0; \\ \\
    w_K^\intercal x^{\left(i\right)} + b_K < \cdots < w_{i_0+1}^\intercal x^{\left(i\right)} + b_{i_0+1} \leq 0.
\end{matrix}\right.
\end{equation}
Suppose $y^{\left(i\right)}_{\mathrm{min}}$ is the minimal non-zero positive value of $w^\intercal x^{\left(i\right)} + b$, $y^{\left(i\right)}_{\mathrm{min}} = w_{i_0}^\intercal x^{\left(i\right)} + b_{i_0}$. In other words, for any sample $x^{\left( i \right)}$, $y^{\left(i\right)}_{\mathrm{min}}$ depends on the $i_0$-th unit in the FCL, i.e., $w_{i_0}^\intercal x + b_{i_0}$. When only considering sample $x^{\left(i\right)}$, we have
\begin{equation}
\left\{ \begin{matrix}
    \nabla_{w_{i_0}}L = \frac{\partial L}{\partial w_{i_0}} = \frac{\partial L}{\partial y^{\left(i\right)}_{\mathrm{min}}} \cdot \frac{\partial y^{\left(i\right)}_{\mathrm{min}}}{\partial w_{i_0}} = \frac{\partial L}{\partial y^{\left(i\right)}_{\mathrm{min}}} \cdot x^{\left(i\right)}; \\
    \; \\
    \nabla_{b_{i_0}}L = \frac{\partial L}{\partial b_{i_0}} = \frac{\partial L}{\partial y^{\left(i\right)}_{\mathrm{min}}} \cdot \frac{\partial y^{\left(i\right)}_{\mathrm{min}}}{\partial b_{i_0}} = \frac{\partial L}{\partial y^{\left(i\right)}_{\mathrm{min}}}.
\end{matrix}\right.
\end{equation}
On the other hand, for any unit $k^\prime \neq i_0$ in the FCL,
\begin{equation}
\left\{ \begin{matrix}
    \nabla_{w_{k^\prime}}L = \frac{\partial L}{\partial w_{k^\prime}} = \frac{\partial L}{\partial y^{\left(i\right)}_{\mathrm{min}}} \cdot \frac{\partial y^{\left(i\right)}_{\mathrm{min}}}{\partial w_{k^\prime}} = 0; \\
    \; \\
    \nabla_{b_{k^\prime}}L = \frac{\partial L}{\partial b_{k^\prime}} = \frac{\partial L}{\partial y^{\left(i\right)}_{\mathrm{min}}} \cdot \frac{\partial y^{\left(i\right)}_{\mathrm{min}}}{\partial b_{k^\prime}} = 0.
\end{matrix}\right.
\end{equation}
The above equations lead to Theorem~\ref{theorem-one-non-zero-gradients}.
\end{proof}

More intuitively, Theorem~\ref{theorem-one-non-zero-gradients} proves that sample $x^{\left(i\right)}$ only generates gradients at the $i_0$-th unit in the separation layer, which achieves gradient separation. The adversary can reconstruct samples from gradients of reverse units in the separation layer when reverse units only contain one sample's corresponding gradients. Specifically, for any sample $x^{\left(i\right)}$, suppose that unit $i_0$ only contains the gradients of sample $x^{\left(i\right)}$, the adversary can reconstruct $x^{\left(i\right)}$ by $\nabla_{w_{i_0}}L \oslash \nabla_{b_{i_0}}L$.

Meanwhile, according to the following theorem, the number of units with non-zero gradients in the separation layer is not more than the batch size, preventing gradient expansion. Taking an example on ImageNet dataset~\cite{ILSVRC15} where the batch size is 16 and samples are images with $3 \times 224 \times 224$ pixels, the number of non-zero gradients in the separation layer is not greater than $16 \times 3 \times 224 \times 224$. However, the number of non-zero gradients in the existing method~\cite{fowl2022robbing} is $1024 \times 3 \times 224 \times 224$.

\begin{theorem}
Even multiple increases in the number of units in the separation layer reduce the probability that each unit contains multiple sample gradients, and the number of units with non-zero gradients is not greater than the batch size in the separation layer.
\label{theorem-two-performance-related}
\end{theorem}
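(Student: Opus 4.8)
The plan is to treat the theorem as two separate assertions: the cardinality bound (\emph{the number of units with non-zero gradients is at most the batch size}) I would obtain as an almost immediate corollary of Theorem~\ref{theorem-one-non-zero-gradients}, while the monotonicity claim (\emph{increasing the number of units lowers the chance that a unit holds several samples' gradients}) I would establish by a balls-into-bins argument. The first step is to collapse the geometry to one dimension. Because every unit shares the same weight vector $w$, the only quantity that decides a sample's reverse index is the scalar projection $h^{(i)} = w^\intercal x^{(i)}$, and the reverse index $i_0$ of $x^{(i)}$ is the unique $j$ with $-b_j < h^{(i)} \le -b_{j+1}$. Substituting $b_j = -F_X^{-1}(j/K)$ converts these thresholds into the $j/K$-quantiles of the Laplace law, so $i_0$ is just the index of the quantile interval $\bigl(F_X^{-1}(i_0/K),\,F_X^{-1}((i_0+1)/K)\bigr]$ that contains $h^{(i)}$. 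This reframes the whole question as: after throwing the $B$ projections into $K$ consecutive quantile bins, how often does a single bin receive two or more of them?

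Next I would model the projections $h^{(i)}$ as samples from the same Laplace law $X$ whose quantiles define the bins --- this is precisely the modelling choice that the bias placement $b_j = -F_X^{-1}(j/K)$ is engineered to exploit. Under this assumption each bin carries probability mass $F_X\bigl(F_X^{-1}((j+1)/K)\bigr) - F_X\bigl(F_X^{-1}(j/K)\bigr) = (j+1)/K - j/K = 1/K$, so the experiment is exactly $B$ balls thrown into $K$ equiprobable bins. The probability that a fixed unit collects the gradients of two or more samples is then
\[
p(K) \;=\; 1 - \Bigl(1-\tfrac{1}{K}\Bigr)^{B} - \tfrac{B}{K}\Bigl(1-\tfrac{1}{K}\Bigr)^{B-1},
\]
and I would show $p(K)$ is strictly decreasing in $K$ for fixed $B$; a Taylor expansion gives $p(K) = \binom{B}{2}/K^2 + O(1/K^3)\to 0$, and equivalently the expected number of colliding sample pairs is $\binom{B}{2}/K$, which again decreases in $K$. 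Thus each multiplicative increase in the number of units $K$ provably lowers the multi-occupancy probability, which is the first assertion.

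For the cardinality bound I would invoke Theorem~\ref{theorem-one-non-zero-gradients}: each sample produces non-zero gradients \emph{only} at its single reverse unit. Hence the map sending each of the $B$ samples to its reverse index is well defined, and the set of units carrying non-zero gradients is exactly the image of this map; an image of a domain of size $B$ has cardinality at most $B$. This half requires no distributional assumption and is purely combinatorial.

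The hard part will be justifying the distributional assumption underlying the monotonicity claim, namely that $w^\intercal x^{(i)}$ is (approximately) Laplace-distributed so that the quantile bins are equiprobable. If this holds exactly the balls-into-bins computation is clean; if it holds only approximately, I would instead argue that the finest bin probability $\max_j \Pr[h^{(i)} \in \text{bin } j]$ shrinks as the quantile spacing refines with growing $K$, which still forces both the per-unit collision probability and the expected number of colliding pairs to decrease, preserving the conclusion while weakening the idealised uniform-bin estimate.
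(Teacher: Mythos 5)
Your decomposition is sound and your cardinality half coincides with the paper's: the paper likewise concludes from Theorem~\ref{theorem-one-non-zero-gradients} that the units carrying non-zero gradients are exactly the reverse units $\left\{1_0, 2_0, \cdots, B_0\right\}$, a set of size at most $B$. For the monotonicity half you take a genuinely different route. The paper never assumes the projections $w^\intercal x^{(i)}$ follow the Laplace law; instead it computes the thresholds explicitly from $b_j = -F_X^{-1}(j/K)$ and shows $\Pr\{i_0 = k, n=K\} = \Pr\{i_0 = 2k, n=2K\} + \Pr\{i_0 = 2k+1, n=2K\}$, i.e., doubling $K$ splits each \emph{reverse interval} into two sub-intervals. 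Consequently two samples separated at resolution $K$ remain separated at $2K$, while two samples colliding at $K$ collide at $2K$ in only two of the four sub-interval combinations --- a distribution-free refinement (event-nesting) argument that holds for arbitrary user data. Your balls-into-bins computation instead posits that each quantile bin carries mass exactly $1/K$, which requires $w^\intercal x^{(i)} \sim \mathrm{Laplace}(0,s)$; since user samples are arbitrary, this idealization is not guaranteed by the construction (the Laplace quantile placement spreads the thresholds but does not make the bins equiprobable), and you correctly flag this as the weak point. What your version buys, conditional on that assumption, is quantitative strength the paper does not provide: an explicit per-unit multi-occupancy probability with rate $\binom{B}{2}/K^2 + O(1/K^3)$ and expected colliding pairs $\binom{B}{2}/K$, giving strict monotonicity and practical guidance for choosing $K$. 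Your fallback, meanwhile, is essentially the paper's own mechanism: under any distribution, refining the partition only splits bins, so collision events at the finer resolution are contained in those at the coarser one (equivalently, $\sum_j p_j^2$ is non-increasing under refinement), which recovers the theorem without the uniform-bin idealization --- though, like the paper's ``two of four combinations'' step, it yields strict decrease only when both halves of a split bin receive positive mass. In short: correct, with the same argument for the cardinality bound and a complementary, more quantitative but more strongly assumption-laden argument for the monotonicity claim, whose patched version converges to the paper's proof.
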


\begin{proof}
We first analyze the probability that the reverse index of any sample $x^{\left(i\right)}$ is $k$, i.e., $\mathrm{Pr}\left\{i_0 = k\right\}$. Recall that the bias parameters of the FCL in the inference structure are set by the quantile function of a variable following the Laplace distribution, we have $b_j = - F^{-1}_X( j / K)$ where $X \sim \textrm{Laplace}(0,s)$, $F_X(\cdot)$ is the CDF of $X$, and $n=K$ is the number of units in the FCL. Specifically, according to the CDF of Laplace distribution, 
\begin{equation}
    F_X^{-1}\left( \frac{j}{K} \right) = - s \cdot \mathrm{sgn}\left( \frac{j}{K} - 0.5 \right) \ln \left(1 - 2 \left| \frac{j}{K} -0.5 \right| \right).
\end{equation}
For any sample $x^{\left(i\right)}$,
\begin{align}
    y_{k} & = w_{k}^\intercal x^{\left(i\right)} + b_{k} = w_{k}^\intercal x^{\left(i\right)} - F_X^{-1}\left( \frac{k}{K} \right) \nonumber \\
    & = w_{k}^\intercal x^{\left(i\right)} + s \cdot \mathrm{sgn}\left( \frac{k}{K} - 0.5 \right) \ln \left(1 - 2 \left| \frac{k}{K} -0.5 \right| \right).
\end{align}
Without loss of generality, assume that $\left(k/K - 0.5\right) > 0$,
\begin{equation}
    y_{k} = w_{k}^\intercal x^{\left(i\right)} + s \cdot \ln \left(2 - \frac{2k}{K} \right).
\end{equation}
Therefore, we have
\begin{align}
    & \mathrm{Pr} \left\{i_0 = k, n=K\right\} =  \mathrm{Pr} \left\{ y_k > 0 \; \mathrm{and} \; y_{k+1} \leq 0 \right\} \nonumber \\
    = & \mathrm{Pr} \left\{ p\left( \frac{2k}{K} \right) <  w_{k}^\intercal x^{\left(i\right)} \leq  p\left( \frac{2\left( k+1\right)}{K} \right) \right\} \nonumber \\
    = & \mathrm{Pr} \left\{ p\left( \frac{2\cdot2k}{2K} \right) <  w_{k}^\intercal x^{\left(i\right)} \leq  p\left( \frac{2\left( 2k+1\right)}{2K} \right) \right\} + \nonumber \\
    & \mathrm{Pr} \left\{ p\left( \frac{2\left( 2k+1\right)}{2K} \right) <  w_{k}^\intercal x^{\left(i\right)} \leq  p\left( \frac{2\cdot 2 \left( k+1\right)}{2K} \right) \right\} \nonumber \\
    = & \mathrm{Pr} \left\{i_0 = 2k, n=2K\right\} + \mathrm{Pr} \left\{i_0 = 2k+1, n=2K\right\},
\end{align}
since we set that $w_{1} = w_{2} = \cdots w_{n}$, abbreviating $- s \cdot \ln \left(2 - 2k/K \right)$ to $p\left( 2k/K\right)$.

The above results show that $i_0 = k$ means that $x^{\left(i\right)}$ belongs to $\left[ p\left( 2k/K \right) \oslash w_{k}^\intercal, p\left( 2\left( k+1 \right)/K \right) \oslash w_{k}^\intercal \right]$, which we refer to \textit{reverse interval}. Increasing $n$ to $2K$ is equivalent to dividing the reverse interval into two parts.

Given two samples $x^{\left(i\right)}$ and $x^{\left(j\right)}$, we have $i_0 \neq j_0$ when $n=2K$ if $i_0 \neq j_0$ when $n=K$ since the reverse intervals of $i_0$ and $j_0$ are not disjoint. Otherwise, assume that $i_0 = j_0 = k$ when $n=K$, increasing $n$ to $2K$ makes $i_0$ and $j_0$ change to $2k$ or $2k+1$, resulting in four combinations of $i_0$ and $j_0$. However, $i_0 = j_0$ only occurs in two of these combinations. Increasing $n$ from $K$ to $2K$ reduces the probability that $i_0$ equals $j_0$. When gradients of all training samples are separated into different units, the reverse indexes of the samples are $\left\{1_0, 2_0, \cdots, B_0 \right\}$ where $B$ is the batch size. In other words, only units in the above reverse index set have non-zero gradients, and $\left| \left\{1_0, 2_0, \cdots, B_0 \right\} \right| = B$.
\end{proof}

Theorem~\ref{theorem-two-performance-related} shows that the adversary can increase the number of units to separate gradients as much as possible, and the growth of units would not cause gradient expansion.

\subsection{Removing Background Gradients}
\label{section-removing-background}

We further compress gradients by removing background gradients to reduce the compressed degree of gradients in clipping. As shown in Fig.~\ref{fig-introduction-proposed-dp}, the image's subject is a dog, while the background is worthless to the adversary. Inspired by this, we propose keeping the pixels where subjects are located before samples enter the separation layer while the rest are set to 0. The above operation makes the gradient corresponding to the pixel with a value of 0 in the image also be 0 in the separation layer.

The implementation of subject extraction is based on the segment anything model (SAM)~\cite{kirillov2023segment}. SAM has significant advantages in image segmentation, and many machine learning models apply SAM to improve target models' performance~\cite{zhang2023comprehensive, ma2023segment,jing2023segment}. Most importantly, SAM is a zero-shot model, i.e., SAM can directly apply to all user samples without users performing any training on SAM.

SAM generates masks for any image to segment the image with multiple input modes. We set the center of images as the selecting (input) points and apply masks with higher scores to samples. Pixels in the mask with the highest score are kept, while the rest will be set to 0 for gradient compression.

\subsection{Sample Denoising}
\label{section-noise-filtering}

Since samples only retain subjects through masks generated by SAM while other pixels (i.e., backgrounds) are set to 0, the background pixels in reconstructed samples should also be 0. The sample denoising aims to restore the background pixels of reconstructed samples to 0 through noise filtering.

We first analyze the cause of noise in the backgrounds of reconstructed samples. Consider a pixel $p^{(i)}_{c,w,h}$ in the background of sample $x^{\left( i\right)}$, after subject extraction, $p^{(i)}_{c,w,h}$ is set to 0. Assuming that the relevant gradient is scaled by $\omega$ in gradient clipping, the reconstructed value $\hat{p}^{(i)}_{c,w,h}$ can be calculated as follows when the perturbation is not considered:
\begin{equation}
\hat{p}^{(i)}_{c,w,h} = \frac{\nabla_{w_{i_0, c,w,h}} L / \omega}{\nabla_{b_{i_0}} L / \omega} = \frac{\nabla_{w_{i_0, c,w,h}}L}{\nabla_{b_{i_0}}L},
\end{equation}
where $i_0$ is the reverse unit, and $w_{i_0, c,w,h}$ and $b_{i_0}$ are the corresponding weight and bias, respectively. When $p^{(i)}_{c,w,h} = 0$, we have $\nabla_{w_{i_0, c,w,h}}L = 0$ and $\hat{p}^{(i)}_{c,w,h} = 0$. When gradients are perturbed,
\begin{equation}
\label{eq-reconstructed-pixel}
\hat{p}^{(i)}_{c,w,h} = \frac{ 0 + n_w }{\nabla_{b_{i_0}} L / \omega + n_b} \neq 0,
\end{equation}
where $n_w$ and $n_b$ are noise for gradient perturbation, and this is why there is noise in the background of reconstructed samples.

Gradient filtering introduces extra zero gradients that reflect the noise distribution after perturbation to obtain the noise's confidence interval. As shown in Fig.~\ref{fig-attack-framework}, before passing through the separation layer, samples pass through a convolution layer with a kernel size of 1, a step size of 1, and an output channel of 6 without bias term. The weights of the first 3 channels and the last 3 channels are set to 1 and 0, respectively. The output of this layer is the same sample as processed samples and a vector of all zeros with the same size as the processed samples. The adversary can collect noise samples and get the noise distribution after perturbation from the corresponding gradients of introduced zero gradients.

Specifically, since the noise in LDP follows a normal distribution, the noise with negative values satisfies a half-normal distribution with a mean of $- \sigma \sqrt{2} / \sqrt{\pi}$ and a variance of $\sigma^2\left( 1 - 2/\pi\right)$ where $\sigma$ is the standard deviation of the noise~\cite{cooray2008generalization}. Extra zero gradients are artificially added to gradients, and there are enough noise samples in the gradient to infer $\sigma$ by the mean and variance of noise according to the law of large numbers (LLN)~\cite{dekking2005modern}. Further, we can generate a confidence interval $\left[-c, c\right]$ for noise by $\sigma$.

\begin{figure}
\centering
\includegraphics[width=3.2in]{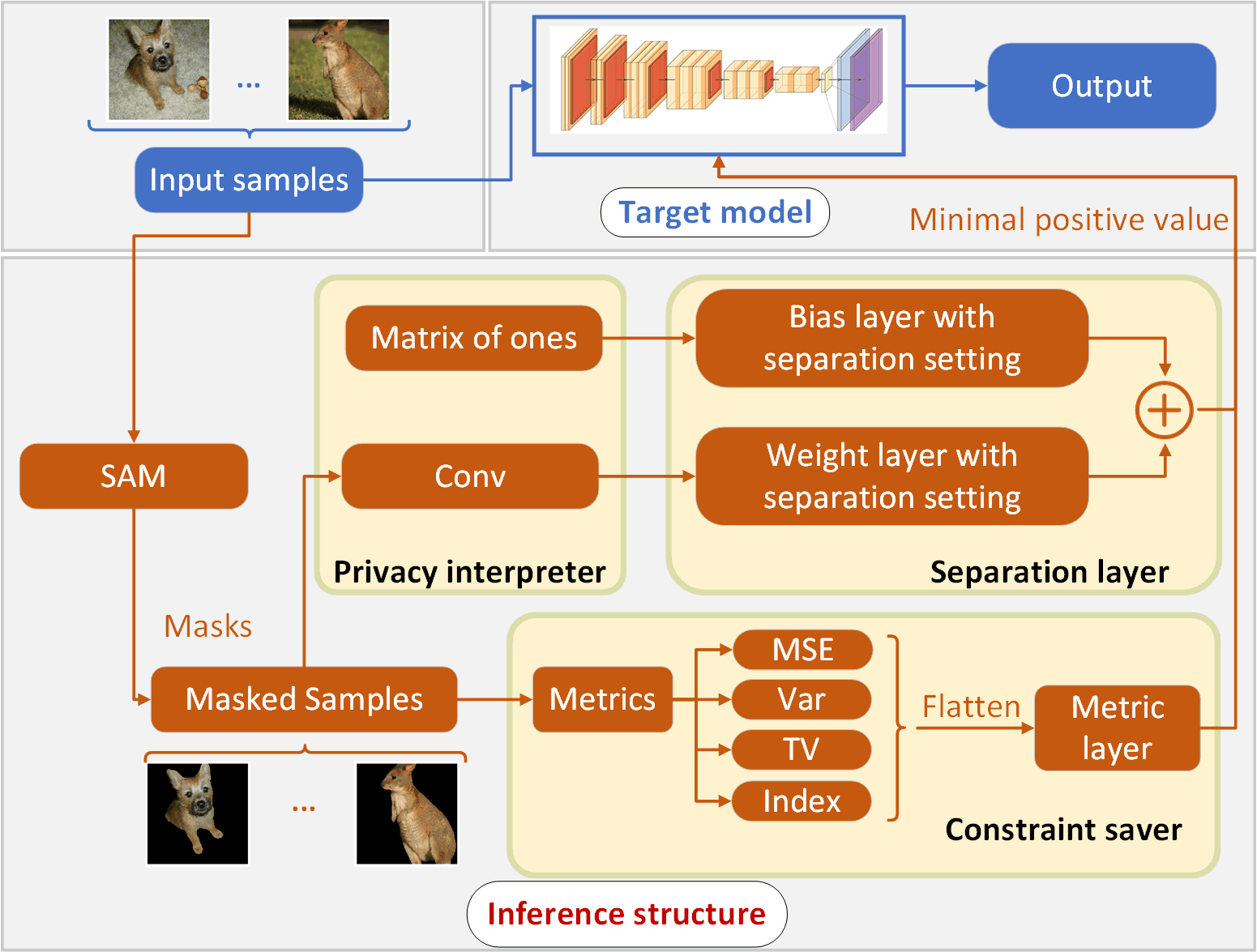}
\caption{The framework of the global model with any target models in the proposed attack.}
\label{fig-attack-framework}
\end{figure}

Note that the noise in backgrounds is scaled with different factors according to Equation~\eqref{eq-reconstructed-pixel}, i.e., $n_w$ is scaled by $\left(\nabla_{b_{i_0}} L / \omega + n_b\right)$, meaning the confidence interval also should be scaled. We propose an improved structure and modify the bias term of the primary inference structure so that bias gradients are repeated and identical. Specifically, as shown in Fig.~\ref{fig-attack-framework}, we delete the bias term of the FCL in the separation layer (i.e., the \textit{weight layer}) and use a \textit{bias layer} without the bias term to generate bias of the weight layer. The weights of the bias layer are set to the quantile function of a random variable following the Laplace distribution, which is mentioned in Section~\ref{section-removing-redundate-gradients}, and its input is a matrix of ones. The weight and bias layer output are added to the target model's input. The improved separation layer comprising the weight and bias layers satisfies the following theorem.

\begin{theorem}
\label{theorem-improved-structre}
In the improved separation layer, for any sample $x^{\left(i\right)}$, weight gradients of sample $x^{\left(i\right)}$ only exists in $i_0$-th unit of the weight layer, and gradients of the bias layer are repeated and identical bias gradients of sample $x^{\left(i\right)}$.
\end{theorem}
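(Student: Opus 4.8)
The plan is to reduce the claim to Theorem~\ref{theorem-one-non-zero-gradients} by first showing that the improved two-layer structure computes exactly the same pre-activations as the original separation layer, and then to carry out a chain-rule computation for each of the two layers separately. Write $\beta_j$ for the weight vector of unit $j$ in the bias layer and recall that its input is the all-ones vector $\mathbf{1}$, with $\beta_j$ chosen so that $\beta_j^\intercal \mathbf{1} = b_j = -F^{-1}_X(j/K)$. Denoting the shared weight-layer weights by $w$ (with $w_1 = \cdots = w_K$), the pre-activation of unit $j$ in the improved structure is
\begin{equation}
z_j = w_j^\intercal x^{\left(i\right)} + \beta_j^\intercal \mathbf{1} = w_j^\intercal x^{\left(i\right)} + b_j,
\end{equation}
which is identical to the pre-activation of the original separation layer. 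Consequently the reverse index $i_0$ of $x^{\left(i\right)}$ and the minimal positive output $y^{\left(i\right)}_{\mathrm{min}} = z_{i_0}$ are unchanged, and by the monotonicity argument of Theorem~\ref{theorem-one-non-zero-gradients} the scalar $y^{\left(i\right)}_{\mathrm{min}}$ depends only on unit $i_0$, so $\partial y^{\left(i\right)}_{\mathrm{min}} / \partial (\cdot) = 0$ for every parameter attached to a unit $k^\prime \neq i_0$.

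Next I would differentiate through each layer. For the weight layer, the chain rule gives $\nabla_{w_{i_0}} L = \frac{\partial L}{\partial y^{\left(i\right)}_{\mathrm{min}}} \cdot x^{\left(i\right)}$ and $\nabla_{w_{k^\prime}} L = 0$ for all $k^\prime \neq i_0$, exactly as in Theorem~\ref{theorem-one-non-zero-gradients}; this establishes the first claim that the weight gradients of $x^{\left(i\right)}$ live only in the $i_0$-th unit of the weight layer. For the bias layer, because $z_{i_0}$ depends on $\beta_{i_0}$ only through $\beta_{i_0}^\intercal \mathbf{1}$, the chain rule yields
\begin{equation}
\nabla_{\beta_{i_0}} L = \frac{\partial L}{\partial y^{\left(i\right)}_{\mathrm{min}}} \cdot \frac{\partial z_{i_0}}{\partial \beta_{i_0}} = \frac{\partial L}{\partial y^{\left(i\right)}_{\mathrm{min}}} \cdot \mathbf{1},
\end{equation}
while $\nabla_{\beta_{k^\prime}} L = 0$ for $k^\prime \neq i_0$. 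Since $\frac{\partial L}{\partial y^{\left(i\right)}_{\mathrm{min}}}$ is precisely the scalar bias gradient $\nabla_{b_{i_0}} L$ of the original layer, every entry of $\nabla_{\beta_{i_0}} L$ equals this same scalar, which is exactly the assertion that the bias-layer gradients are repeated and identical copies of the bias gradient of $x^{\left(i\right)}$.

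I expect the only real subtlety to be justifying the function-level equivalence in the first step, namely that splitting the affine map $w_j^\intercal x + b_j$ into a bias-free weight layer plus a bias-free, all-ones-fed bias layer leaves the forward pass (and hence $i_0$ and $y^{\left(i\right)}_{\mathrm{min}}$) unchanged, so that Theorem~\ref{theorem-one-non-zero-gradients} applies verbatim. The \emph{repeated and identical} part hinges entirely on the bias layer receiving the constant input $\mathbf{1}$: this forces $\partial z_{i_0} / \partial \beta_{i_0} = \mathbf{1}$, so that the duplicated bias gradients carry independent noise realizations of one common quantity, which is what the subsequent denoising step averages over. The remaining manipulations are routine applications of the chain rule inherited from Theorem~\ref{theorem-one-non-zero-gradients}.
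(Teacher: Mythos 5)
Your proposal is correct and takes essentially the same route as the paper's own proof: it establishes that the bias-free weight layer plus the all-ones-fed bias layer reproduce the affine pre-activations, reuses the monotonicity/reverse-index argument to localize $y^{\left(i\right)}_{\mathrm{min}}$ at unit $i_0$, and applies the chain rule so that the constant input $\mathbf{1}$ forces $\partial z_{i_0}/\partial \beta_{i_0} = \mathbf{1}$ and hence repeated, identical bias gradients. The only cosmetic difference is your normalization $\beta_j^\intercal \mathbf{1} = b_j$, whereas the paper sets every entry of the bias-layer weights to $-F^{-1}_X\left(k/K\right)$ so the effective bias is $D \cdot b_{1,k}$; since the argument uses only the strict monotonicity of the biases in $k$, this scaling changes nothing.
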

\begin{proof}
The weights of the weight layer are $w_1 = w_2 = \cdots = w_K$ where $K$ is the number of units. Suppose that the input size of the bias layer is $D$, the weights of the bias layer are set according to $b_{j,k} = - F^{-1}_X( k / K)$ where $X \sim \textrm{Laplace}(0,s)$, $F_X(\cdot)$ is the CDF of $X$ and $j \in \left[1, D\right]$. The weight layer is an FCL without bias term, given any input $x^{\left( i\right)}$, the output of the weight layer is $\left\{ w_1^\intercal x^{\left( i\right)}, w_2^\intercal x^{\left( i\right)}, \dots, w_K^\intercal x^{\left( i\right)} \right\}.$
Since the bias layer is an FCL without bias term, the input of the bias layer is a vector of ones, and $b_{1,k} = b_{2,k} = \cdots = b_{D,k}$, the output of the bias layer is
\begin{align}
    \sum_{j=1}^{D} \left\{  b_{j,1}, b_{j,2}, \dots, b_{j,K} \right\} 
    =  D \cdot \left\{  b_{1,1}, b_{1,2}, \cdots, b_{1,K}\right\}.
\end{align}
We have $b_{1,1} > b_{1,2} > \cdots > b_{1,K}$ according to the proof of Theorem~\ref{theorem-one-non-zero-gradients}, and $D \cdot b_{1,1} > D \cdot b_{1,2} > \dots > D \cdot b_{1,K}$. Thus, any sample $x^{\left( i\right)}$ has a unique reverse index $i_0$ such that
\begin{equation}
    \left\{ \begin{matrix}
        w_1^\intercal x^{\left(i\right)} + D \cdot b_{1,1} > \cdots > w_{i_0}^\intercal x^{\left(i\right)} + D \cdot b_{1, i_0} > 0; \\ \\
        w_K^\intercal x^{\left(i\right)} + D \cdot b_{1,K} < \cdots < w_{i_0+1}^\intercal x^{\left(i\right)} + D \cdot b_{1, i_0+1} \leq 0.
        
    \end{matrix}\right.
\end{equation}
The output of the improved inference structure only retains the minimal non-zero positive value, i.e., $y^{\left(i\right)}_{\mathrm{min}} = w_{i_0}^\intercal x^{\left(i\right)} + \sum_{j=1}^{D} b_{j, i_0}.$
Similar to the proof of Theorem~\ref{theorem-two-performance-related}, weight gradients of sample $x^{\left( i\right)}$ only exist in the $i_0$-th unit of the weight layer which is the only unit in the weight layer that affects the value of $y^{\left(i\right)}_{\mathrm{min}}$. For bias gradients, we have
\begin{equation}
    \left\{ \begin{matrix}
        \nabla_{w_{i_0}}L = \frac{\partial L}{\partial w_{i_0}} = \frac{\partial L}{\partial y^{\left(i\right)}_{\mathrm{min}}} \cdot \frac{\partial y^{\left(i\right)}_{\mathrm{min}}}{\partial w_{i_0}} = \frac{\partial L}{\partial y^{\left(i\right)}_{\mathrm{min}}} \cdot x^{\left(i\right)}; \\
        \; \\
        \nabla_{b_{j, i_0}}L = \frac{\partial L}{\partial b_{j, i_0}} = \frac{\partial L}{\partial y^{\left(i\right)}_{\mathrm{min}}} \cdot \frac{\partial y^{\left(i\right)}_{\mathrm{min}}}{\partial b_{j, i_0}} = \frac{\partial L}{\partial y^{\left(i\right)}_{\mathrm{min}}}.
    \end{matrix}\right.
\end{equation}
Any gradients of $b_{j, i_0}$ where $j\in \left[ 1,D\right]$ are bias gradients of sample $x^{\left( i\right)}$, and we have 
\begin{equation}
    \nabla_{b_{1, i_0}}L = \cdots = \nabla_{b_{D, i_0}}L = \frac{\partial L}{\partial y^{\left(i\right)}_{\mathrm{min}}}.
\end{equation}
In other words, there are $D$ identical bias gradients in the improved inference structure. The adversary can reconstruct sample $x^{\left( i\right)}$ by
\begin{equation}
    x^{\left( i\right)} = \nabla_{w_{i_0}}L \oslash \left( \frac{1}{D} \sum_{j=1}^{D} \nabla_{b_{j, i_0}}L  \right).
\end{equation}
\end{proof}

The intuition of the proof is that the bias of the weight layer changes from one value to multiple identical values (the number is equal to the input size of the bias layer). Therefore, we can obtain an accurate $\nabla_{b_{i_0}} L / \omega$ by averaging the corresponding gradients in the bias layer. For example, assume that the weight layer has 1024 units and the input size of the bias layer is 500, $\nabla_{b_{i_0}} L / \omega + n_b$ appears once in the primary inference structure. However, in the improved separation layer, $\left( \nabla_{b_{i_0}} L / \omega + n_b \right)$ appears 500 times in the gradients. Although noise in the repeated bias gradients ($n_b$) is different, the averaging effectively realizes noise cancellation according to the LLN because their mean is 0.

Finally, for any sample $x^{(i)}$, we scale the confidence interval $\left[-c, c\right]$ by the averaged $\nabla_{b_{i_0}} L / \omega$. When the value of a specific pixel $\hat{p}^{(i)}_{c,w,h}$ is in the interval, we think that $\hat{p}^{(i)}_{c,w,h}$ has a high probability of being 0, and set $\hat{p}^{(i)}_{c,w,h}$ to 0, which can effectively filter the noise in backgrounds.

\subsection{Metric-based Optimization}
\label{section-mectics-optimization}

The above noise filtering is mainly aimed at noise in backgrounds, and we further propose metric-based optimization to improve the quality of reconstructed samples. We set the optimization objective as
\begin{align}
\label{optimization-goal}
\underset{\hat{x}}{\min} & \; w_\mu \sum_l \|\mu_l\left(\hat{x}\right) - \mu_l\left(x\right)\|_2 +  w_\sigma \|\sigma_l^2\left(\hat{x}\right) - \sigma_l^2\left(x\right)\|_2 + \nonumber \\ & \; w_{\mathrm{TV}} \sum_l \| \mathrm{TV}_l\left(\hat{x}\right) - \mathrm{TV}_l\left(x\right) \|_2,
\end{align}
where $\mu_l\left(\cdot\right)$, $\sigma_l^2\left(\cdot\right)$, and $\mathrm{TV}_l\left(\cdot\right)$ are sample-wise mean, variance and total variation, respectively, and $w_\mu$, $w_\sigma$, and $w_{\mathrm{TV}}$ are weight coefficients. The problem is obtaining the above information of processed samples for optimization.

As shown in Fig.~\ref{fig-attack-framework}, we introduce a \textit{metric layer} to imprint the above information to gradients. Specifically, the model calculates the corresponding batch-wise metrics of processed samples to generate a metric matrix, which then is flattened to become the metric layer input. The output of the metric layer is directly added to the input of the target model. As discussed in Lemma~\ref{lemma-attack-1-one-sample}, the adversary can reconstruct the input of any FCL through its gradients when the batch size is 1 (the flattened metric matrix can be viewed as a single sample). For example, a batch of 16 images with 3 channels can generate a $16 \times 3 \times 3$ metric matrix, which is then flattened to $1\times 144$ as an input of the metric layer.

Another problem is that reconstructed samples are ordered by samples' reverse units, but metrics are ordered by input samples, which makes the order of reconstructed samples and metrics inconsistent. The metrics are sample-wise, and the non-corresponding order would optimize reconstructed samples in the wrong direction. Therefore, we also save sample reverse units in the gradients of the metric layer. The corresponding gradients can reflect the reverse units of samples, and the adversary can reorder the metrics according to these gradients so that the orders of metrics and reconstructed samples are consistent. In addition, the metric layer introduces a repeated gradient structure as the bias layer to realize noise cancellation and improve accuracy. Finally, the adversary can optimize the reconstructed samples according to ordered metrics according to Equation~\eqref{optimization-goal}.

\section{All-in-one and Implementation}
\label{section-all-in-one}

\subsection{Model Setting and Distribution}
We introduce the global model with any target models in which the inference structure is embedded through Fig.~\ref{fig-attack-framework}. In the local training process, samples pass through the SAM and the target model. The gradients in the target model are not affected by the proposed attack for non-target users, ensuring the accuracy of the converged target model. As discussed in Section~\ref{section-removing-background}, the SAM generates masked samples to remove background gradients. The masked samples are then sent to both the convolution and metric layers. The output of the convolution layer is the input of the weight layer, and the input of the bias layer is a matrix of ones. As discussed in Section~\ref{section-attack}, the above setting leaves sample information in the gradient without gradient expansion so the server can load the sample information from gradients for reconstruction attack.
The gradients of the metric layer are used to optimize the reconstructed samples, as discussed in Section~\ref{section-mectics-optimization}. When the user is a victim, the input of the target model is a linear combination of the outputs of weight, bias, and metric layers. For non-target users, the input of the target model is the original samples, which means that the inference structure does not impact the training of the target model.

In the model distribution phase, the adversary distributes the global model with the target model and the designed inference structure, as shown in Fig.~\ref{fig-attack-framework}, to non-target users and the victim. In order to prevent target model training of non-target users from being affected by the attack, the output coefficient of the inference structure of non-target users is set to small. The inference structure has little effect on non-target users, and the final model output of non-target users is almost consistent with the target model. The victim's gradient would be ignored in the gradient aggregation phase. Considering the thousands of users in FL, ignoring the victim's gradients has almost no effect on the aggregated gradient. We discuss the impact of the proposed attack on FL training and target model accuracy in Section~\ref{section-simulation-model-accuracy} through simulations. The victim generates gradients by training the global model with local samples, then clips and perturbs the gradients according to local privacy parameters and uploads the processed gradients to the server.

\begin{figure}
\centering
\includegraphics[width=3.2in]{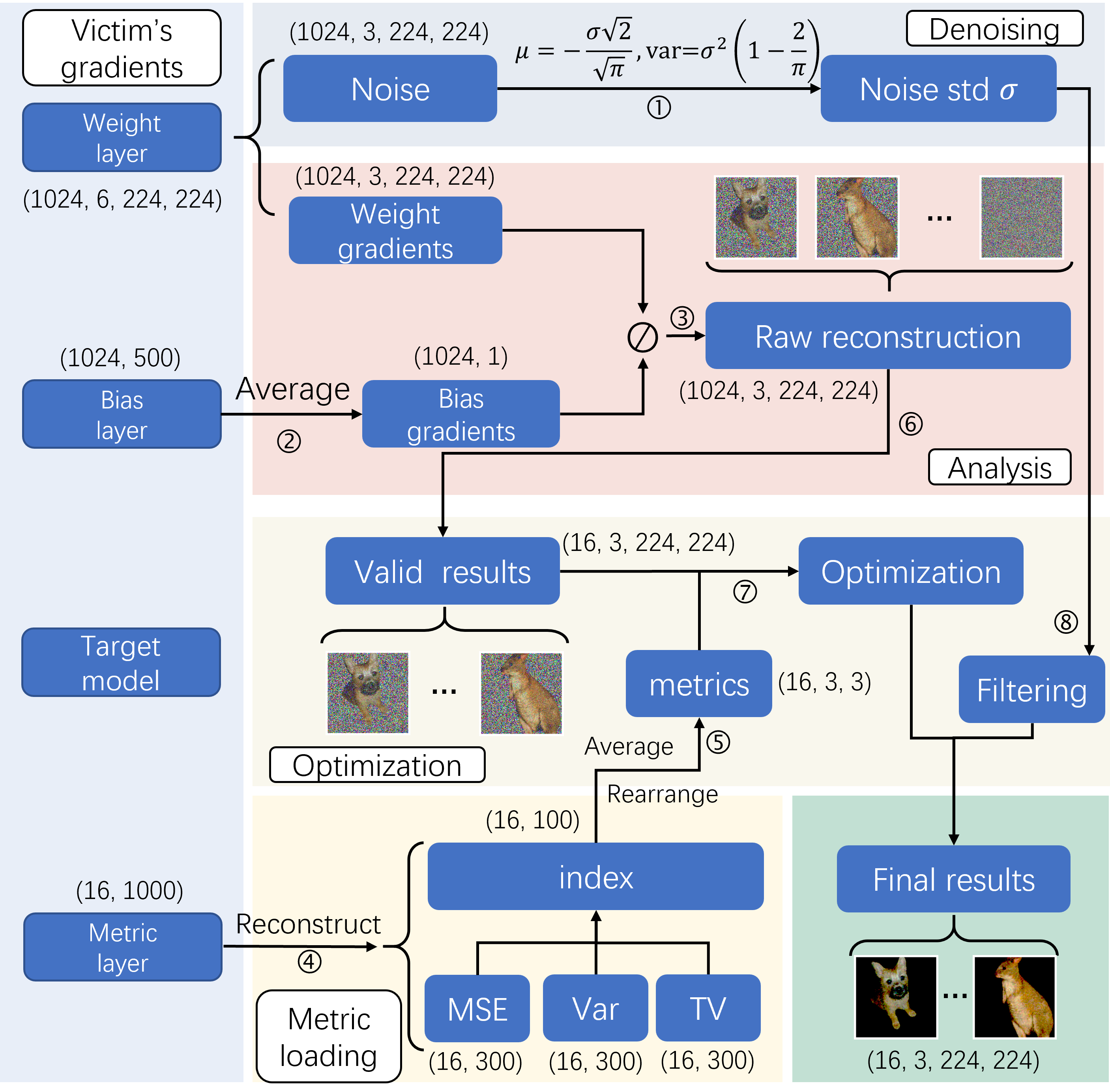}
\caption{The process of reconstructing training samples according to the victim's gradients with an example where the batch size is 16 and the sample size is $3 \times 224 \times 224$. The number of units in the weight, bias, and metric layers are 1024, 500, and 1000, respectively.}
\label{fig-attack-flow}
\end{figure}

\subsection{Implementation}
\label{section-all-in-one-implementation}

Figure~\ref{fig-attack-flow} shows the process of implementing the proposed sample reconstruction attack through the clipped and perturbed gradients uploaded by the victim. We cover each implementation detail step by step in the following.

\textbf{(1) Privacy parameters extraction.} As discussed in Section~\ref{section-noise-filtering}, the convolutional layer adds extra zero gradients the same size as training samples to the gradients of the weight layer. Through the mean and variance of negative gradients of the structure that introduces extra zero gradients before perturbation, the adversary obtains the standard variance $\sigma$ of the noise in gradient perturbation according to the mean of $-\sigma \sqrt{2} / \sqrt{\pi}$ and the variance of $\sigma^2\left( 1 - 2/\pi\right)$.

\textbf{(2) Bias term reconstruction.} As discussed in Section~\ref{section-noise-filtering}, the gradients of the bias layer are used to generate accurate bias gradients. Since bias gradients are repeated and identical in the gradients of the bias layer, we can obtain more accurate bias gradients by averaging.

\textbf{(3) Raw reconstruction.} In addition to noisy zero gradients, there are noisy weight gradients in the gradients of the weight layer. According to an element-wise division of weight and bias gradients discussed in Section~\ref{section-primacy-attack}, i.e., $\nabla_w L \oslash \nabla_b L$, the adversary performs a raw reconstruction for training samples.

\textbf{(4) Metric reconstruction.} As discussed in Section~\ref{section-mectics-optimization}, the metric layer gradients contain sample metrics, including mean, variance, total variation, and samples' reverse units. The adversary reconstructs the above metrics and samples' reverse units from metric layer gradients.

\textbf{(5) Metric alignment.} The order of reconstructed metrics and reconstruction samples is different, and the order of reconstructed samples can be inferred from the index of reverse units. Therefore, reconstructed metrics are reordered according to the reconstructed index of reverse units so that the orders of metrics and reconstructed samples are consistent.

\textbf{(6) Image filtering.} Since the number of units with non-zero gradients is not greater than the batch size according to Theorem~\ref{theorem-one-non-zero-gradients}, raw reconstruction contains many meaningless images. For example, there are 1008 meaningless images in the raw reconstruction in Fig.~\ref{fig-attack-flow}. However, these meaningless images are easy to distinguish, and reverse units reconstructed from the gradients of the metric layer provide accurate positions of valid samples in the raw reconstruction.

\textbf{(7) Metric-based optimization.} As discussed in Section~\ref{section-mectics-optimization}, the proposed attack establishes the optimization objective by reconstructed metrics and optimizes the reconstructed samples. The proposed attack optimizes the valid samples after image filtering according to Equation~\eqref{optimization-goal} to improve the reconstructed sample quality.

\textbf{(8) Noise filtering.} Generating confidence interval for noise according to the reconstructed standard variance $\sigma$ and scaling the confidence interval through average bias gradients of reverse units to filter the reconstructed samples' noise.

\section{Evaluation}
\label{section-evaluation}

\subsection{Evaluation Setup}
\label{section-evaluation-setup}

\textbf{Benchmark.}
We utilize the benchmark on Breaching, an open framework for reconstruction attacks against FL\footnote{\url{https://github.com/JonasGeiping/breaching}.}. We consider two analysis-based attacks (Fowl's attack~\cite{fowl2022robbing} and Boenisch's attack~\cite{Boenisch2021When}) in the evaluation. Both can almost completely reconstruct victims' training samples in FL mechanisms without LDP. We also introduce two optimization-based attacks for comparison (Yin's attack~\cite{Yin2021see} and Wei's attack~\cite{Wei2020Framework}). Breaching implements the above attacks. In addition, Hong's work~\cite{hong2024foreseeing} discusses model vulnerability through the Hessian matrix to gradient difference. We implement the optimization algorithm in Hong's work~\cite{hong2024foreseeing} for comparison.

\textbf{Datasets and models.}
We consider four image datasets: ImageNet~\cite{ILSVRC15}, CIFAR-100~\cite{krizhevsky2009learning}, Caltech-256~\cite{griffin2007caltech}, and Flowers102~\cite{nilsback2008automated}. Users' training samples are randomly selected from the above datasets, and training samples are normalized between 0 and 1. The default target model is ResNet101~\cite{he2016deep}.

\textbf{Metrics.}
Following metrics measure the quality of reconstructed samples: mean square error (MSE), peak signal-to-noise ratio (PSNR), and complex wavelet structural similarity (CW-SSIM)~\cite{cw-ssim2009Sampat}. MSE reflects the difference between two images. PSNR quantifies reconstruction quality for images subject to lossy compression, and a higher PSNR generally indicates that given images have a higher reconstruction quality. CW-SSIM is an index that varies between 0 and 1 to measure the similarity of two images, and a larger CW-SSIM refers to a higher similarity between two images. The above metric calculates the similarity between masked training samples and reconstructed samples generated by the proposed attack~\cite{Nguyen2024Preserving,Song2023lsecnet}.

\textbf{Parameter setting.}
The Gaussian noise scale is calculated based on the setting of the existing FL mechanism with LDP~\cite{Zhou2022Differentially, Kang2020FLwithDP, Naseri2022local}, i.e., $\sigma_U = 2cC / m\varepsilon$ where $c$ is a constant, $C$ is the clipping bound, $m$ is the minimal size of local datasets, and $\varepsilon$ is the privacy parameter. The default values of the above parameters are close to those in the experiments of existing works, specifically, $c=1$, $C=10$, $m=1000$, $\varepsilon=10$, and $\delta=0.01$. As shown in Table \ref{table-prvacy-parameters}, setting $\varepsilon$ to 10 in the existing FL mechanisms with LDP causes significant performance degradation to the target model. Smaller $\varepsilon$ will make the target model wholly inapplicable due to low accuracy. The data presented in the evaluation is the average of 10 results under the same conditions. The adversary implements attacks through a round of gradients. Other parameters are given in Table $\mathrm{VIII}$ of Appendix A.

\textbf{Experiment environment.}
All experiments are conducted on a system equipped with an NVIDIA RTX 4090D GPU with 24GB memory and an Intel Xeon Platinum 8474C CPU. The proposed attack is implemented by PyTorch, and all computations are accelerated by CUDA.

\subsection{Reconstructed Samples Quality}
\label{section-evaluation-quality-comparison}

\begin{table*}
\caption{Original training samples from ImageNet and reconstructed samples generated by different attacks in FL with LDP.}
\centering
\label{table-evaluation-vision}
\begin{tblr}{
    column{1} = {6.8in, m}, hline{1,Z}={2pt}, hline{3,5,7,9,11,13,15,17}={dashed,1pt}, rows={c},
    row{1,3,5,7,9,11,13}={b,5pt}, row{2,4,6,8,10,12,14}={t,10pt},
}
Original training samples\\
\includegraphics*[width=6.8in,valign=m]{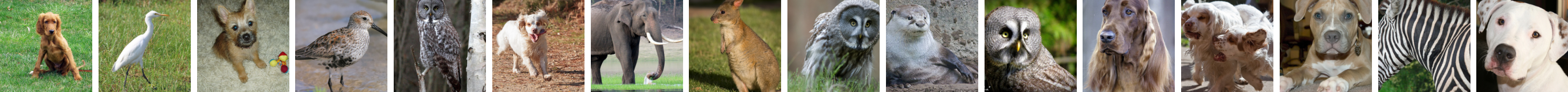}\\
\textbf{The proposed attack}\\
\includegraphics*[width=6.8in,valign=m]{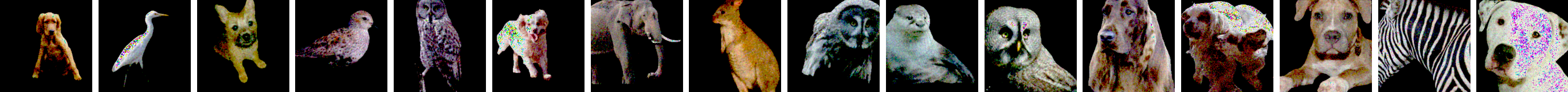}\\
Proposed attack without optimization\\
\includegraphics*[width=6.8in,valign=m]{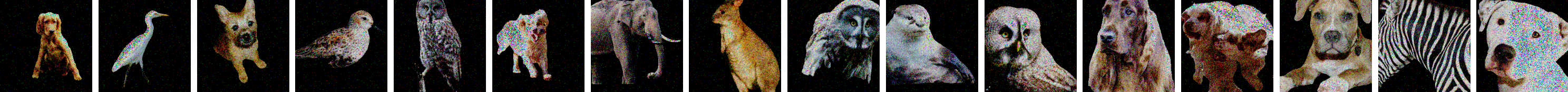}\\
Proposed attack without denoising\\
\includegraphics*[width=6.8in,valign=m]{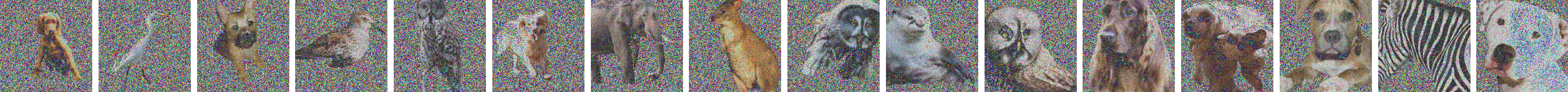}\\
Fowl's work~\cite{fowl2022robbing}\\
\includegraphics*[width=6.8in,valign=m]{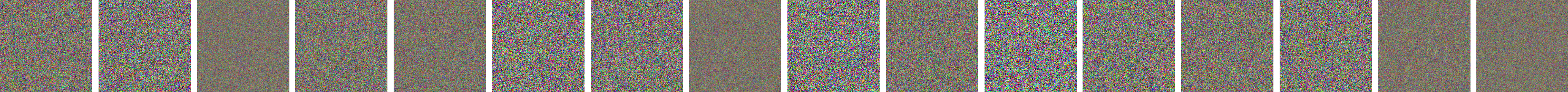}\\
Boenisch's attack~\cite{Boenisch2021When}\\
\includegraphics*[width=6.8in,valign=m]{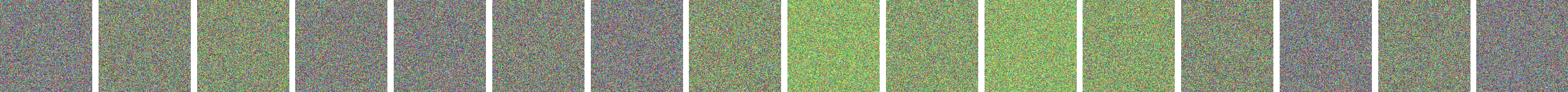}\\
Yin's work~\cite{Yin2021see}\\
\includegraphics*[width=6.8in,valign=m]{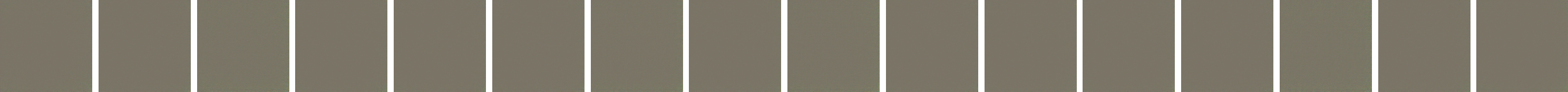}\\
Wei's work~\cite{Wei2020Framework}\\
\includegraphics*[width=6.8in,valign=m]{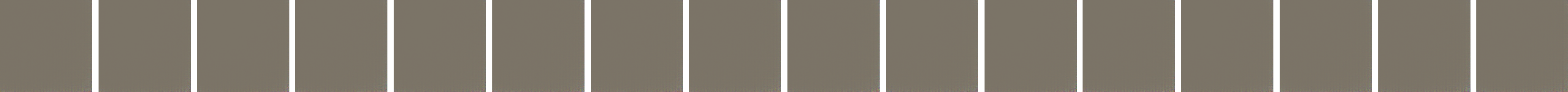}\\
Hong's work~\cite{hong2024foreseeing}\\
\includegraphics*[width=6.8in,valign=m]{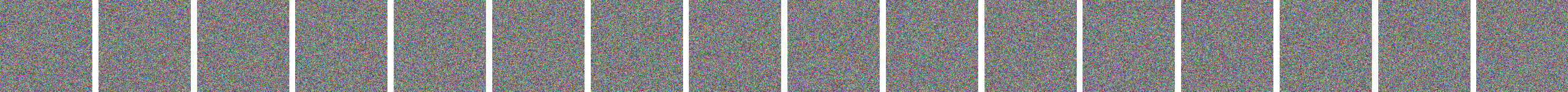}\\
\end{tblr}
\label{table-experiment}
\end{table*}

We compare the reconstructed samples' quality of various attacks when gradients are clipped and perturbed in FL with LDP. The victim has the same samples when the adversary performs different attacks to compare the effectiveness of different attacks. We first set the batch size to 16 and discuss the impact of  batch sizes on the performance in Section~\ref{section-evaluation-factors}. The global target model is ResNet101~\cite{he2016deep}, while both the clipping bound and $\varepsilon$ are set to 10.

Table~\ref{table-evaluation-vision} compares the victim's original samples from ImageNet and reconstructed samples generated by various attacks. Data in random reconstruction is randomly generated by a uniform distribution from 0 to 1. Most samples reconstructed by the proposed attack can provide the primary information in the training samples.  Meanwhile, there is a slight noise in the reconstructed samples, and some reconstructed images cannot present any information since LDP protects victims' gradients. Other sample reconstruction attacks hardly reconstruct the victim's samples when gradients are clipped and perturbed.

Table~\ref{table-evaluation-metrics-comparison-dp} compares the quality of samples reconstructed by various attacks under different datasets. \textit{Without optimization} is the result of the proposed attack without metric-based optimization. \textit{Without denoising} is the result of the proposed attack without image denoising (noise filtering and metric-based optimization). The above two settings are to present the effect of noise filtering and metric-based optimization on the proposed attack. 
Furthermore, Table $\mathrm{X}$ in Appendix A compares the quality of reconstructed samples of different attacks against gradients without LDP protection, clipped gradients, and perturbed gradients, respectively.
Figure~\ref{fig-evaluation-one-different-protection} provides a more straightforward presentation of training samples reconstructed by the proposed attack when different LDP materials protect victims' gradients. Most attacks can extract victims' sensitive information from reconstructed samples without LDP protection. The analysis-based attacks (the proposed attack, Fowl's attack~\cite{fowl2022robbing}, and Boenisch's attack~\cite{Boenisch2021When}) almost wholly reconstruct the original samples.

\begin{table*}
\caption{Quality of samples reconstructed by various attacks under different datasets in FL with LDP.}
\centering
\label{table-evaluation-metrics-comparison-dp}
\begin{tblr}{
    columns = {0.42in, l, m},
    hline{1,Z}={2pt}, hline{2}={2,3,5,6,8,9,11,12,13}{1pt}, hline{3}={1pt}, hline{2}={4,7,10}{rightpos=-1, 1pt},
    hline{4-Y}={dotted},
    column{1} = {1.2in},
    colsep = {2pt},
}
\SetCell[r=2]{c,m} Attack & \SetCell[c=3]{c} ImageNet & & & \SetCell[c=3]{c} CIFAR100 & & & \SetCell[c=3]{c} Caltech256 & & & \SetCell[c=3]{c} Flowers102\\
& MSE & PSNR & SSIM & MSE & PSNR & SSIM & MSE & PSNR & SSIM & MSE & PSNR & SSIM\\
\textbf{Proposed attack} & \textbf{0.0002} & \textbf{28.520} & \textbf{0.836} & \textbf{0.0018} & \textbf{26.869} & \textbf{0.868} & \textbf{0.0016} & \textbf{29.670} & \textbf{0.853} & \textbf{0.0017} & \textbf{29.705} & \textbf{0.881}\\
Without optimization & 0.0004 & 25.800 & 0.481 & 0.0039 & 25.799 & 0.482 & 0.0039 & 25.806 & 0.482 & 0.0039 & 25.778 & 0.481\\
Without denoising & 0.0067 & 21.805 & 0.384 & 0.0069 & 21.795 & 0.384 & 0.0068 & 21.803 & 0.385 & 0.0070 & 21.785 & 0.384\\
Fowl's attack~\cite{fowl2022robbing} & 0.319 & 4.973 & 0.204 & 0.316 & 5.015 & 0.199 & 0.318 & 4.994 & 0.203 & 0.332 & 4.794 & 0.191\\
Boenisch's attack~\cite{Boenisch2021When} & 0.168 & 8.234 & 0.213 & 0.164 & 8.531 & 0.134 & 0.199 & 7.503 & 0.185 & 0.186 & 7.760 & 0.195\\
Yin's attack~\cite{Yin2021see} & 0.067 & 12.129 & 0.139 & 0.069 & 12.443 & 0.153 & 0.099 & 10.677 & 0.113 & 0.130 & 8.947 & 0.196\\
Wei's attack~\cite{Wei2020Framework} & 0.063 & 12.179 & 0.093 & 0.115 & 11.036 & 0.127 & 0.099 & 10.652  & 0.108 & 0.127 & 9.924 & 0.109\\
Hong's work~\cite{hong2024foreseeing} & 0.3140 & 6.249 & 0.069 & 0.436 & 6.377 & 0.053 & 0.327 & 4.528 & 0.061 & 0.216 & 4.320 & 0.094\\
Random guess & 0.155 & 8.165 & 0.214 & 0.156 & 8.213 & 0.231 & 0.176 & 7.665 & 0.187 & 0.174 & 7.659 & 0.186\\
\end{tblr}
\end{table*}

\begin{figure}
\centering
\subfigure[]{
    \centering
    \includegraphics*[width=0.52in]{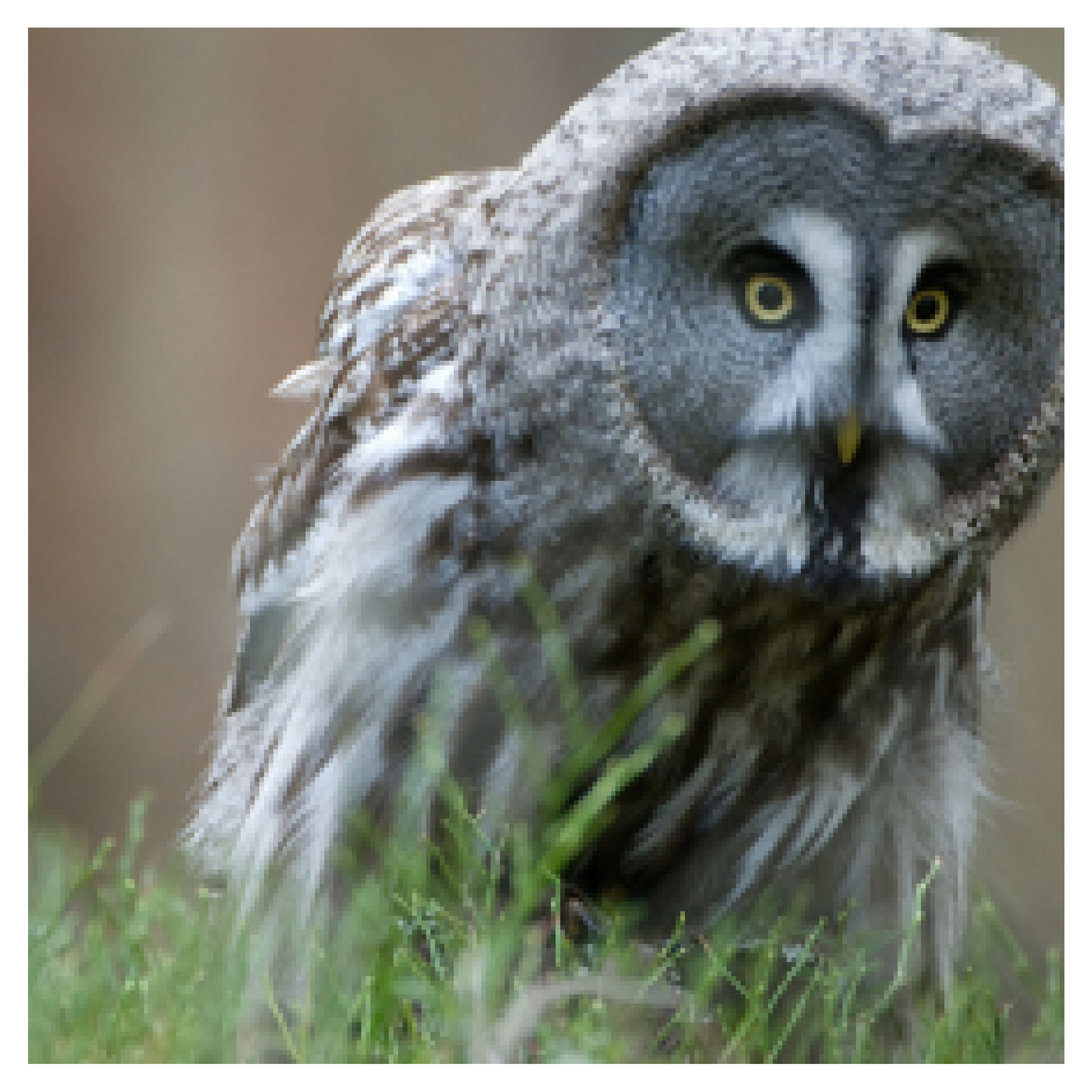}
}\hspace{-0.1in}
\subfigure[]{
    \centering
    \includegraphics*[width=0.52in]{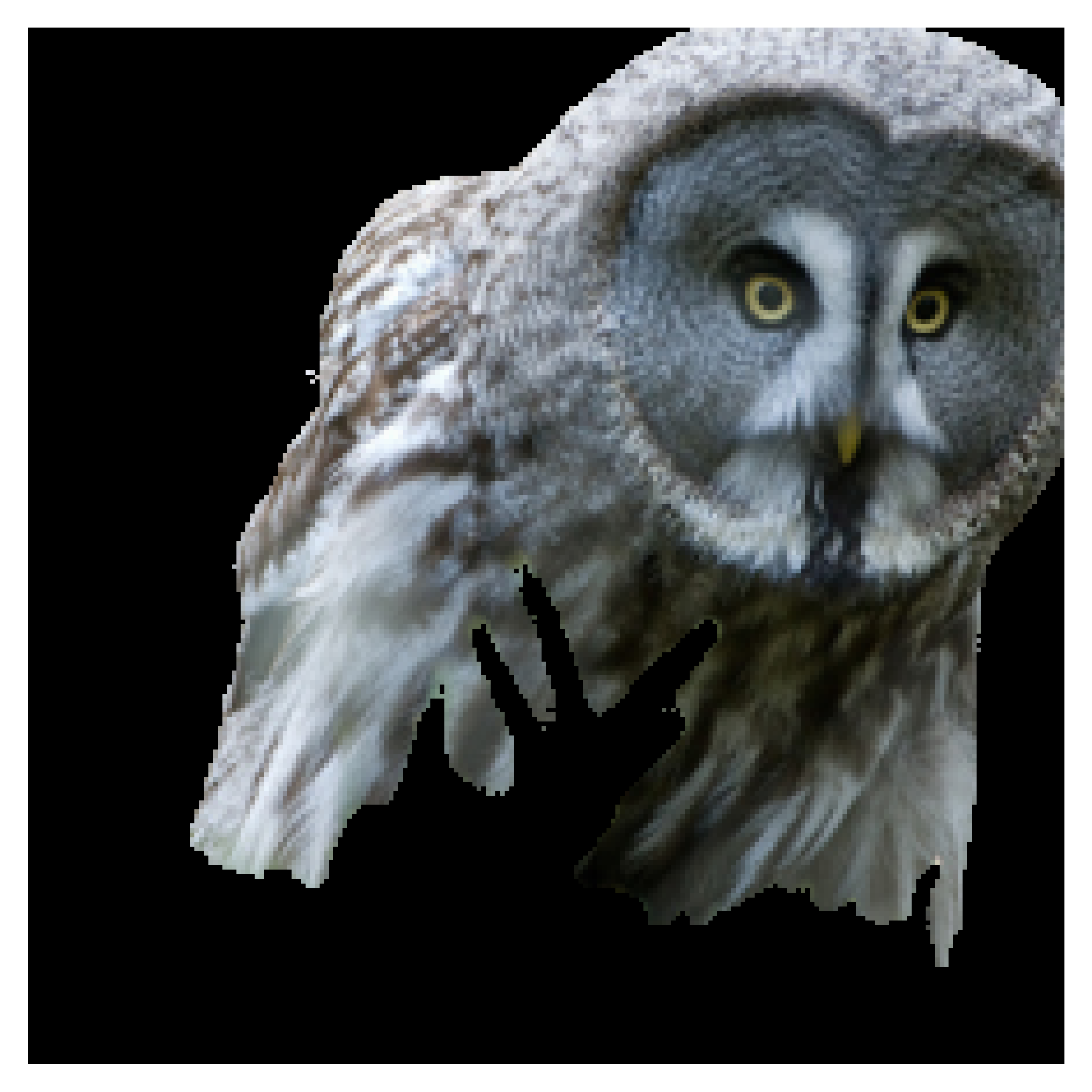}
}\hspace{-0.1in}
    \subfigure[]{
        \centering
    \includegraphics*[width=0.52in]{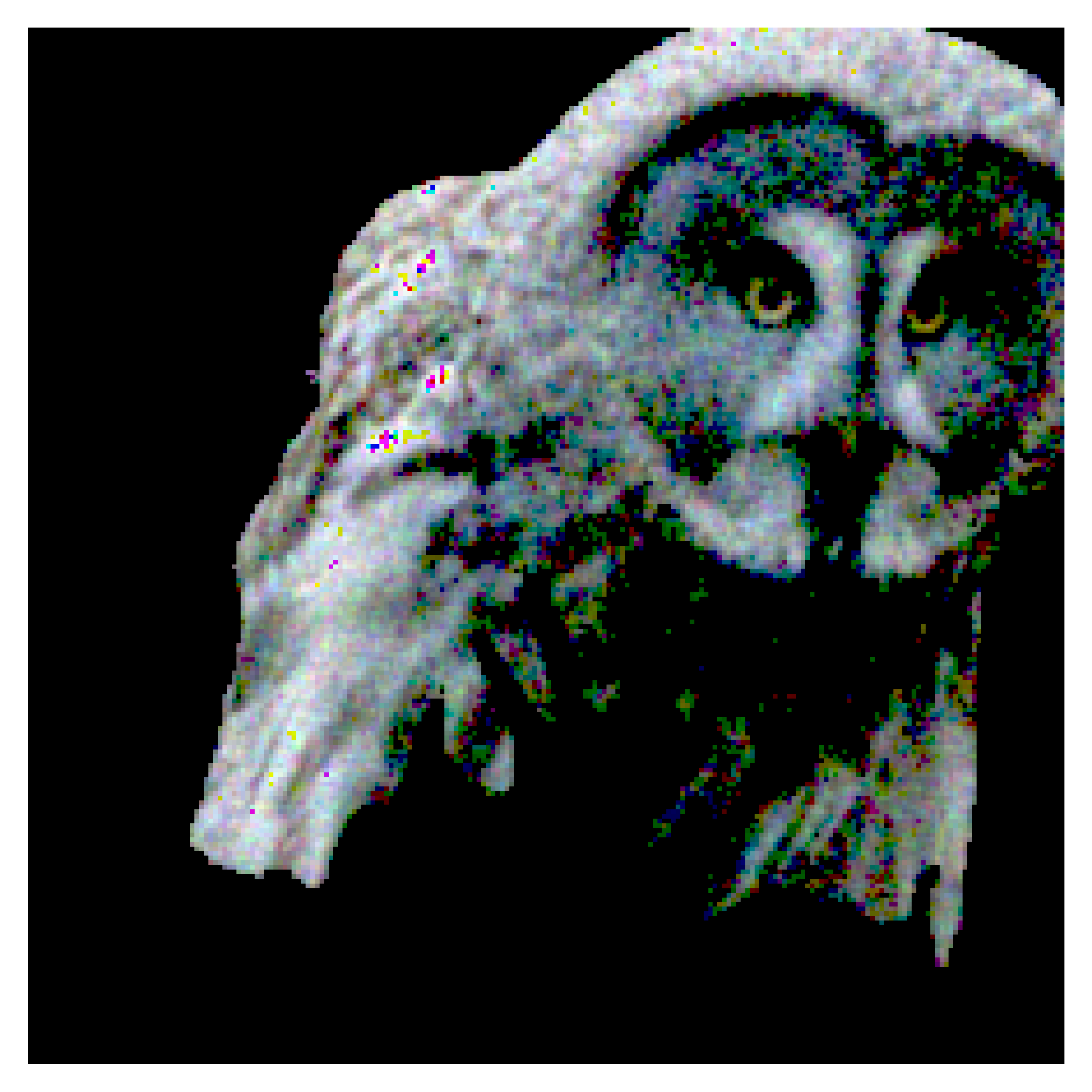}
    }\hspace{-0.1in}
\subfigure[]{
    \centering
    \includegraphics*[width=0.52in]{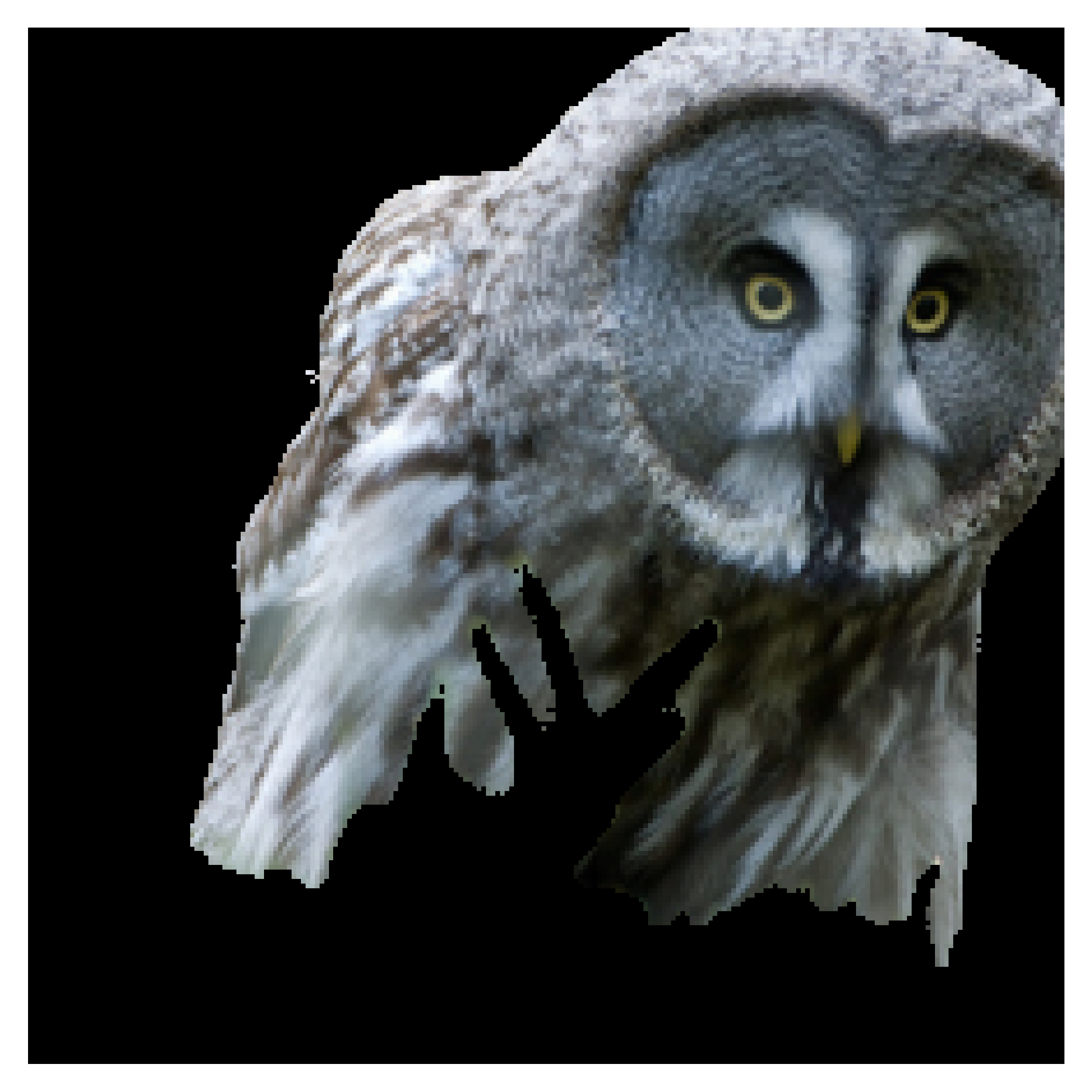}
}\hspace{-0.1in}
\subfigure[]{
    \centering
    \includegraphics*[width=0.52in]{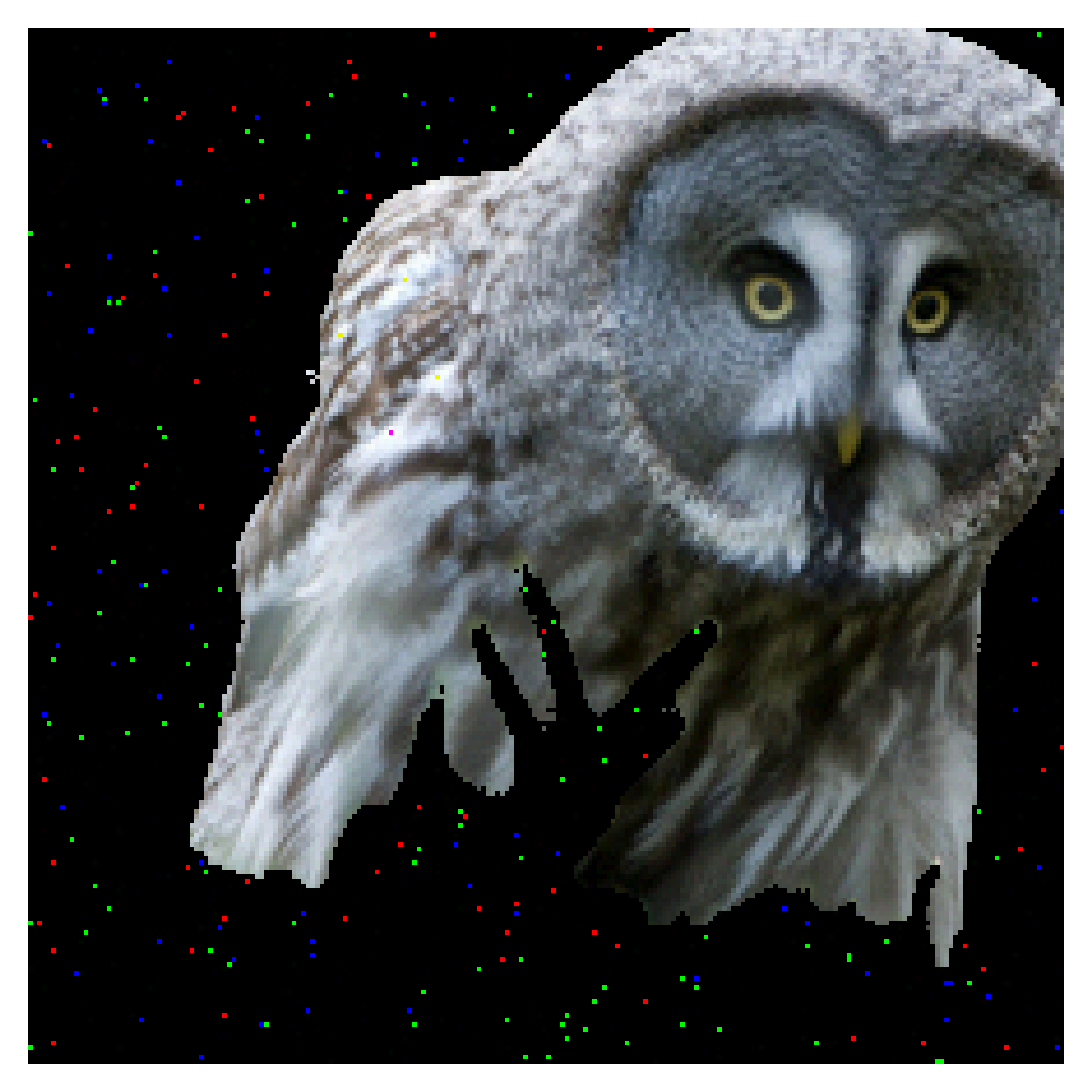}
}\hspace{-0.1in}
\subfigure[]{
    \centering
    \includegraphics*[width=0.52in]{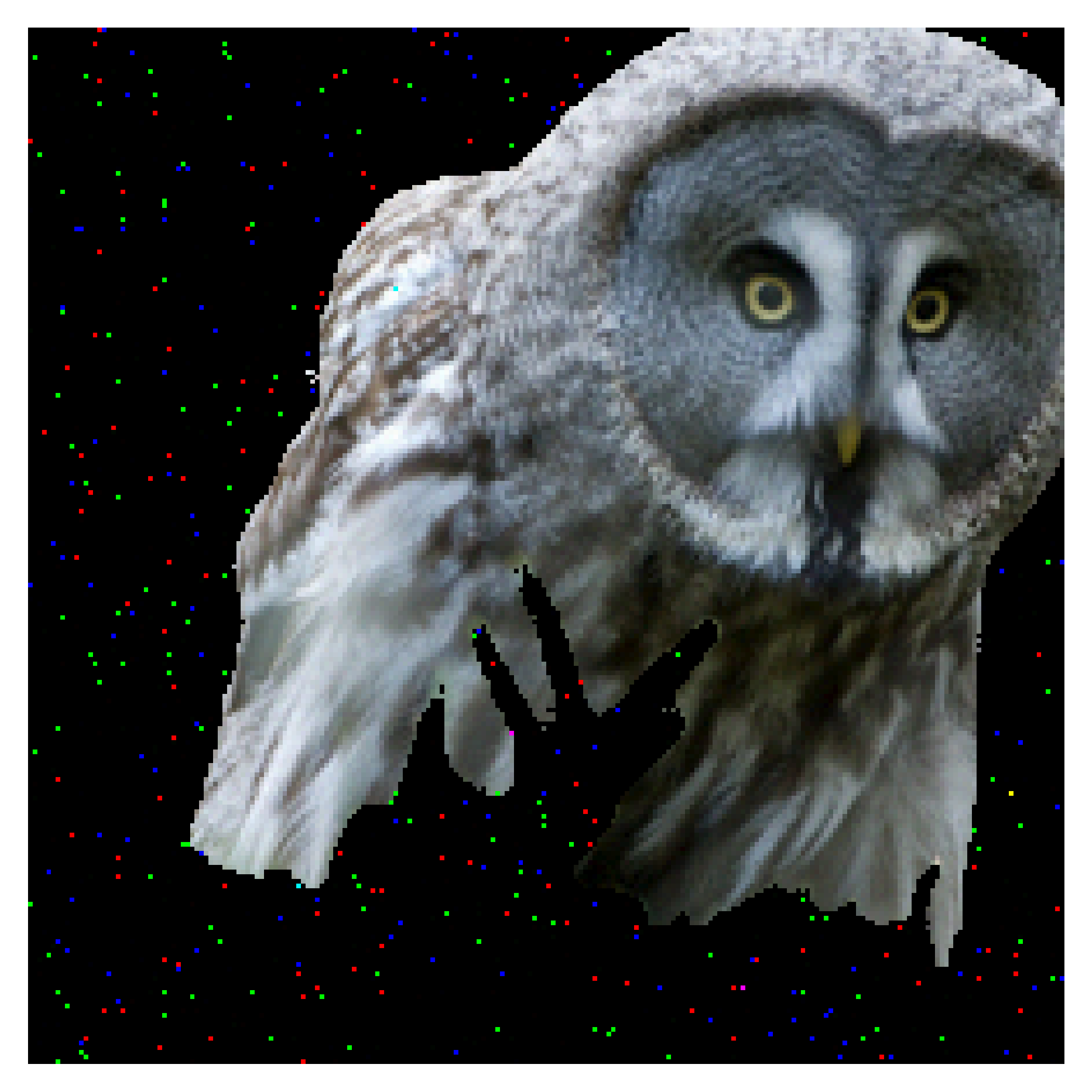}
}
\caption{Reconstructed samples generated by the proposed attack under different gradient protections: (a) the original sample; (b) masked samples; (c) gradients protected by clipping and perturbation; (d) no gradient protection; (e) gradients protected by clipping; (f) gradients protected by perturbation.}
\label{fig-evaluation-one-different-protection}
\end{figure}

We also find that protecting gradients by clipping or perturbation alone has little effect on the proposed attack. Gradients are clipped by $\nabla_{\omega} L = \nabla_\omega L / \max \left(1 , \frac{\| \nabla_\omega L \|}{C} \right)$, and samples are reconstructed by $x= \nabla_{w_i} L \oslash \nabla_{b_i} L$ through gradients of unit $i$ in the separation layer of the inference structure. Assume that $\max \left(1 , \frac{\| \nabla_\omega L \|}{C} \right) = \Delta$, after gradients clipping, we have $\left(\frac{\nabla_{w_i} L}{\Delta}\right) \oslash \left(\frac{\nabla_{b_i} L}{\Delta}\right) = \nabla_{w_i} L \oslash \nabla_{b_i} L = x.$ Clipping gradients without perturbation cannot prevent the proposed attack from reconstructing victims' samples. Adding noise to gradients without clipping also does little to defend against the proposed attack because the norm of gradients in the inference structure is significantly larger than the noise. Figure~\ref{fig-evaluation-norm-comparison} compares the average $\ell2$ norms and the absolute values of weight layer gradients under different conditions when the batch size is 16. The norm of the gradients in the inference structure is greater than the norm of the gradients generated by the regular model, and gradients in the proposed attack should be clipped in most cases. Figure~\ref{fig-evaluation-abs-value} compares the absolute values of weight layer gradients with and without clipping. The perturbation is too small relative to the gradients without clipping, which has little effect on defending the proposed attack.

\begin{figure}
\centering
\subfigure[$\ell2$ norm of gradients]{
    \centering
    \includegraphics*[width=1.63in]{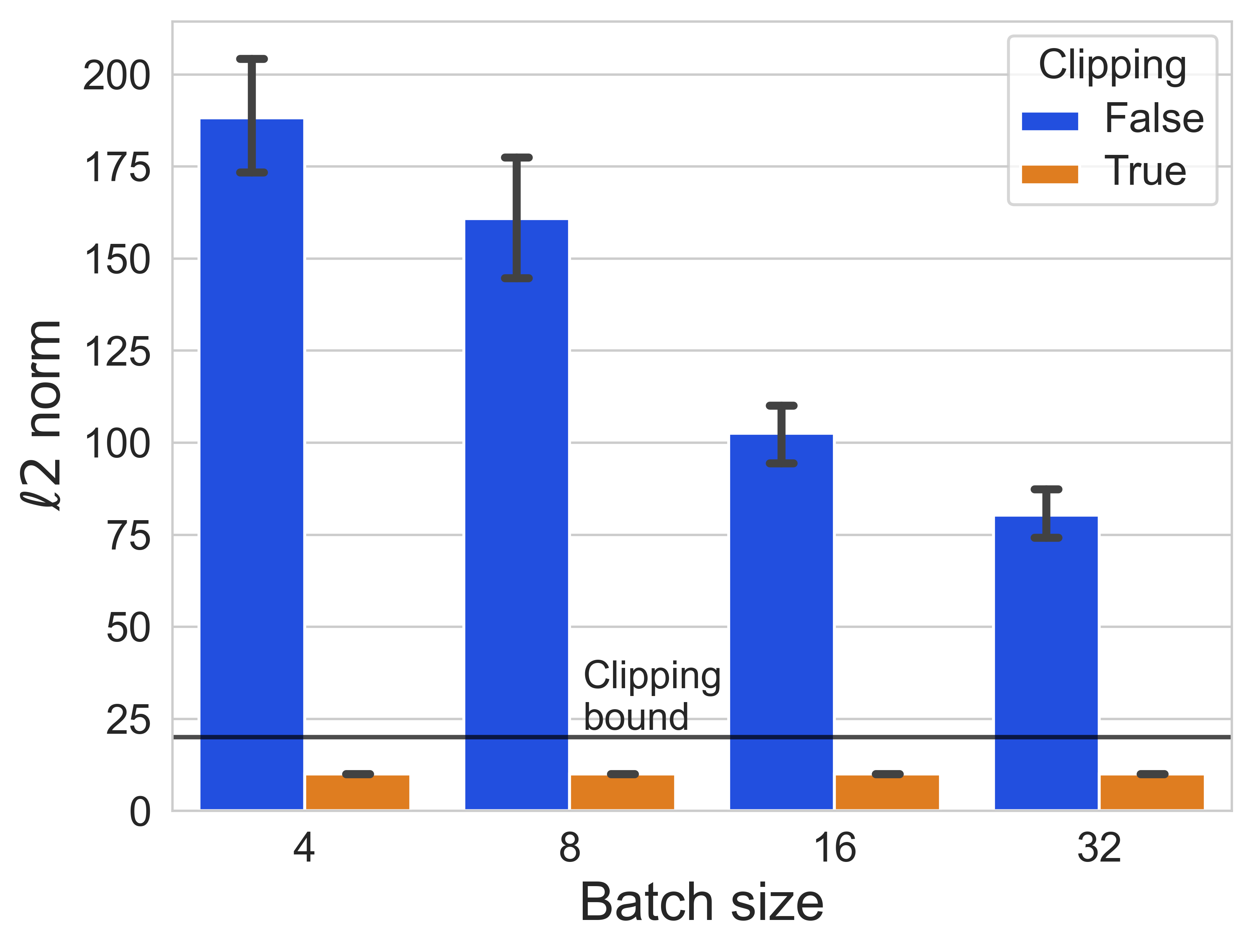}
}\hspace{-0.1in}
\subfigure[Absolute value of gradients]{
    \centering
    \includegraphics*[width=1.63in]{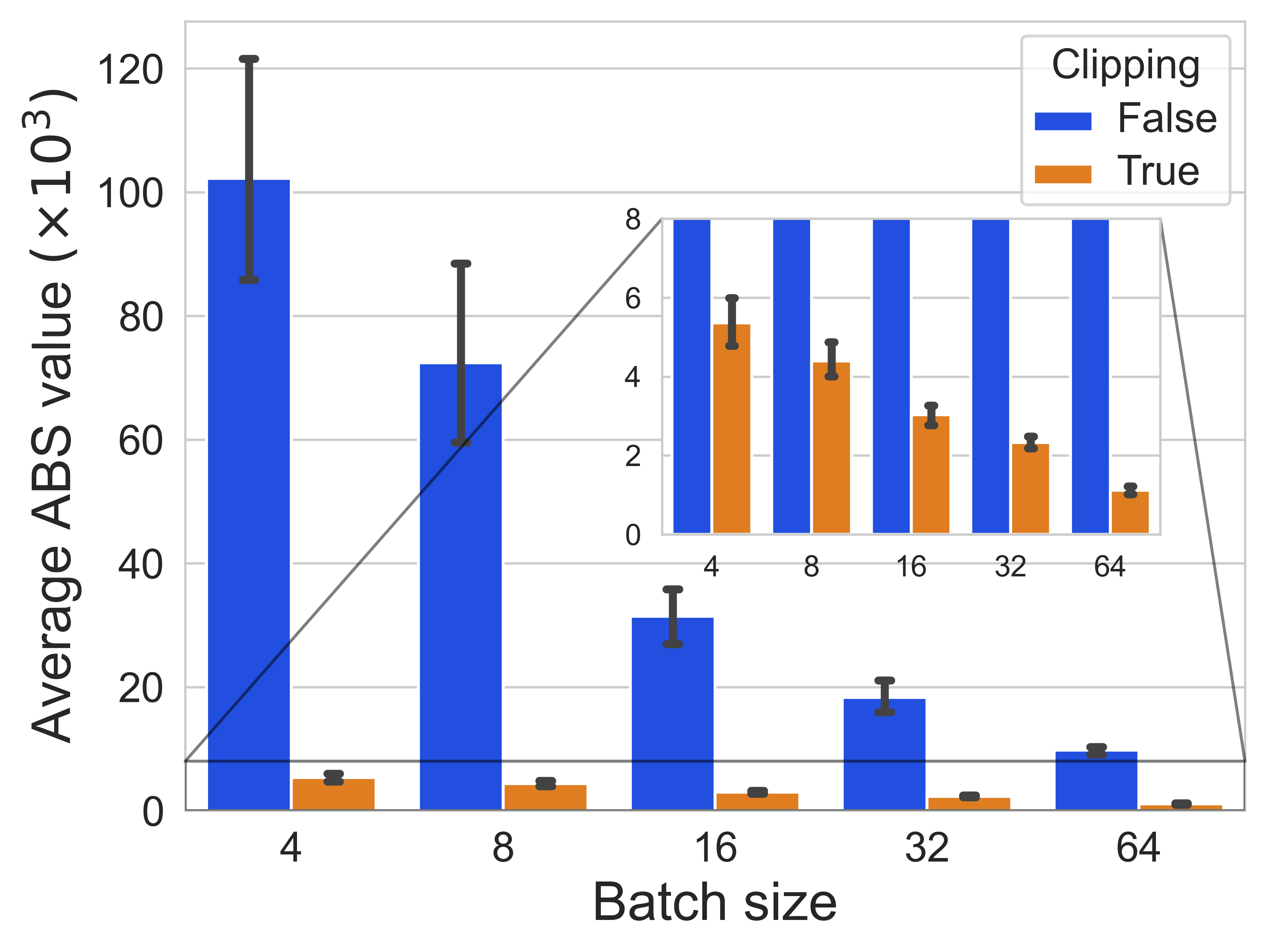}
    \label{fig-evaluation-abs-value}
}
\caption{The $\ell2$ norms and absolute values of gradients under different conditions.}
\label{fig-evaluation-norm-comparison}
\end{figure}

\subsection{Performance Factors}
\label{section-evaluation-factors}

We discuss and analyze several factors that affect the performance of the proposed attack.
First, we focus on the impact of user training models with different batch sizes on the quality of reconstructed samples. Figure~\ref{fig-evaluation-performance-batch-size} provides PSNR and CW-SSIM of reconstructed samples under different batch sizes where the number of the weight layer units in the inference structure is 2048. A larger batch size means that users' gradients contain more sample information. However, gradients are clipped by the clipping bound so that the $\ell2$ norm of gradients does exceed the clipping bound. Therefore, a larger batch size reduces each sample's information in users' gradients, increasing the difficulty of reconstructing samples and reducing the quality. The evaluation shows that the proposed attack reconstructs samples with high quality when the batch size is small, a typical result of existing reconstruction attacks. Choosing a larger batch size in FL is a straightforward and effective way to defend against reconstruction attacks.

\begin{figure}
\centering
\subfigure[PSNR]{
    \centering
    \includegraphics*[width=1.63in]{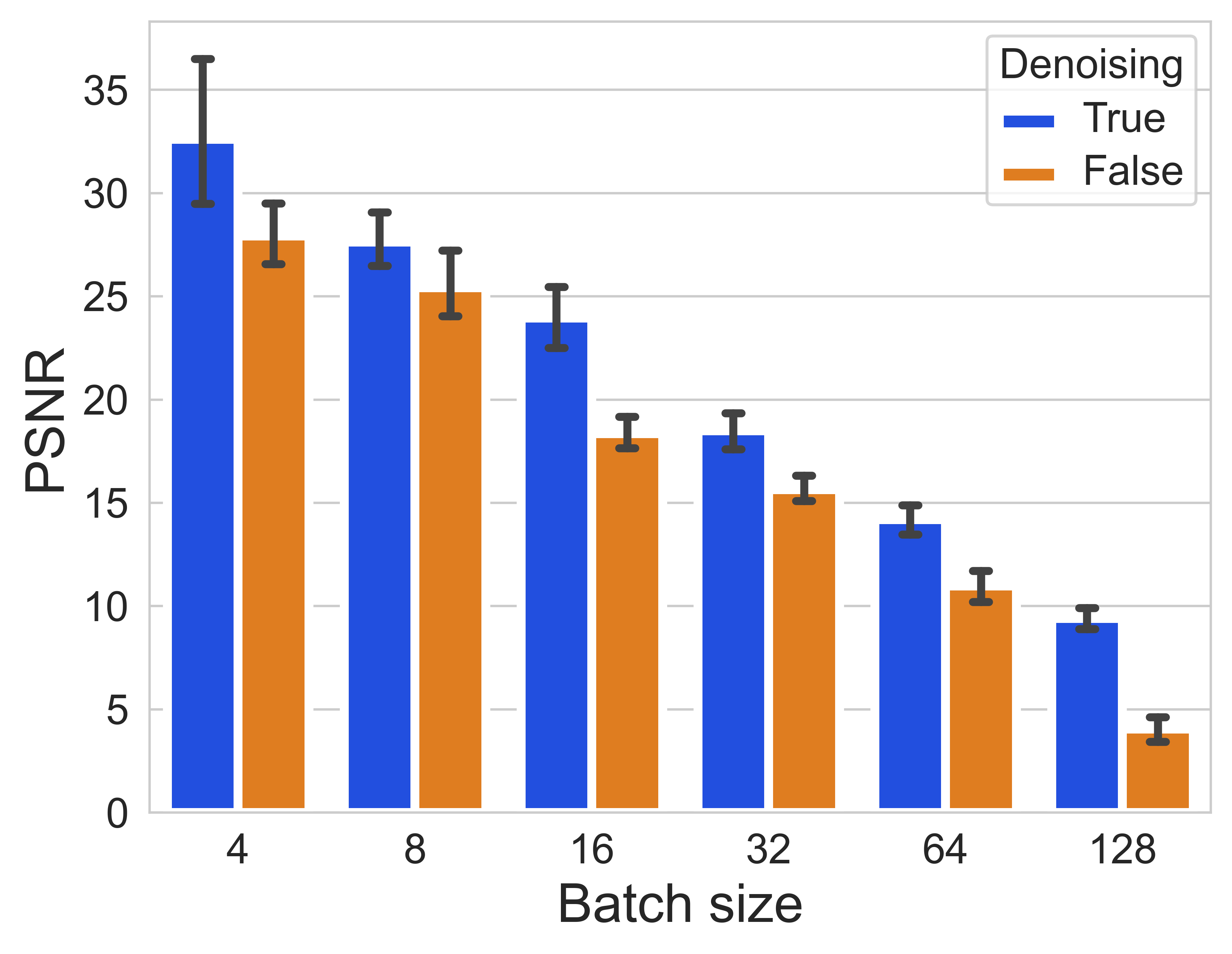}
}\hspace{-0.1in}
\subfigure[CW-SSIM]{
    \centering
    \includegraphics*[width=1.63in]{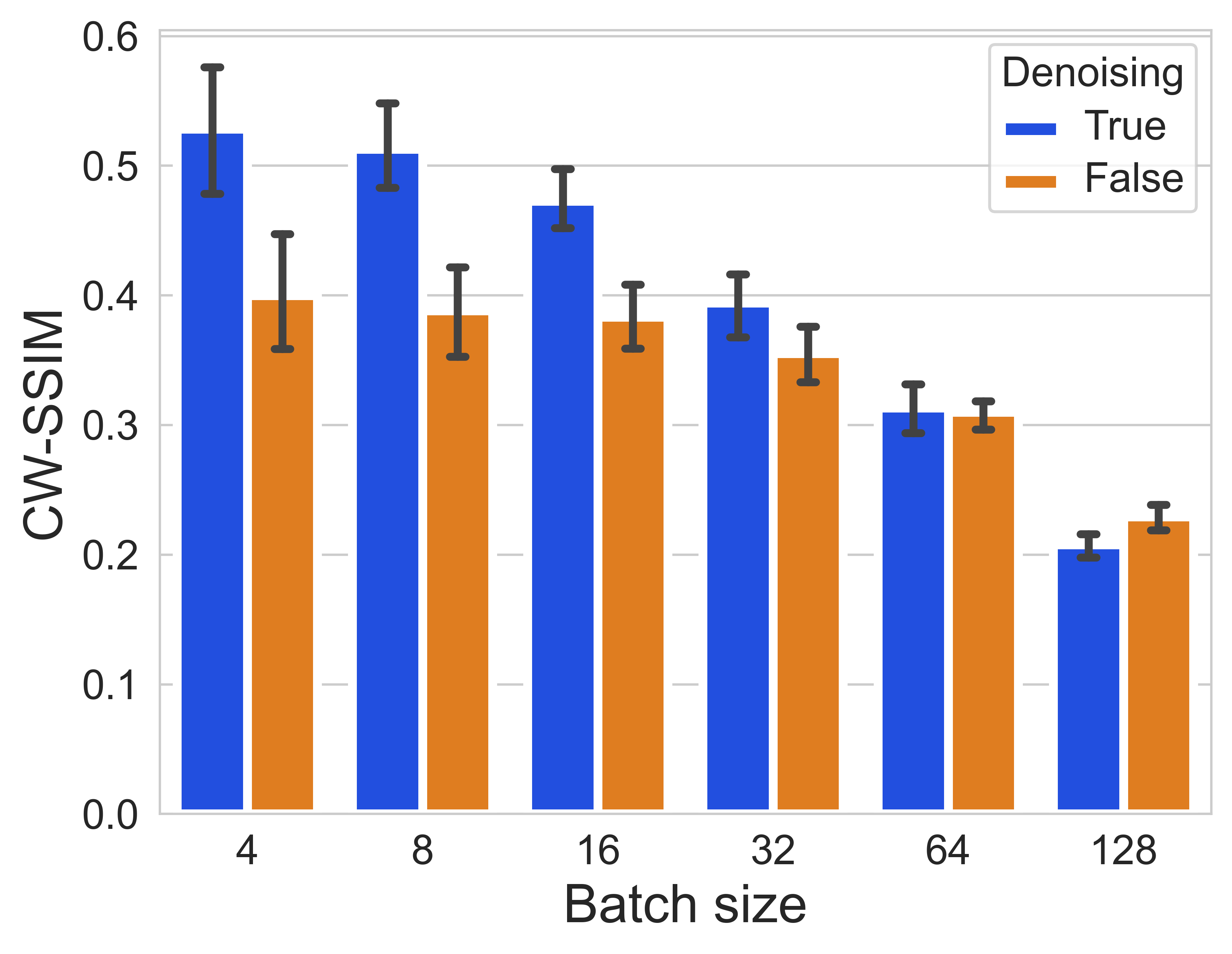}
}
\caption{Impact of batch size on reconstructed image quality.}
\label{fig-evaluation-performance-batch-size}
\end{figure}

We next discuss the impact of the number of the weight layer units in the inference structure on the proposed attack. Figure~\ref{fig-evaluation-performance-bin-num} shows that the number of weight layer units has little impact on the quality of reconstructed images but a significant impact on gradient separation. The separation ratio refers to the proportion of separated reconstructed images, rather than overlapped images (such as Fig.~\ref{fig-attack-example-multiples-sum}), in a batch of training samples. Increasing the number of units is an effective way to avoid overlapped images without considering the gradient expansion. Figure~\ref{fig-evaluation-performance-bin-num-norm} provides the $\ell2$ norms and absolute values of the weight layer gradients under different numbers of units. As given in Theorem~\ref{theorem-two-performance-related}, the $\ell2$ norm of gradients does not increase with the number of units because the number of units with non-zero gradients is not greater than the batch size. The number of batch sizes has the most significant impact on the absolute value of gradients. Given a clipping bound, a larger batch size means more unit has non-zero gradients, which reduces the absolute values of gradients in each unit.

\begin{figure}
\centering
\subfigure[Batch size = 16]{
    \centering
    \includegraphics*[width=1.63in]{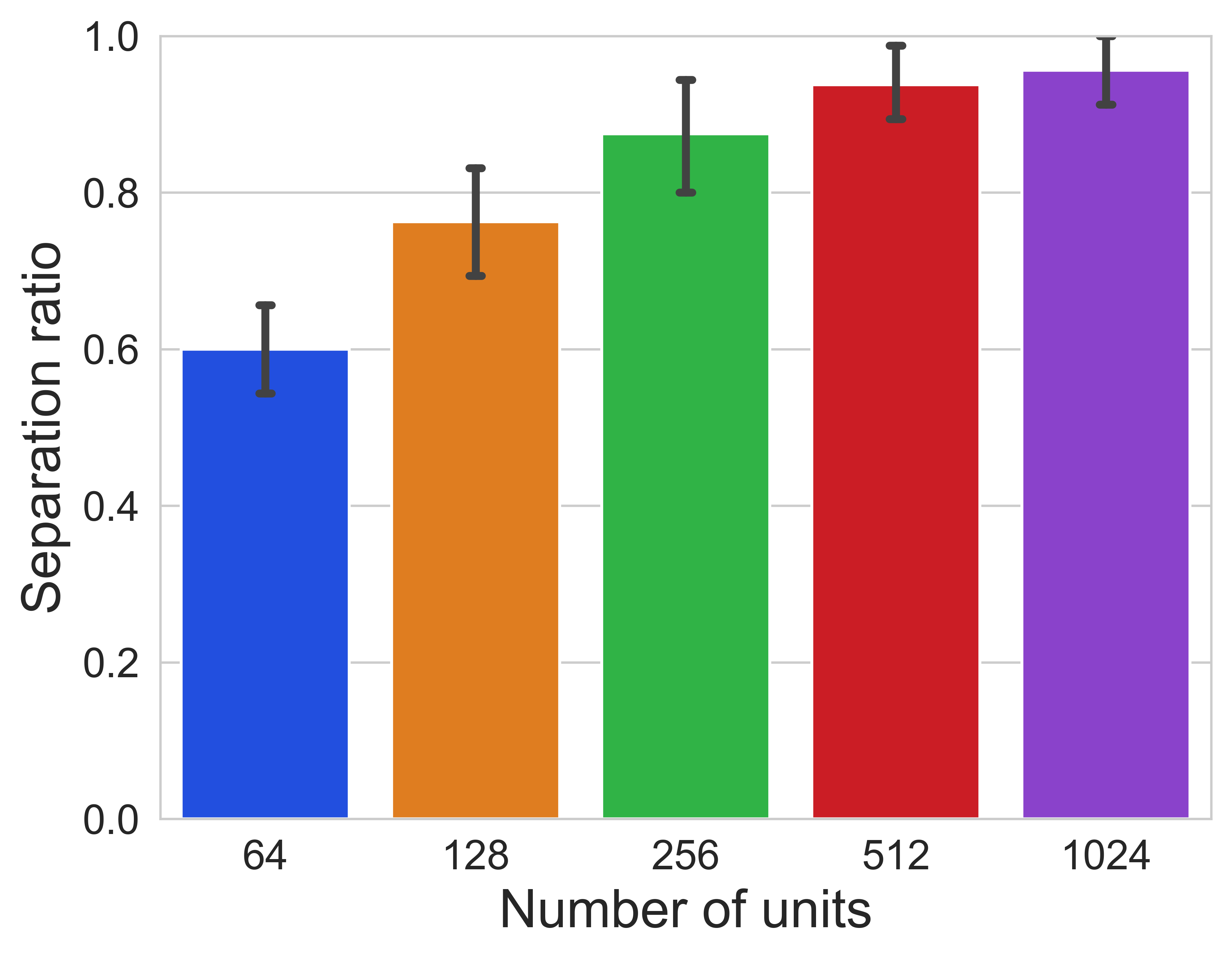}
}\hspace{-0.1in}
\subfigure[Batch size = 64]{
    \centering
    \includegraphics*[width=1.63in]{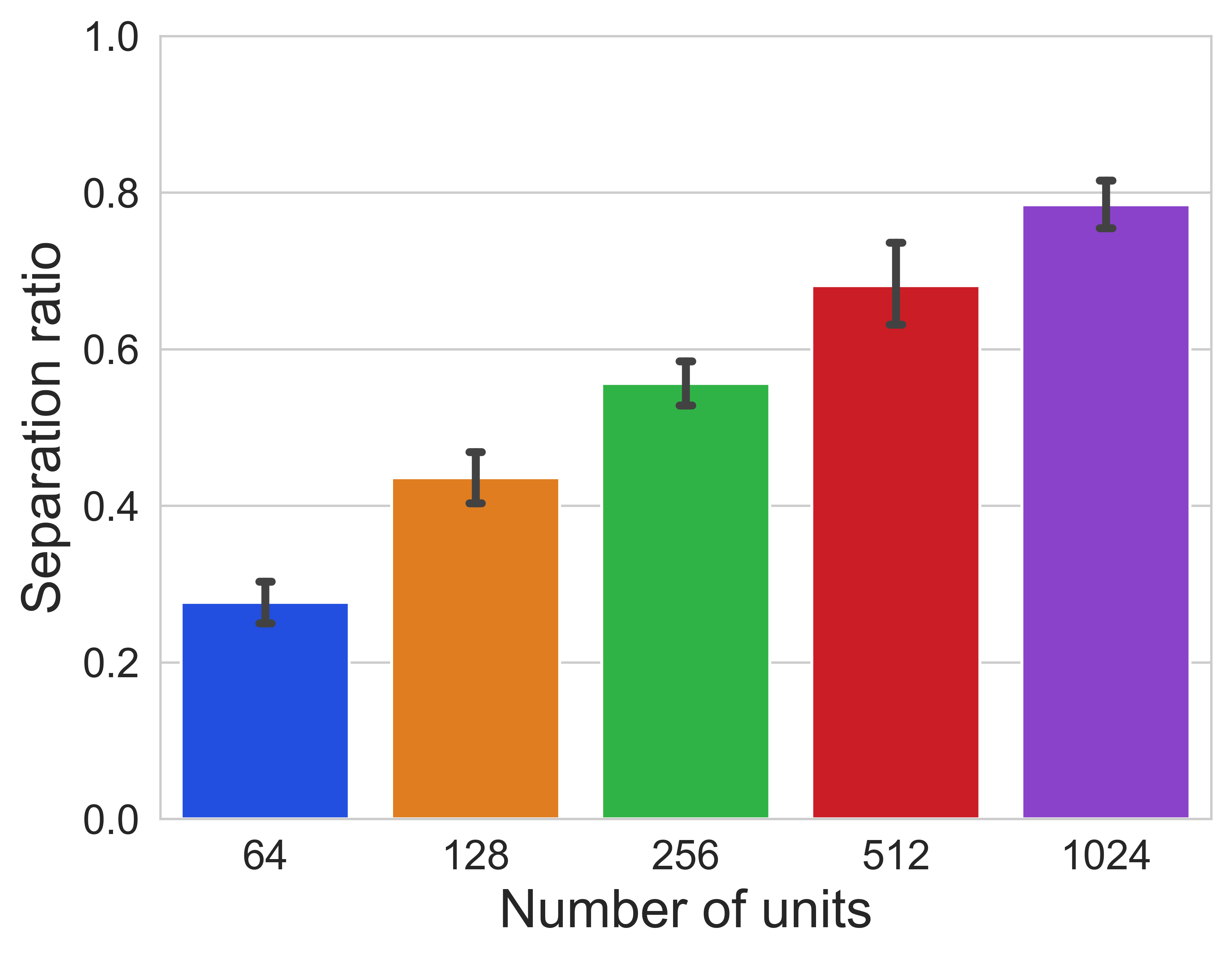}
}
\caption{The ratio of separated reconstructed samples in a batch of training samples.}
\label{fig-evaluation-performance-bin-num}
\end{figure}

\begin{figure}
\centering
\subfigure[$\ell2$ norm of gradients]{
    \centering
    \includegraphics*[width=1.63in]{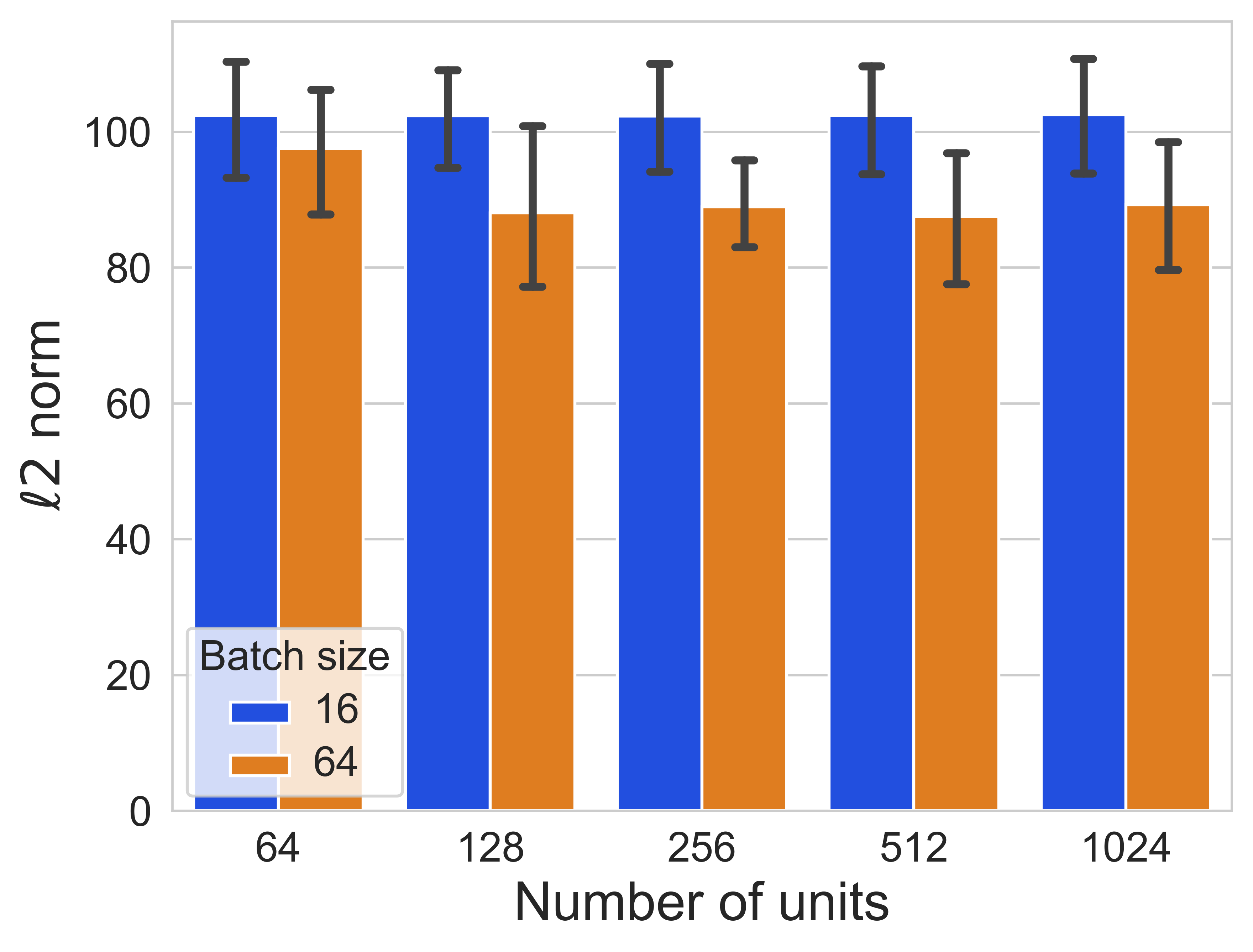}
}\hspace{-0.1in}
\subfigure[Absolute value of gradients]{
    \centering
    \includegraphics*[width=1.63in]{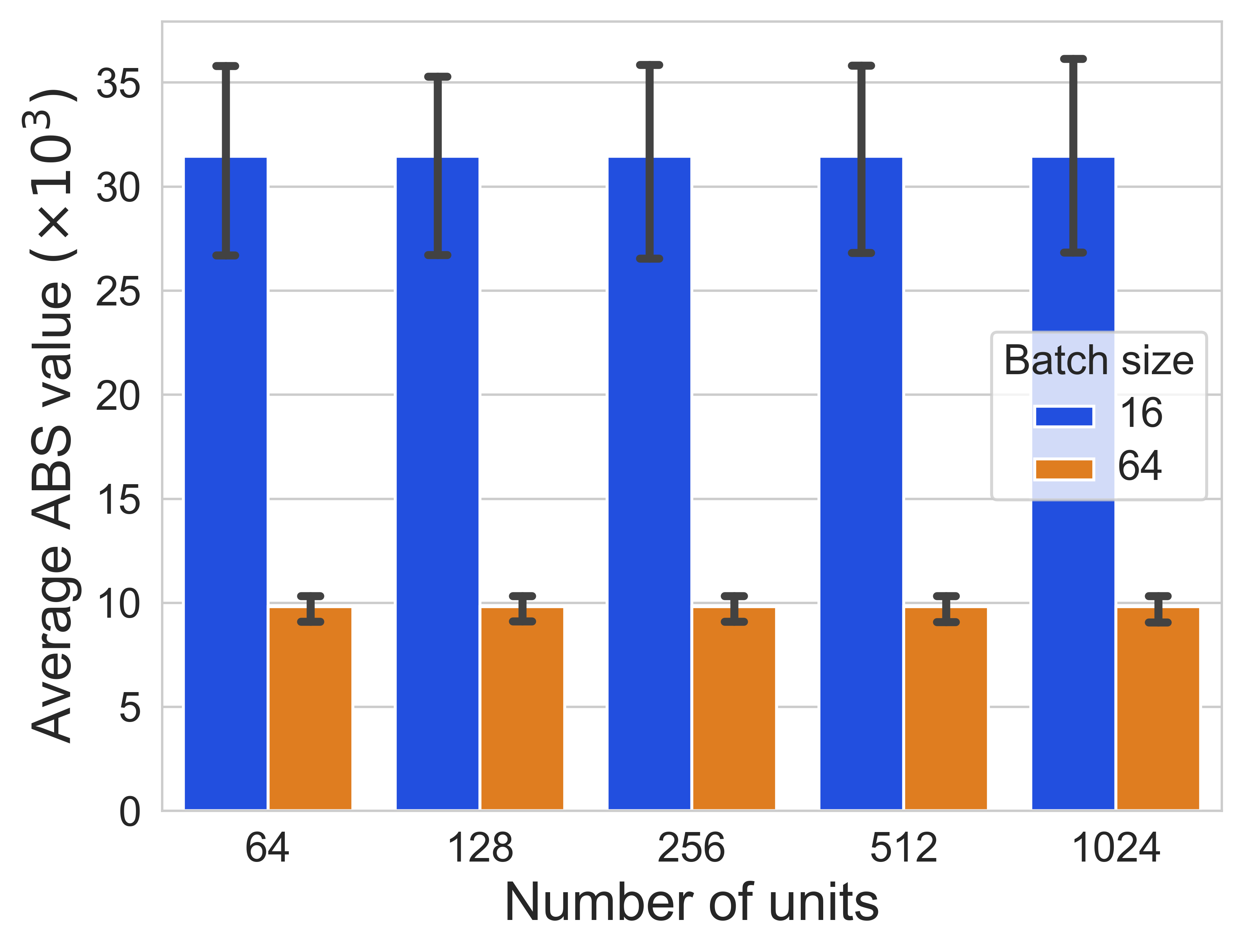}
}
\caption{The $\ell2$ norms and absolute values of gradients without clipping under different numbers of the weight layer units in the inference structure.}
\label{fig-evaluation-performance-bin-num-norm}
\end{figure}

\subsection{Privacy Parameter Setting}
\label{section-evaluation-privacy-setting}

We discuss the impact of LDP privacy parameters on the performance of the proposed attack. The most direct impact on performance is the privacy parameter $\varepsilon$. As shown in Section~\ref{section-evaluation-setup}, the scale of the Gaussian noise for gradient perturbation depends on $\varepsilon$. A larger $\varepsilon$ means a smaller noise scale and less perturbation noise in general. Thus, as given in Fig.~\ref{fig-evaluation-performance-epsilon}, a smaller $\varepsilon$ provides better protection for gradients and reduces reconstructed sample quality.

\begin{figure}
\centering
\subfigure[PSNR]{
    \centering
    \includegraphics*[width=1.63in]{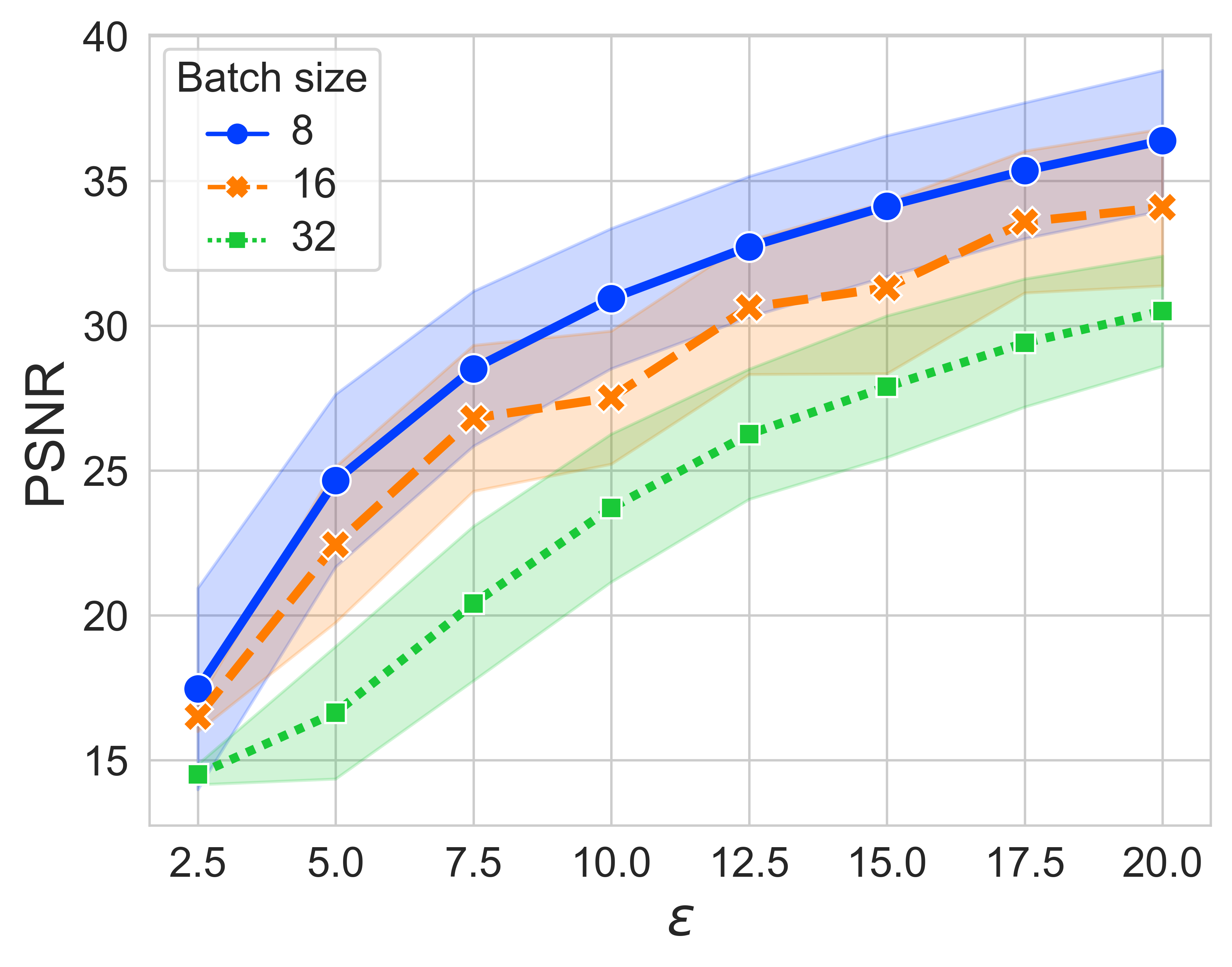}
}\hspace{-0.1in}
\subfigure[CW-SSIM]{
    \centering
    \includegraphics*[width=1.63in]{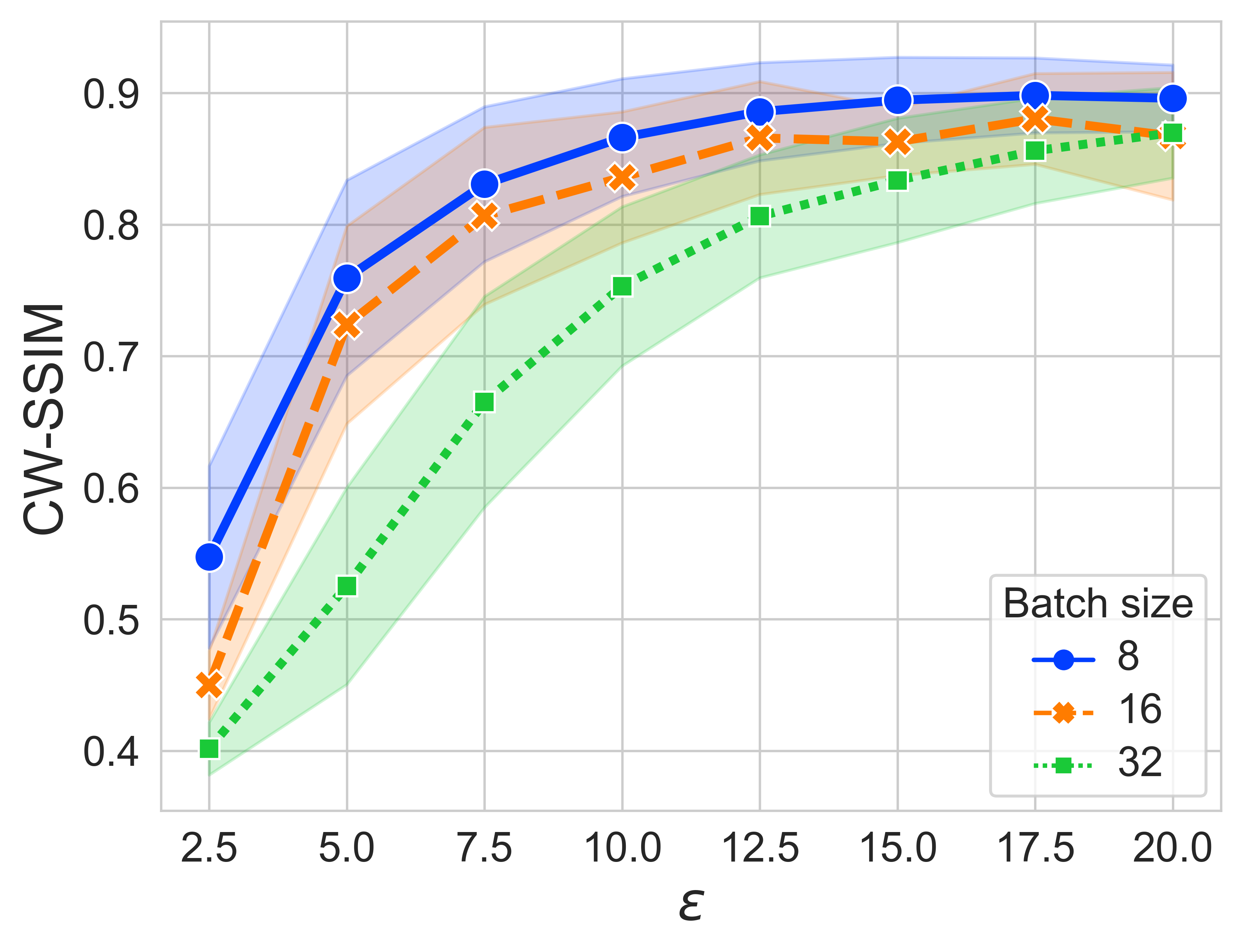}
}
\caption{The quality of reconstructed samples under various $\varepsilon$.}
\label{fig-evaluation-performance-epsilon}
\end{figure}

\begin{figure}
\centering
\subfigure[PSNR]{
    \centering
    \includegraphics*[width=1.63in]{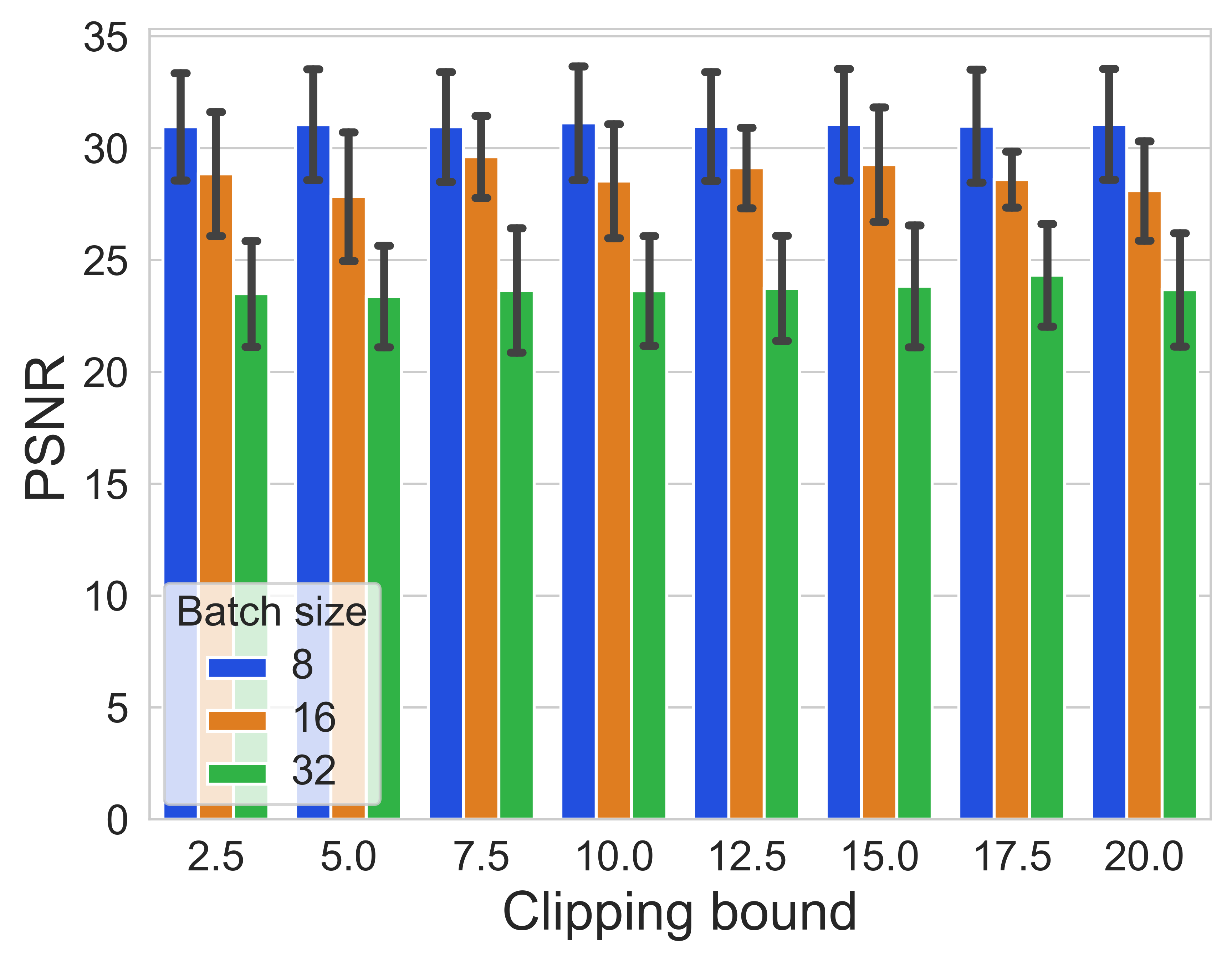}
}\hspace{-0.1in}
\subfigure[CW-SSIM]{
    \centering
    \includegraphics*[width=1.63in]{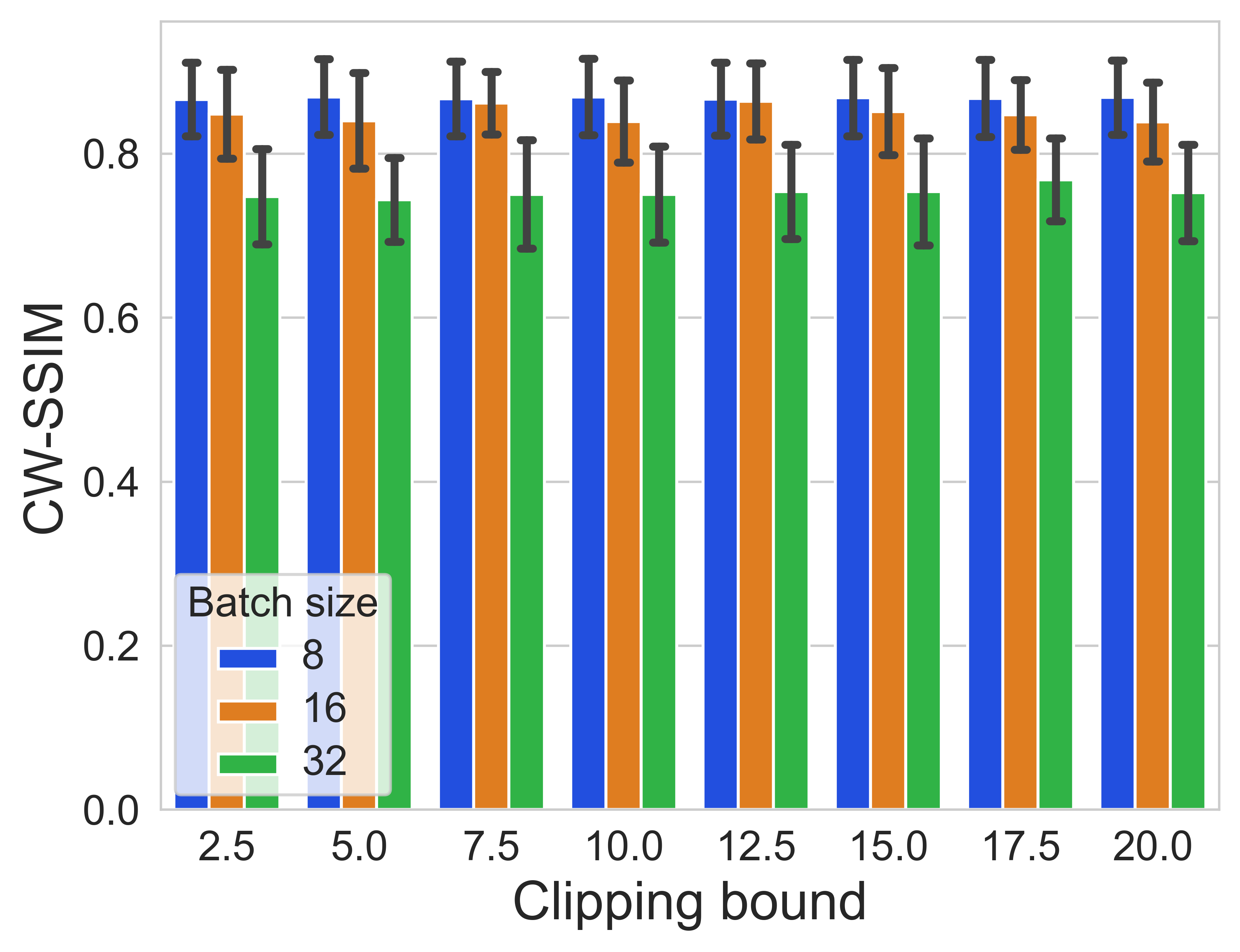}
}
\caption{The quality of reconstructed samples under various clipping bounds.}
\label{fig-evaluation-performance-clipping-bound}
\end{figure}

We next discuss the impact of clipping bounds on the proposed attack with Fig.~\ref{fig-evaluation-performance-clipping-bound}. Clipping bounds represent the compression degree of users' gradients. However, we find that a larger clipping bound does not improve the quality of reconstructed samples. The reason is that the scale of perturbation noise also depends on clipping bounds, and a larger clipping bound means more noise in users' gradients. Therefore, increasing clipping bounds has little effect on improving reconstructed quality.

\subsection{Impact of the Proposed Attack on FL Training}
\label{section-simulation-model-accuracy}

In this part, we explore the impact of the proposed attack on FL training. We first show the difference in the aggregated target model gradients between the two cases of whether or not the proposed attack is implemented. Then, we compared the accuracy of the final global model under two conditions to illustrate that the proposed attack has almost no impact on the performance of the target model.

Although the proposed attack only requires one training round of gradients to implement, we attack different victims in each round since a user's gradients in FL training have a limited impact on the final model. For example, when there are 50 users in each FL training round, and the number of training rounds is 10K, the server obtains a total of 500K users' gradients. The proposed attack on a single user only removes one gradient from 500K gradients. Therefore, we carry out the proposed attack in each training round, and the number of victims in the above example is 10K.

\begin{table}
\caption{Difference proportion between gradients with and without implementing the proposed attack under various users.}
\centering
\label{table-attack-impact}
\begin{tblr}{columns = {c, m}, column{2-Z}={0.8cm}, hlines={1pt}, hline{1,Z}={2pt}, vline{2}={1pt}}
\diagbox{Range}{Users} & 100 & 50 & 10 & 5 & 2\\
$(-\infty, 10^{-6})$ & 36\% & 36\% & 32\% & 19\% & 17\%\\
$[10^{-6}, 10^{-5})$ & 23\% & 22\% & 17\% & 15\% & 7\%\\
$[10^{-5}, 10^{-4})$ & 24\% & 23\% & 24\% & 23\% & 17\%\\
$[10^{-4}, 10^{-3})$ & 16\% & 18\% & 22\% & 25\% & 23\%\\
$[10^{-3}, 10^{-2})$ & 1\% & 1\% & 5\% & 18\% & 26\%\\
$[10^{-2}, +\infty)$ & 0\% & 0\% & 0\% & 0\% & 10\%\\
\end{tblr}
\end{table}

Table~\ref{table-attack-impact} compares the difference between aggregated gradients of the target model with and without implementing the proposed attack under various training users in each round. In the case of extremely few users (number of users is 2), whether or not the attack is implemented significantly impacts the aggregation gradient. In this case, the proposed attack causes aggregated gradients to be determined by only one user (the only non-target user); otherwise, the aggregate gradients are the average of two users. However, as the number of users increases, the gradient difference caused by the proposed attack becomes smaller. The number of training users can easily exceed 100 in FL, and the proposed attack has almost no effect on the aggregated gradient.

The same phenomenon also appears in the model accuracy. 
We set the number of training users to 50 and calculate the top-1 accuracy (Acc@1) and top-5 accuracy (Acc@5) of various convergent target models (\cite{liu2022convnet,huang2018densely,tan2020efficientnet,szegedy2014going}) on the ImageNet~\cite{ILSVRC15}, while the results are summarized in Table $\mathrm{IX}$ of the Appendix A. 
Training users' samples are random samples from the ImageNet training set. Even without perturbing gradients, the proposed inference structure has little impact on the performance of the target model when the number of training rounds is sufficient. When considering the noise in the gradient, the performance difference is more challenging to detect since LDP causes a certain degree of performance loss.

\section{Discussion}
\subsection{Attack Consumption}

The proposed attack only requires users to upload the gradient once, and samples can be reconstructed through gradients in a single training round. In other words, users do not repeatedly upload multiple rounds of gradients, which increases the stealth of the proposed attack. Although training users are randomly selected in each training round, the proposed attack can reconstruct victims' samples through one-round gradients. In addition, experiments in Section~\ref{section-simulation-model-accuracy} show that the proposed attack has little effect on the training and accuracy of the model when there is only one victim per training round. The size of gradients is related to model complexity. For a global model with parameters $\omega$, the size of local gradients is also $\left|\omega\right|$. Since the victim only uploads one round of gradients in the proposed attack, the communication consumption of the proposed attack is $\mathcal{O}(\left|\omega\right|)$.

We discuss the computational cost of the proposed attack by providing the time complexity of each step in Section~\ref{section-all-in-one-implementation}. Without loss of generality, we consider the batch size to be $B$ and the size of samples to be $S$. The number of units in weight, bias, and metrics layers in the inference structure are $N_w$, $N_b$, and $N_m$, respectively. Step~(1) infers the privacy parameter from the mean and variance of weight gradients in the separation layer, and the complexity is $\mathcal{O}(N_w S)$. Steps~(2) and~(3) calculate the bias gradient by mean and perform raw reconstruction by element-wised division, which are of complexity $\mathcal{O}(N_w N_b)$ and $\mathcal{O}(N_w S)$, respectively. Step~(4) classifies the metrics according to their position, which is an in-place operation. Step~(5) averages and reorders the metrics, and its complexity is $\mathcal{O}(n_m B) + \mathcal{O}(B \log B)$. Step~(6) selects valid results by reconstructed reverse units, and its complexity is $\mathcal{O}(B)$. Step~(7) is an optimization problem involving  $BS$ variables. The essence of step~(8) is the comparison and assignment operation, and its complexity is $\mathcal{O}(BS)$.

\begin{table}
\caption{Running time (s) of the proposed attack under various batch sizes ($B$) and optimization rounds ($R$).}
\centering
\label{table-time}
\begin{tblr}{column{1} = {c, m}, column{2-6}={r,m}, row{1}={c,m},,column{2-Z}={1cm}, hlines={1pt}, hline{1,Z}={2pt}, vline{2}={1pt}}
\diagbox{$R$}{$B$} & 4 & 8 & 16 & 32 & 64\\
$0$ & 0.032 & 0.034 & 0.036 & 0.037 & 0.039\\
$500$ & 1.049 & 1.101 & 1.157 & 1.389 & 1.483\\
$1000$ & 1.871 & 1.994 &  2.038 & 2.552 & 2.767\\
$2000$ & 3.558 & 4.298 &  4.299 & 4.641 & 5.645\\
$4000$ & 7.289 & 8.433 &  9.075 & 9.330 & 11.275\\
\end{tblr}
\end{table}

\begin{table}
\caption{Power consumption (W) of the proposed attack under various batch sizes ($B$) and optimization rounds ($R$).}
\centering
\label{table-power}
\begin{tblr}{columns = {c, m}, column{2-6}={r,m}, row{1}={c,m}, column{2-Z}={1cm}, hlines={1pt}, hline{1,Z}={2pt}, vline{2}={1pt}}
\diagbox{$R$}{$B$} & 4 & 8 & 16 & 32 & 64\\
$0$ & 58.961 & 60.706 & 62.283 & 64.763 & 73.595\\
$500$ & 67.617 & 72.269 & 85.212 & 110.673 & 164.893\\
$1000$ & 68.403 & 75.083 &  90.305 & 123.168 & 195.431\\
$2000$ & 69.678 & 82.241 &  91.052 & 139.835 & 207.097\\
$4000$ & 73.760 & 84.106 &  94.220 & 140.651 & 214.178\\
\end{tblr}
\end{table}

TABLE~\ref{table-time} provides the running time for performing a single proposed attack through the victim's clipped and perturbed gradients. The computational consumption of the proposed attack is concentrated in metric-based sample optimization. Batch sizes also significantly impact running time since the increase in training samples increases the number of optimization variables and expands the solution space. Besides, Table~\ref{table-power} records the average powers of the GPU when performing the proposed attack. Logging of power consumption data is implemented by PyTorch.\footnote{\url{https://pytorch.org/docs/stable/generated/torch.cuda.power_draw.html}.} The number of optimization rounds has little effect on the average power, as the video memory consumption tends to stabilize during the optimization. The increase in optimization variables means more memory consumption. As a result, batch sizes also significantly impact the average power.

\subsection{Possible Defense and attack limitations}

Possible defenses against the proposed attacks include using dynamic clipping bounds and privacy parameters in local training, model malicious structure detection for users, and cryptography-based secure gradient aggregation (e.g., \cite{Nguyen2024Preserving,Song2023lsecnet}). In addition, we will consider attacks on model updates, design better optimization objectives, and expand to more data types in future work.
Due to the length limitation of the paper, we discuss possible defenses and limitations of the proposed attack in more detail in Appendix B.

\section{Conclusion}

This paper proposes a sample reconstruction attack against FL mechanisms with LDP, in which gradients are clipped and noisy. We briefly analyze the reason for the gradient expansion and the failure of the existing separation methods. Based on the above analysis, we design the proposed attack from two aspects: gradient separation without expansion and sample quality improvement against FL with LDP. We present a separation layer such that the gradients of each sample only exist in its reverse unit, which effectively separates gradients in FL with LDP without causing gradient expansion. Besides, the subjects of samples selected by SAM can further compress victims' gradients. For sample quality improvement, we infer the confidence interval of the noise by the artificially added zero gradients and filter the noise in the background of reconstructed samples. In addition, a metric-based optimization is proposed to improve the sample quality further. Theory and evaluations show that the proposed attack is the only reconstruction attack that effectively reconstructs victims' samples when gradients are clipped and noisy. Simulation results show that the proposed attack has little impact on FL training and model accuracy. Finally, we provide possible defenses and discuss the limitations and future works.

\bibliographystyle{ieeetr}
\bibliography{bibliography}

\begin{thebibliography}{10}

\bibitem{Kairouz2019open}
P.~Kairouz, H.~B. McMahan, B.~Avent, A.~Bellet, M.~Bennis, A.~N. Bhagoji, K.~A.
  Bonawitz, Z.~Charles, G.~Cormode, R.~Cummings, R.~G.~L. D'Oliveira, S.~E.
  Rouayheb, D.~Evans, J.~Gardner, Z.~Garrett, A.~Gasc{\'{o}}n, B.~Ghazi, P.~B.
  Gibbons, M.~Gruteser, Z.~Harchaoui, C.~He, L.~He, Z.~Huo, B.~Hutchinson,
  J.~Hsu, M.~Jaggi, T.~Javidi, G.~Joshi, M.~Khodak, J.~Kone{\v{c}}n{\'y},
  A.~Korolova, F.~Koushanfar, S.~Koyejo, T.~Lepoint, Y.~Liu, P.~Mittal,
  M.~Mohri, R.~Nock, A.~{\"{O}}zg{\"{u}}r, R.~Pagh, M.~Raykova, H.~Qi,
  D.~Ramage, R.~Raskar, D.~Song, W.~Song, S.~U. Stich, Z.~Sun, A.~T. Suresh,
  F.~Tram{\`{e}}r, P.~Vepakomma, J.~Wang, L.~Xiong, Z.~Xu, Q.~Yang, F.~X. Yu,
  H.~Yu, and S.~Zhao, ``Advances and open problems in federated learning,''
  {\em CoRR}, vol.~abs/1912.04977, 2019.

\bibitem{mu2024feddmc}
X.~Mu, K.~Cheng, Y.~Shen, X.~Li, Z.~Chang, T.~Zhang, and X.~Ma, ``Feddmc:
  Efficient and robust federated learning via detecting malicious clients,''
  {\em IEEE Transactions on Dependable and Secure Computing}, 2024.

\bibitem{mu2023fedproc}
X.~Mu, Y.~Shen, K.~Cheng, X.~Geng, J.~Fu, T.~Zhang, and Z.~Zhang, ``Fedproc:
  Prototypical contrastive federated learning on non-iid data,'' {\em Future
  Generation Computer Systems}, vol.~143, pp.~93--104, 2023.

\bibitem{GBoard2017}
B.~McMahan and D.~Ramage, ``Federated learning: Collaborative machine learning
  without centralized training data,'' 2017.

\bibitem{apple2021}
M.~Paulik, M.~Seigel, H.~Mason, D.~Telaar, J.~Kluivers, R.~van Dalen, C.~W.
  Lau, L.~Carlson, F.~Granqvist, C.~Vandevelde, S.~Agarwal, J.~Freudiger,
  A.~Byde, A.~Bhowmick, G.~Kapoor, S.~Beaumont, A.~Cahill, D.~Hughes,
  O.~Javidbakht, F.~Dong, R.~Rishi, and S.~Hung, ``Federated evaluation and
  tuning for on-device personalization: System design \& and applications,''
  2021.

\bibitem{FATE2021}
Y.~Liu, T.~Fan, T.~Chen, Q.~Xu, and Q.~Yang, ``Fate: An industrial grade
  platform for collaborative learning with data protection,'' {\em Journal of
  Machine Learning Research}, vol.~22, no.~226, pp.~1--6, 2021.

\bibitem{Zhu2020DLG}
L.~Zhu, Z.~Liu, and S.~Han, ``Deep leakage from gradients,'' in {\em Advances
  in Neural Information Processing Systems} (H.~Wallach, H.~Larochelle,
  A.~Beygelzimer, F.~d\textquotesingle Alch\'{e}-Buc, E.~Fox, and R.~Garnett,
  eds.), vol.~32, Curran Associates, Inc., 2019.

\bibitem{Yang2023compress}
H.~Yang, M.~Ge, K.~Xiang, and J.~Li, ``Using highly compressed gradients in
  federated learning for data reconstruction attacks,'' {\em IEEE Transactions
  on Information Forensics and Security}, vol.~18, pp.~818--830, 2023.

\bibitem{Yin2021see}
H.~Yin, A.~Mallya, A.~Vahdat, J.~M. Alvarez, J.~Kautz, and P.~Molchanov, ``See
  through gradients: Image batch recovery via gradinversion,'' in {\em
  Proceedings of the IEEE/CVF Conference on Computer Vision and Pattern
  Recognition}, pp.~16337--16346, 2021.

\bibitem{fowl2022robbing}
L.~H. Fowl, J.~Geiping, W.~Czaja, M.~Goldblum, and T.~Goldstein, ``Robbing the
  fed: Directly obtaining private data in federated learning with modified
  models,'' in {\em International Conference on Learning Representations},
  2022.

\bibitem{Boenisch2021When}
F.~Boenisch, A.~Dziedzic, R.~Schuster, A.~S. Shamsabadi, I.~Shumailov, and
  N.~Papernot, ``When the curious abandon honesty: Federated learning is not
  private,'' 2021.

\bibitem{Wei2020Framework}
W.~Wei, L.~Liu, M.~Loper, K.-H. Chow, M.~E. Gursoy, S.~Truex, and Y.~Wu, ``A
  framework for evaluating gradient leakage attacks in federated learning,''
  2020.

\bibitem{Pasquini2022secure}
D.~Pasquini, D.~Francati, and G.~Ateniese, ``Eluding secure aggregation in
  federated learning via model inconsistency,'' in {\em Proceedings of the 2022
  ACM SIGSAC Conference on Computer and Communications Security}, CCS '22, (New
  York, NY, USA), p.~2429–2443, Association for Computing Machinery, 2022.

\bibitem{yin2020dreaming}
H.~Yin, P.~Molchanov, J.~M. Alvarez, Z.~Li, A.~Mallya, D.~Hoiem, N.~K. Jha, and
  J.~Kautz, ``Dreaming to distill: Data-free knowledge transfer via
  deepinversion,'' in {\em Proceedings of the IEEE/CVF Conference on Computer
  Vision and Pattern Recognition}, pp.~8715--8724, 2020.

\bibitem{Boenisch2023Reconstructing}
F.~Boenisch, A.~Dziedzic, R.~Schuster, A.~S. Shamsabadi, I.~Shumailov, and
  N.~Papernot, ``Reconstructing individual data points in federated learning
  hardened with differential privacy and secure aggregation,'' in {\em 2023
  IEEE 8th European Symposium on Security and Privacy}, (Los Alamitos, CA,
  USA), pp.~241--257, IEEE Computer Society, jul 2023.

\bibitem{jeon2021gradient}
J.~Jeon, K.~Lee, S.~Oh, J.~Ok, {\em et~al.}, ``Gradient inversion with
  generative image prior,'' {\em Advances in Neural Information Processing
  Systems}, vol.~34, pp.~29898--29908, 2021.

\bibitem{Hatamizadeh2022GradViT}
A.~Hatamizadeh, H.~Yin, H.~R. Roth, W.~Li, J.~Kautz, D.~Xu, and P.~Molchanov,
  ``Gradvit: Gradient inversion of vision transformers,'' in {\em Proceedings
  of the IEEE/CVF Conference on Computer Vision and Pattern Recognition
  (CVPR)}, pp.~10021--10030, June 2022.

\bibitem{Song2020Analyzing}
M.~Song, Z.~Wang, Z.~Zhang, Y.~Song, Q.~Wang, J.~Ren, and H.~Qi, ``Analyzing
  user-level privacy attack against federated learning,'' {\em IEEE Journal on
  Selected Areas in Communications}, vol.~38, no.~10, pp.~2430--2444, 2020.

\bibitem{jin2021cafe}
X.~Jin, P.-Y. Chen, C.-Y. Hsu, C.-M. Yu, and T.~Chen, ``Cafe: Catastrophic data
  leakage in vertical federated learning,'' {\em Advances in Neural Information
  Processing Systems}, vol.~34, pp.~994--1006, 2021.

\bibitem{zhu2021rgap}
J.~Zhu and M.~Blaschko, ``R-gap: Recursive gradient attack on privacy,'' 2021.

\bibitem{Kang2020FLwithDP}
K.~Wei, J.~Li, M.~Ding, C.~Ma, H.~H. Yang, F.~Farokhi, S.~Jin, T.~Q.~S. Quek,
  and H.~Vincent~Poor, ``Federated learning with differential privacy:
  Algorithms and performance analysis,'' {\em IEEE Transactions on Information
  Forensics and Security}, vol.~15, pp.~3454--3469, 2020.

\bibitem{Zhou2022Differentially}
J.~Zhou, N.~Wu, Y.~Wang, S.~Gu, Z.~Cao, X.~Dong, and K.-K.~R. Choo, ``A
  differentially private federated learning model against poisoning attacks in
  edge computing,'' {\em IEEE Transactions on Dependable and Secure Computing},
  pp.~1--1, 2022.

\bibitem{Rui2020Personalized}
R.~Hu, Y.~Guo, H.~Li, Q.~Pei, and Y.~Gong, ``Personalized federated learning
  with differential privacy,'' {\em IEEE Internet of Things Journal}, vol.~7,
  no.~10, pp.~9530--9539, 2020.

\bibitem{Stevens2022Efficient}
T.~Stevens, C.~Skalka, C.~Vincent, J.~Ring, S.~Clark, and J.~Near, ``Efficient
  differentially private secure aggregation for federated learning via hardness
  of learning with errors,'' in {\em 31st USENIX Security Symposium (USENIX
  Security 22)}, (Boston, MA), pp.~1379--1395, USENIX Association, Aug. 2022.

\bibitem{Pan2022Exploring}
X.~Pan, M.~Zhang, Y.~Yan, J.~Zhu, and Z.~Yang, ``Exploring the security
  boundary of data reconstruction via neuron exclusivity analysis,'' in {\em
  31st USENIX Security Symposium (USENIX Security 22)}, (Boston, MA),
  pp.~3989--4006, USENIX Association, Aug. 2022.

\bibitem{Khosravy2022Model}
M.~Khosravy, K.~Nakamura, Y.~Hirose, N.~Nitta, and N.~Babaguchi, ``Model
  inversion attack by integration of deep generative models: Privacy-sensitive
  face generation from a face recognition system,'' {\em IEEE Transactions on
  Information Forensics and Security}, vol.~17, pp.~357--372, 2022.

\bibitem{ganev2023inadequacy}
G.~Ganev and E.~D. Cristofaro, ``On the inadequacy of similarity-based privacy
  metrics: Reconstruction attacks against "truly anonymous synthetic data'',''
  2023.

\bibitem{yuan2021beyond}
X.~Yuan, X.~Ma, L.~Zhang, Y.~Fang, and D.~Wu, ``Beyond class-level privacy
  leakage: Breaking record-level privacy in federated learning,'' {\em IEEE
  Internet of Things Journal}, vol.~9, no.~4, pp.~2555--2565, 2021.

\bibitem{goodfellow2014GAN}
I.~J. Goodfellow, J.~Pouget-Abadie, M.~Mirza, B.~Xu, D.~Warde-Farley, S.~Ozair,
  A.~Courville, and Y.~Bengio, ``Generative adversarial networks,'' 2014.

\bibitem{McMahan2017FedSGD}
B.~McMahan, E.~Moore, D.~Ramage, S.~Hampson, and B.~A.~y. Arcas,
  ``{Communication-Efficient Learning of Deep Networks from Decentralized
  Data},'' in {\em Proceedings of the 20th International Conference on
  Artificial Intelligence and Statistics} (A.~Singh and J.~Zhu, eds.), vol.~54
  of {\em Proceedings of Machine Learning Research}, pp.~1273--1282, PMLR,
  20--22 Apr 2017.

\bibitem{dwork2014algorithmic}
C.~Dwork, A.~Roth, {\em et~al.}, ``The algorithmic foundations of differential
  privacy,'' {\em Foundations and Trends{\textregistered} in Theoretical
  Computer Science}, vol.~9, no.~3--4, pp.~211--407, 2014.

\bibitem{mcmahan2018learning}
H.~B. McMahan, D.~Ramage, K.~Talwar, and L.~Zhang, ``Learning differentially
  private recurrent language models,'' in {\em International Conference on
  Learning Representations}, 2018.

\bibitem{Naseri2022local}
M.~Naseri, J.~Hayes, and E.~De~Cristofaro, ``Local and central differential
  privacy for robustness and privacy in federated learning,'' 2022.

\bibitem{pasquini2022eluding}
D.~Pasquini, D.~Francati, and G.~Ateniese, ``Eluding secure aggregation in
  federated learning via model inconsistency,'' in {\em Proceedings of the 2022
  ACM SIGSAC Conference on Computer and Communications Security},
  pp.~2429--2443, 2022.

\bibitem{Geiping2020}
J.~Geiping, H.~Bauermeister, H.~Dr\"{o}ge, and M.~Moeller, ``Inverting
  gradients - how easy is it to break privacy in federated learning?,'' in {\em
  Advances in Neural Information Processing Systems} (H.~Larochelle,
  M.~Ranzato, R.~Hadsell, M.~Balcan, and H.~Lin, eds.), vol.~33,
  pp.~16937--16947, Curran Associates, Inc., 2020.

\bibitem{ILSVRC15}
O.~Russakovsky, J.~Deng, H.~Su, J.~Krause, S.~Satheesh, S.~Ma, Z.~Huang,
  A.~Karpathy, A.~Khosla, M.~Bernstein, A.~C. Berg, and L.~Fei-Fei, ``{ImageNet
  Large Scale Visual Recognition Challenge},'' {\em International Journal of
  Computer Vision (IJCV)}, vol.~115, no.~3, pp.~211--252, 2015.

\bibitem{kirillov2023segment}
A.~Kirillov, E.~Mintun, N.~Ravi, H.~Mao, C.~Rolland, L.~Gustafson, T.~Xiao,
  S.~Whitehead, A.~C. Berg, W.-Y. Lo, {\em et~al.}, ``Segment anything,'' {\em
  arXiv preprint arXiv:2304.02643}, 2023.

\bibitem{zhang2023comprehensive}
C.~Zhang, L.~Liu, Y.~Cui, G.~Huang, W.~Lin, Y.~Yang, and Y.~Hu, ``A
  comprehensive survey on segment anything model for vision and beyond,'' {\em
  arXiv preprint arXiv:2305.08196}, 2023.

\bibitem{ma2023segment}
J.~Ma, Y.~He, F.~Li, L.~Han, C.~You, and B.~Wang, ``Segment anything in medical
  images,'' 2023.

\bibitem{jing2023segment}
Y.~Jing, X.~Wang, and D.~Tao, ``Segment anything in non-euclidean domains:
  Challenges and opportunities,'' 2023.

\bibitem{cooray2008generalization}
K.~Cooray and M.~M. Ananda, ``A generalization of the half-normal distribution
  with applications to lifetime data,'' {\em Communications in
  Statistics—Theory and Methods}, vol.~37, no.~9, pp.~1323--1337, 2008.

\bibitem{dekking2005modern}
F.~M. Dekking, C.~Kraaikamp, H.~P. Lopuha{\"a}, and L.~E. Meester, {\em A
  Modern Introduction to Probability and Statistics: Understanding why and
  how}, vol.~488.
\newblock Springer, 2005.

\bibitem{hong2024foreseeing}
H.~G. Hong, Y.~Cho, H.~Cho, J.~Ahn, and J.~Kim, ``Foreseeing reconstruction
  quality of gradient inversion: An optimization perspective,'' in {\em
  Proceedings of the AAAI Conference on Artificial Intelligence}, vol.~38,
  pp.~12473--12481, 2024.

\bibitem{krizhevsky2009learning}
A.~Krizhevsky, G.~Hinton, {\em et~al.}, ``Learning multiple layers of features
  from tiny images,'' 2009.

\bibitem{griffin2007caltech}
G.~Griffin, A.~Holub, and P.~Perona, ``Caltech-256 object category dataset,''
  2007.

\bibitem{nilsback2008automated}
M.-E. Nilsback and A.~Zisserman, ``Automated flower classification over a large
  number of classes,'' in {\em 2008 Sixth Indian Conference on Computer Vision,
  Graphics \& Image Processing}, pp.~722--729, IEEE, 2008.

\bibitem{he2016deep}
K.~He, X.~Zhang, S.~Ren, and J.~Sun, ``Deep residual learning for image
  recognition,'' in {\em Proceedings of the IEEE conference on computer vision
  and pattern recognition}, pp.~770--778, 2016.

\bibitem{cw-ssim2009Sampat}
M.~P. Sampat, Z.~Wang, S.~Gupta, A.~C. Bovik, and M.~K. Markey, ``Complex
  wavelet structural similarity: A new image similarity index,'' {\em IEEE
  Transactions on Image Processing}, vol.~18, no.~11, pp.~2385--2401, 2009.

\bibitem{Nguyen2024Preserving}
T.~Nguyen and M.~T. Thai, ``Preserving privacy and security in federated
  learning,'' {\em IEEE/ACM Transactions on Networking}, vol.~32, no.~1,
  pp.~833--843, 2024.

\bibitem{Song2023lsecnet}
A.~Song, J.~Fu, X.~Mu, X.~Zhu, and K.~Cheng, ``L-secnet: Towards secure and
  lightweight deep neural network inference,'' {\em Journal of Networking and
  Network Applications}, no.~4, pp.~171--181, 2023.

\bibitem{liu2022convnet}
Z.~Liu, H.~Mao, C.-Y. Wu, C.~Feichtenhofer, T.~Darrell, and S.~Xie, ``A convnet
  for the 2020s,'' 2022.

\bibitem{huang2018densely}
G.~Huang, Z.~Liu, L.~van~der Maaten, and K.~Q. Weinberger, ``Densely connected
  convolutional networks,'' 2018.

\bibitem{tan2020efficientnet}
M.~Tan and Q.~V. Le, ``Efficientnet: Rethinking model scaling for convolutional
  neural networks,'' 2020.

\bibitem{szegedy2014going}
C.~Szegedy, W.~Liu, Y.~Jia, P.~Sermanet, S.~Reed, D.~Anguelov, D.~Erhan,
  V.~Vanhoucke, and A.~Rabinovich, ``Going deeper with convolutions,'' 2014.

\end{thebibliography}

\begin{IEEEbiography}[{\includegraphics[width=0.9in,height=1.1in,clip,keepaspectratio]{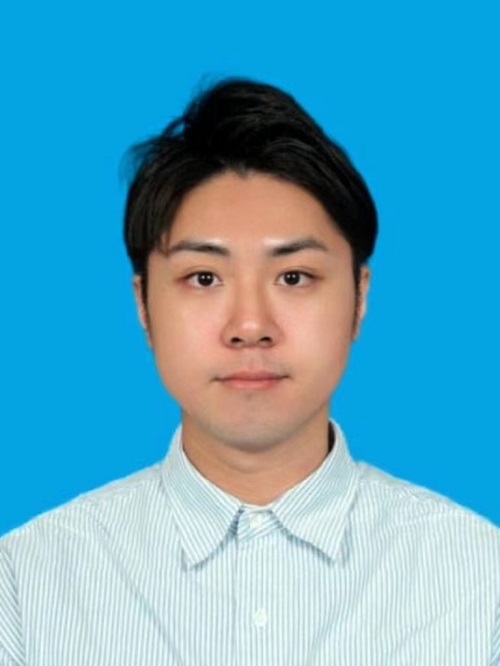}}]{Zhichao You} received his B.E.\ degree in Information and Computing Science from South China Agricultural University, Guangzhou, China in 2018, the M.S.\ degree in Computer Science and Technology in Xidian University, Xi'an, China in 2021. He is now pursuing his Ph.D.\ degree in computer science and technology at Xidian University , Xi'an, China. His research interests include privacy-preserving federated learning and wireless network security.
\end{IEEEbiography}

\begin{IEEEbiography}[{\includegraphics[width=0.9in,height=1.1in,clip,keepaspectratio]{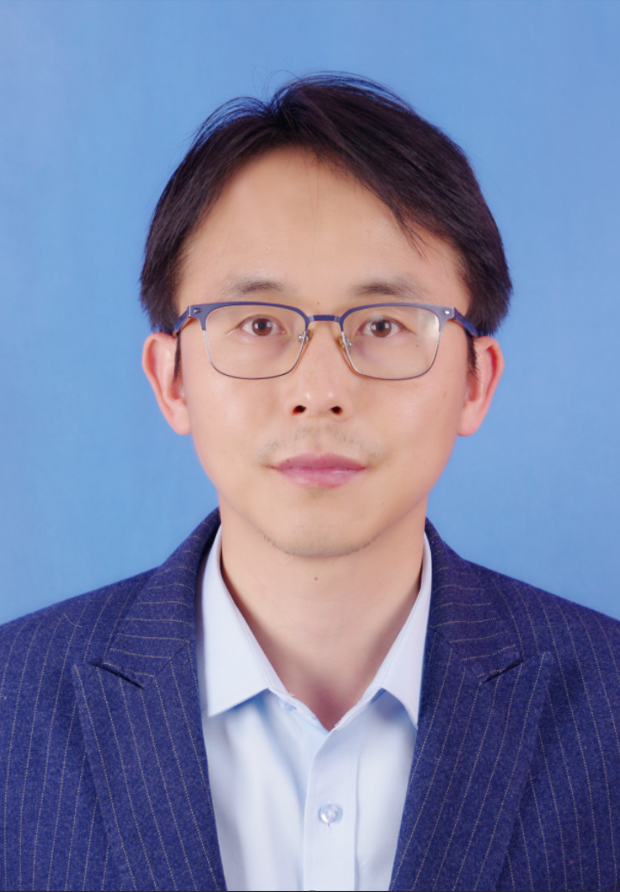}}]{Xuewen Dong} (Member, IEEE) received the B.E., M.S., and Ph.D.\ degrees in Computer Science and Technology from Xidian University, Xi'an, China, in 2003, 2006, and 2011, respectively. From 2016 to 2017, he was with Oklahoma State University, OK, USA, as a visiting scholar. Now, he is an associate professor in the School of Computer Science in Xidian University. His research interests include cognitive radio networks, wireless network security, and blockchain.
\end{IEEEbiography}

\begin{IEEEbiography}[{\includegraphics[width=0.9in,height=1.1in,clip,keepaspectratio]{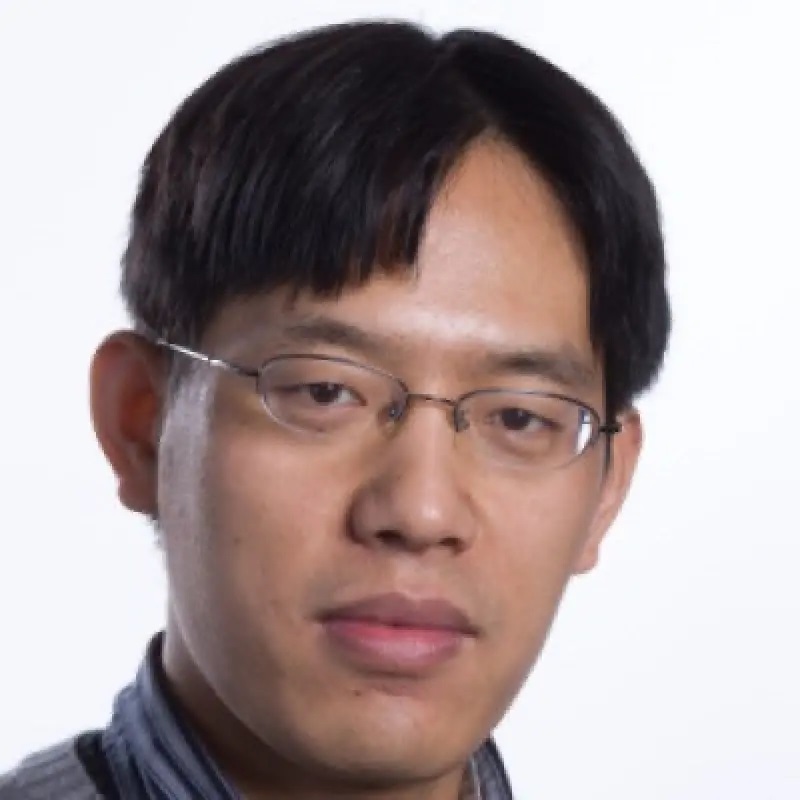}}]{Shujun Li} received the BE degree in Information Science and Engineering, and the PhD degree in Information and Communication Engineering from Xi'an Jiaotong University, China, in 1997 and 2003, respectively. Currently, he is Professor of Cyber Security and directing the Institute of Cyber Security for Society (iCSS) at the University of Kent, UK. His current research interests mainly focus on interplays between several interdisciplinary research areas, including cyber security and privacy, cybercrime and digital forensics, human factors, multimedia computing, AI and NLP, and more recently education. He is a Fellow of the BCS, The Chartered Institute for IT and a Vice President of the Association of British Chinese Professors (ABCP).
\end{IEEEbiography}

\begin{IEEEbiography}[{\includegraphics[width=0.9in,height=1.1in,clip,keepaspectratio]{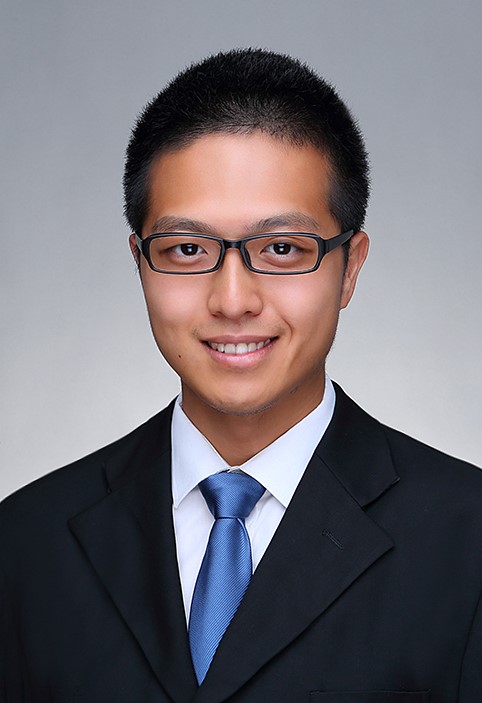}}]{Ximeng Liu} (Senior Member, IEEE) received the B.Sc. degree in electronic engineering and the Ph.D.\ degree in cryptography from Xidian University, Xi'an, China, in 2010 and 2015, respectively. He is currently a Full Professor with the College	of Mathematics and Computer Science, Fuzhou	University, China. He was also a Research Fellow with the School of Information System, Singapore Management University, Singapore. His research interests include cloud security, applied cryptography, and big data security.
\end{IEEEbiography}

\begin{IEEEbiography}[{\includegraphics[width=0.9in,height=1.1in,clip,keepaspectratio]{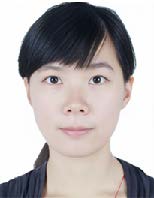}}]{Siqi Ma} (Member, IEEE) received the B.S.\ degree in computer science from Xidian University, Xi'an, China, in 2013, and the Ph.D.\ degree in information	systems from Singapore Management University in	2018. She was a Research Fellow with the Distinguished System Security Group, CSIRO. She is currently a Senior Lecturer with the University of New South Wales, Canberra Campus, Australia. Her	current research interests include data security, the IoT security, and software security.
\end{IEEEbiography}

\begin{IEEEbiography}[{\includegraphics[width=0.9in,height=1.1in,clip,keepaspectratio]{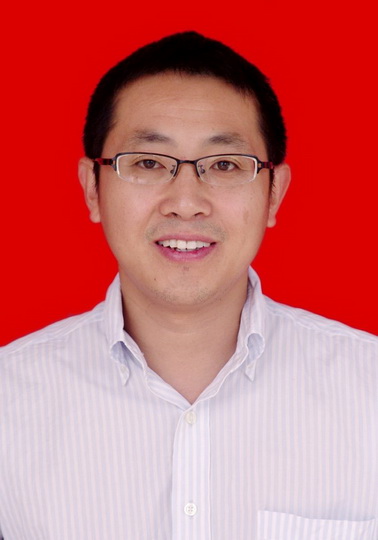}}]{Yulong Shen} (Member, IEEE) received the B.S.\ and M.S.\ degrees in computer science and the Ph.D.\ degree in cryptography from Xidian University, Xi'an, China, in 2002, 2005, and 2008, respectively. He is currently a Professor with the School of Computer Science and Technology, Xidian University, and also an Associate Director of the Shaanxi Key Laboratory of Network and System Security. His research interests include wireless network security and cloud computing security.
\end{IEEEbiography}
\clearpage

\begin{appendices}

\section{Experiment Supplement}
\label{appendix-different-gradients-protection}

\begin{table}[]
\centering
\caption{Parameter values not mentioned in simulations.}
\begin{tblr}{
hline{2-Y}={1pt}, hline{1,Z}={2pt}, columns={c,m},
}
\SetCell[r=2]{c,m} Parameters & \SetCell[c=2]{c,m} Values\\
& CIFAR100 & other datasets\\
$w_\mu$ & $1\times10^{6}$ & $1\times10^{6}$\\
$w_\sigma$ & $2\times10^{4}$ & $2\times10^{4}$\\
$w_{\mathrm{TV}}$ & $1 \times 10^{-6}$ & $1 \times 10^{-6}$\\
$\mu$ & $1.7 \times 10^{-3}$ & $3 \times 10^{-3}$\\
$s$ & $1.3 \times 10^{-3}$ & $3 \times 10^{-3}$\\
Mean factor & 1 & 1\\
Variance factor & $1\times10$ & $1\times10$\\
TV factor & $1\times10^{-3}$ & $1\times10^{-3}$\\
Separation layer weight & $2 \times 10^{-4}$ & $1\times 10^{-5}$\\
\end{tblr}
\label{table-hyper-parameters}
\end{table}

\begin{table}
\centering
\caption{Convergent target model accuracy comparison (\%).}
\begin{tblr}{hlines={1pt}, column{3,4}={c}, hline{1,Z}={2pt}, column{1}={2cm}}
& & Under attack & Normal case\\
\SetCell[r=2]{c,m} ResNet18 \newline~\cite{he2016deep} & Acc@1 & $67.28 \pm 0.02$ & $67.24 \pm 0.04$\\
& Acc@5 & $86.78 \pm 0.04$ & $86.75 \pm 0.05$\\
\SetCell[r=2]{c,m} ResNet50 \newline~\cite{he2016deep} & Acc@1 & $78.03 \pm 0.14$ & $78.16 \pm 0.16$\\
& Acc@5 & $92.94 \pm 0.06$ & $93.01 \pm 0.08$\\
\SetCell[r=2]{c,m} ConvNeXt \newline~\cite{liu2022convnet} & Acc@1 & $79.59 \pm 0.03$ & $79.43 \pm 0.29$\\
& Acc@5 & $93.80 \pm 0.03$ & $93.85 \pm 0.10$\\
\SetCell[r=2]{c,m} DenseNet121 \newline~\cite{huang2018densely} & Acc@1 & $72.41 \pm 0.08$ & $72.24 \pm 0.06$\\
& Acc@5 & $89.84 \pm 0.03$ & $89.92 \pm 0.05$\\
\SetCell[r=2]{c,m} EfficientNet \newline~\cite{tan2020efficientnet} & Acc@1 & $71.79 \pm 0.14$ & $71.34 \pm 0.46$\\
& Acc@5 & $89.49 \pm 0.08$ & $89.18 \pm 0.25$\\
\SetCell[r=2]{c,m} GoogLeNet \newline~\cite{szegedy2014going} & Acc@1 & $72.59 \pm 0.09$ & $72.73\pm 0.05$\\
& Acc@5 & $90.16 \pm 0.07$ & $90.25 \pm 0.02$\\
\end{tblr}
\label{table:model-accuracy}
\end{table}

\begin{table*}
\caption{Quality of reconstructed samples generated by different sample reconstruction attacks against gradients without LDP protection, clipped gradients, and noisy gradients, respectively.}
\centering
\SetTblrInner{colsep=0pt}
\label{table-evaluation-metrics-comparison-no-dp}
\begin{tblr}{
    columns = {0.4in, c, m},
    hline{1,Z}={2pt}, hline{2}={3,4,6,7,9,10,12,13,14}{1pt}, hline{2}={5,8,11}{rightpos=-1, 1pt},
    hline{4-Y}={dotted}, hline{3,9,15,21}={1pt},
    column{1} = {0.6in, c}, column{2} = {0.6in, c},
    colsep = {2pt},
}
\SetCell[r=2]{c,m} Protection & \SetCell[r=2]{c,m} Attack & \SetCell[c=3]{c} ImageNet & & & \SetCell[c=3]{c} CIFAR100 & & & \SetCell[c=3]{c} Caltech256 & & & \SetCell[c=3]{c} Flowers102\\
& & MSE & PSNR & SSIM & MSE & PSNR & SSIM & MSE & PSNR & SSIM & MSE & PSNR & SSIM \\
\SetCell[r=6]{c,m} Gradients without LDP protection & Proposed attack & \textbf{0} & - & \textbf{1} & \textbf{0} & - & \textbf{1} & \textbf{0} & - & \textbf{1} & \textbf{0} & - & \textbf{1}\\
& Without denoising &  0 & - & 1 & 0 & - & 1 & 0 & - & 1 & 0 & - & 1\\
& Fowl's attack~\cite{fowl2022robbing} & 0 & - & 1 & 0 & - & 1 & 0 & - & 1 & 0 & - & 1\\
& Boenisch's attack~\cite{Boenisch2021When} & 0 & - & 1 & 0 & - & 1 & 0 & - & 1 & 0 & - & 1\\
& Yin's attack~\cite{Yin2021see} & 0.068 & 12.043 & 0.181 & 0.069 & 12.457 & 0.156 & 0.100 & 10.634 & 0.165 & 0.084 & 11.014 & 0.084\\
& Wei's attack~\cite{Wei2020Framework} & 0.207 & 7.566 & 0.247 & 0.125 & 9.641 & 0.162 & 0.171 & 8.017 & 0.236 & 0.160 & 8.236 & 0.240\\
\SetCell[r=6]{c,m} Clipping gradients & Proposed attack  & \textbf{0} & - & \textbf{1} & \textbf{0} & - & \textbf{1} & \textbf{0} & - & \textbf{1} & \textbf{0} & - & \textbf{1}\\
& Without denoising  & 0 & - & 1 & 0 & - & 1 & 0 & - & 1 & 0 & - & 1\\
& Fowl's attack~\cite{fowl2022robbing} & 0 & - & 1 & 0 & - & 1 & 0 & - & 1 & 0 & - & 1\\
& Boenisch's attack~\cite{Boenisch2021When} & 0 & - & 1 & 0 & - & 1 & 0 & - & 1 & 0 & - & 1\\
& Yin's attack~\cite{Yin2021see} & 0.068 & 12.103 & 0.149 & 0.067 & 12.671 & 0.158 & 0.099 & 10.686 & 0.122 & 0.087 & 10.883 & 0.156\\
& Wei's attack~\cite{Wei2020Framework} & 0.067 & 12.201 & 0.067 & 0.113 & 11.281 & 0.129 & 0.098 & 10.756 & 0.085 & 0.136 & 9.545 & 0.136\\
\SetCell[r=6]{c,m} Perturbing gradients & Proposed attack & \textbf{0.0001} & \textbf{47.043} & \textbf{0.757} & \textbf{0.0001} & \textbf{46.793} & \textbf{0.778} & \textbf{0.0001} & \textbf{46.842} & \textbf{0.760} & \textbf{0.0001} & \textbf{47.028} & \textbf{0.774}\\
& Without denoising & 0.0001 & 41.934 & 0.540 & 0.0001 & 41.935 & 0.541 & 0.0001 & 41.932 & 0.541 & 0.0001 & 41.934 & 0.541\\
& Fowl's work~\cite{fowl2022robbing} & 0.313 & 5.066 & 0.203 & 0.304 & 5.206 & 0.127 & 0.328 & 4.884 & 0.181 & 0.331 & 4.822 & 0.191\\
& Boenisch's work~\cite{Boenisch2021When} & 0.171 & 8.182 & 0.213 & 0.165 & 8.450 & 0.134 & 0.197 & 7.544 & 0.190 & 0.189 & 7.640 & 0.195\\
& Yin's work~\cite{Yin2021see} & 0.069 & 12.028 & 0.184 & 0.067 & 12.593 & 0.161 & 0.101 & 10.576 & 0.171 & 0.085 & 10.972 & 0.175\\
& Wei's work~\cite{Wei2020Framework} & 0.126 & 9.256 & 0.271 & 0.124 & 9.685 & 0.164 & 0.161 & 8.292 & 0.243 & 0.145 & 8.576 & 0.263\\
\SetCell[c=2]{m,c} Random Guess & & 0.155 & 8.165 & 0.214 & 0.156 & 8.213 & 0.231 & 0.176 & 7.665 & 0.187 & 0.174 & 7.659 & 0.196\\
\end{tblr}
\end{table*}

Table~\ref{table-hyper-parameters} provides parameter values not mentioned in simulations. The mean, variance, and TV factor are the weights of each output when metric layer outputs are added to the original samples. Table~\ref{table:model-accuracy} is top-1 accuracy (Acc@1) and top-5 accuracy (Acc@5) of various convergent target models on the ImageNet. The separation layer weight is the equal weight of the separation layer in Section~\ref{section-removing-redundate-gradients}. Table~\ref{table-evaluation-metrics-comparison-no-dp} provides a detailed quality comparison between samples generated by different attacks under various gradient protections.

\section{Discussion Supplement}
\label{appendix-discussion}

\subsection{Possible Defense}

\textbf{Personalized and dynamic clipping bound.}
As discussed in the evaluation, the proposed attack can eliminate the effect of clipping when weight and bias gradients in FCL are clipped by the same clipping bound. A possible defense is to clip weight and bias gradients with different clipping bounds to reduce the two gradients to varying degrees. Clipping bounds can be determined by the characteristics of users' gradients (such as the norm of each layer gradients) instead of a fixed clipping bound.  A dynamic clipping bound prevents the server from reconstructing samples through gradients clipped by a fixed clipping bound.

\textbf{Malicious model component detection.}
In the proposed attack, the global model consists of an inference structure and a target model. The inference structure does not affect the performance of the target model, and it is hard to detect the malicious model component through model performance. A feasible method is to distinguish the importance of the global model to the prediction results according to different model activities in forward propagation. However, distinguishing the inference structure from the inactive part requires more discussion, and we left it as our future work.

\textbf{Secure aggregation with LDP.}
Existing work has found that secure aggregation in FL cannot fully protect user privacy~\cite{Pasquini2022secure} since encryption-based protection for gradients cannot detect malicious parts of the model and defend the proposed attack. However, secure aggregation with LDP can effectively defend against the proposed attack. Preventing malicious servers from obtaining users' gradients can significantly reduce the risk of user sample privacy leakages. However, secure aggregation algorithms incur significant communication and computational overhead, while the noise in LDP also weakens the performance of the global model. Although secure aggregation with LDP provides a strong privacy guarantee for users, it also combines the shortcomings of secure aggregation and LDP, which makes it hard to apply to FL.

\subsection{Limitations and Future Works}

\textbf{Attacks against model updates.}
Some FL mechanisms assume that users train the global model multiple rounds and generate model updates. Users only upload model updates or new parameters in this setting instead of gradients. We will discuss in future work how to reconstruct user samples from uploaded model updates or parameters.

\textbf{Better optimization objectives and metrics.}
Metric-based optimization is used to improve the quality of the reconstructed samples. However, we only consider the sample-wise mean, variance, and total variation. As discussed in other optimization-based attacks~\cite{jeon2021gradient, Yin2021see}, the adversary can collect more sample information to optimize reconstructed samples better. In addition, since the noise in the background of reconstructed samples can be eliminated by image denoising, the sample-wise indicator reduces the sample's information. Metrics calculated on masked samples can provide more accurate information for optimization.

\textbf{More training sample types.}
Evaluations only show the performance of the proposed attack on image samples. Since the proposed attack does not limit the setting of target models, it can be applied to other data types, such as text data and graph structure data. However, these data correspond to different neural networks and metrics. Due to space limitations, we do not show the effect of the proposed attack on these data. In future work, we will discuss how to reconstruct other types of training samples through the proposed attack.

\end{appendices}

\end{document}